\definecolor{linkColor}{RGB}{0, 128, 128}
\definecolor{citeColor}{RGB}{0, 112, 64}
\definecolor{urlColor}{RGB}{120, 0, 120}
\definecolor{oracleColor}{HTML}{A0BDD3} 
\definecolor{noiseColor}{HTML}{F2B547} 
\definecolor{algoColor}{HTML}{E6C0BA} 
\definecolor{phiStateCol}{RGB}{0, 140, 25} 
\theoremstyle{plain}
\newtheorem{thm}{Theorem}
\newtheorem{lem}[thm]{Lemma}
\newtheorem{clm}[thm]{Claim}
\newtheorem{rmk}[thm]{Remark}
\newcommand{\cA}{\mathcal{A}}
\newcommand{\cH}{\mathcal{H}}
\renewcommand{\>}{\rangle}
\newcommand{\<}{\langle}
\DeclareMathOperator{\tr}{\mathrm{Tr}}
\newcommand{\qAnd}{\quad\text{and}\quad}
\newcommand{\Func}{{\mathcal{F}}} 
\newcommand{\ccD}{\mathscr{D}}
\newcommand{\ccI}{\mathscr{I}}
\newcommand{\ccN}{\mathscr{N}}
\newcommand{\ccO}{\mathscr{O}}
\newcommand{\regA}{\mathsf{A}}
\newcommand{\regQ}{\mathsf{Q}}
\newcommand{\regQi}{\mathsf{Q_i}}
\newcommand{\regQo}{\mathsf{Q_o}}
\newcommand{\regR}{\mathsf{R}}
\newcommand{\regT}{\mathsf{T}}
\newcommand{\regW}{\mathsf{W}}
\newcommand{\regFunc}{\regT} 
\newcommand\inrec[1]{{#1_{\mathrm{in}}}}
\newcommand\unif{u} 
\newcommand{\unifR}{v} 
\newcommand\actv{{act}} 
\newcommand\pasv{{pas}} 
\newcommand\marked{{\pmb{0}}}
\newcommand{\OO}{\mathrm{O}}
\newcommand\progA{{\mathfrak{A}}} 
\newcommand\progB{{\mathfrak{B}}} 
\newcommand\progC{{\mathfrak{C}}} 
\newcommand\flagC{{\mathfrak{c}}}
\newcommand\flagQ{{\mathfrak{q}}}
\newcommand\app{\wr}
\newcommand\rec{{\mathfrak{R}}}
\title{Quantum Search with Noisy Oracle}
\author{
Ansis Rosmanis\thanks{E-mail: \texttt{rosmanis@math.nagoya-u.ac.jp}}\\ [.5ex]
\normalsize  Graduate School of Mathematics \\ 
\normalsize Nagoya University
}
\date{September 26, 2023}
\begin{document}

\maketitle

\begin{abstract}
We consider quantum search algorithms that have access to a noisy oracle that, for every oracle call, with probability $p>0$ completely depolarizes the query registers, while otherwise working properly. Previous results had not ruled out quantum $\OO(\sqrt{n})$-query algorithms in this setting, even for constant $p$. We show that, for all $p\le 0.99$, the quantum noisy-query complexity of the unstructured search is $\widetilde\Theta(\max\{np,\sqrt{n}\})$. The lower bound $\Omega(\max\{np,\sqrt n\})$ holds also for the dephasing noise and even when, for every oracle call, the algorithm is provided with a flag indicating whether the error has occurred.
\end{abstract}

\section{Introduction}

Unstructured search is one of the most studied computational tasks in quantum computing due to its applicability to many other problems and due to Grover's algorithm performing this task quadratically faster than any classical algorithm \cite{grover:search}. Even before Grover presented his algorithm, Bennett, Bernstein, Brassard, and Vazirani showed that quadratic quantum speedups is the best one can hope for \cite{bennett:searchLowerB}. More formally, these works showed that finding a marked element among $n$ elements has the bounded-error quantum query complexity $\Theta(\sqrt{n})$.


Since then, many subsequent works have shed more light on the problem. Zalka showed that quadratic speedups do not survive parallelization~\cite{zalka:Grover}. In the same work, it was shown that Grover's algorithm is the very best algorithm for the problem and that not even constant additive gains can be made in quantum query complexity over it (see also \cite{dohotaru:GroverOpt} for an alternative proof). Recent works have also investigated hybrid quantum-classical algorithms that can access the oracle both in a quantum, coherent manner as well as in a classical, non-coherent manner, and shown that a sublinear number of classical queries cannot substantially speed up the quantum search \cite{Rosmanis:2022:hybrid,hamoudi:2022:tradeoffs}. Yet, multiple important questions regarding unstructured search remain open.
In this work we focus on search with faulty oracles. 

\subsection{Previous Work on Search with Faulty Devices}

Due to the paramount importance of unstructured search, many works have studied the problem in various settings where the computation is affected by certain faults. These studies can be categorized along various lines. For example, they may differ in
whether only Grover's algorithm is addressed or any potential algorithm for the search,
whether the faults only affect the oracle or also other (i.e., input-independent) components of the algorithm, whether the faulty oracle operations are a result of some subroutine and, thus, are reversible or are the faults irreversible, and in many other aspects.

Here, let us first focus on studies that address Grover's algorithm and its variations specifically, and then on studies that place limits on any general search algorithm.

\paragraph{Grover's algorithm with faults.}

When the noise is reversible, the speedups achieved by Grover's algorithm survive, and, unlike in the classical case, does not even suffer from slowdowns by logarithmic factors~\cite{Hoyer+M+dW-BoundedErrorSearch-03,Suzuki+Y+N+W-RobustSearch-06}. Similar results also hold for problems other than the unstructured search \cite{Buhrman+N+R+dW-RobustPolys-07,Iwama+R+Y-BiasedOracles-2006}.

When the noise is not unitary, however, Grover's algorithm typically suffers in performance. Depending on the noise model, one either has to run the algorithm for longer time (i.e., more iterations), or its maximum success probability decreases. In one of the earliest studies of Grover's algorithm with noise \cite{PN+RA_GroverNoise_99}, Pablo-Norman and Ruiz-Altaba studied noise (claimed physically unrealistic by the authors) that before every oracle call perturbs the vector corresponding to the pure state of the computation by first adding to it a random vector whose norm is chosen according to the half-normal distribution of standard deviation $\sigma$ and then renormalizing the resulting vector. They showed that, for such a perturbation, the success probability of Grover's algorithm vanishes unless $\sigma$ is of order $\OO(n^{-2/3})$.

A more commonly studied noise model is that where the oracle always works properly for unmarked elements, but it might sometimes falsely report marked elements as unmarked. In particular, while the standard noiseless phase oracle used in Grover's algorithm maps $|x\>$ to $|x\>$ for unmarked $x$ and $|x\>$ to $-|x\>$ for marked $x$, in the noisy setting, for every marked element $x$, the state $|x\>$ is instead mapped to $-\mathrm{e}^{\pi\mathrm{i}\varepsilon_x}|x\>$ for some phase error $\varepsilon_x$ chosen randomly.

When there is only a single marked element $x$, the problem has been investigated both numerically~\cite{Long+++Grover00} and analytically~\cite{Shenvi++NoisyOracleSearch03}. 
Shenvi, Brown, and Whaley~\cite{Shenvi++NoisyOracleSearch03} showed that, for $p:=\sqrt{\mathbb{E}[\varepsilon_x^2]}$, the noise effectively ruins the computation after $1/p^2$ iterations of Grover's algorithm, at which point the success probability of $\Omega(\min\{1/(np^4),1\})$ can be achieved. Hence, if $p=\OO(n^{1/4})$, a single run of Grover's algorithm is sufficient. On the other hand, if $p=\omega(n^{1/4})$, the above process can be repeated $np^4$ times, resulting in the total number of oracle calls being $\OO(np^2)$, which still surpasses the classical computation for subconstant $p$.
Shenvi et al.~also showed similar results for the continuous-time version of Grover's algorithm, introduced in~\cite{Fahri+Gutmann_analogGrover_1998}.

When there are multiple marked elements $x$, the spacial case where $\varepsilon_x\in\{0,1\}$ has been studied~\cite{Ambainis+++GroverWithErrors13,Kravchenko+N+R_SomeFaulty_2018}.
Note that $\varepsilon_x=1$ corresponds to falsely reporting $x$ unmarked, while $\varepsilon_x=0$ corresponds to a faultless oracle call.
Kravchenko, Nahimovs, and Rivosh~\cite{Kravchenko+N+R_SomeFaulty_2018} showed that, if faults occur only for one marked element, then quadratic quantum speedups still persist. On the other hand, if for every marked element faults occur with at least a constant probability, then after $\OO(n)$ queries the state converges to a certain mixed state from which a marked element can be obtained with probability close to one \cite{Ambainis+++GroverWithErrors13}.

The performance of Grover's algorithm has been also studied when the noise occurs not in the oracle, but in the Hadamard gates used to implement the so-called Grover diffusion operator.
Shapira, Mozes, and Biham~\cite{Shapira++GroverUnitaryNoise03}
considered the scenario where each Hadamard gate $H$ is replaced by a unitary $\widetilde H$ that is a slight perturbation of $H$. They show that if $\widetilde H$ and $H$ are close, meaning the perturbation is below a certain threshold, then the bitstring returned by the algorithm is either the marked element or differs from the marked element only on few bits. On the other hand, if the perturbation is above the threshold, the search completely fails.

\paragraph{General search algorithm.}

While it is important to study how the noise affects Grover's algorithm, showing that Grover's algorithm succumbs under certain noise does not rule out a possibility that some other algorithm might resist it.

Regev and Schiff \cite{regev:faultySearch} considered faulty oracle very similar to that considered in~\cite{Long+++Grover00,Shenvi++NoisyOracleSearch03,Ambainis+++GroverWithErrors13,Kravchenko+N+R_SomeFaulty_2018}.
Namely, they considered a scenario where the oracle is negligent and, independently for each oracle call,  with some constant probability $p$ it forgets to apply the unitary query operator $O_f$, instead simply applying the identity operator $I$. Regev and Schiff showed that, in such a scenario, for any quantum algorithm, $\Omega(n)$ queries are required to perform search.
Later, Temme~\cite{Temme-FaultyOracle-2014} showed an analogous result in the continuous-time quantum query model.

Both results rely on the fact that, when there are no marked elements, the oracle works faultlessly, and the computation remains pure and can be used as a ``measuring rod" against which the progress of the computation is measured. The computation remaining pure is something that is not true for the depolarizing noise, a common noise model considered in quantum computing.

The depolarizing noise, acting on a given register, with probability $p$ maps any state of that register to the maximally mixed state, effectively erasing the register, while with probability $1-p$ it does nothing.
Vrana, Reeb, Reitzner, and Wolf~\cite{Vrana+R+R+W_FaultIgnorant_2014} studied the setting where, between any two consecutive oracle calls, the depolarizing noise occurs on the whole memory of the algorithm. They showed that, in such a scenario, the search requires at least $\Omega(np)$ queries.

In a recent work \cite{ChenCHL:2022:nisq}, Chen, Cotler, Huang, and Li formulated the computational class of noisy intermediate-scale quantum (NISQ) computation, where every qubit individually and independently is affected by the depolarizing noise of rate $p$. Among other results, they showed that in such a scenario a noisy quantum algorithm cannot perform quantum search faster than in time $\tilde\Theta(np)$, also providing a matching upper bound.

In contrast to \cite{Vrana+R+R+W_FaultIgnorant_2014} and~\cite{ChenCHL:2022:nisq}, in this work we consider the scenario where the depolarizing noise only affects the query registers, which are used to access the oracle, and our hardness results will also hold for weaker noise models, such as the dephasing noise, which with probability $p$ projects the query registers to the computational basis.
While our hardness results encompass those in \cite{Vrana+R+R+W_FaultIgnorant_2014}, they are not directly comparable to the ones in \cite{ChenCHL:2022:nisq} due to different noise models.

\subsection{Results}

In this paper, we consider quantum query algorithms where the input-independent operations of the algorithm are faultless, but the interaction with the oracle can suffer from certain types of noise.
We illustrate such a noisy interaction in Figure~\ref{fig:informal_interact}.

\begin{figure}[!h]
\centering
\begin{tikzpicture}
\draw[thick,->] (-1.5,-1) -- (-1.25,-1.5);
\draw[thick, <-] (.9,.5) arc
    [
        start angle=50,
        end angle=130,
        x radius=1.4 cm,
        y radius =1.2cm
    ] ;
\draw[thick, ->] (.9,-.5) arc
    [
        start angle=-50,
        end angle=-130,
        x radius=1.4 cm,
        y radius =1.2cm
    ] ;    
\draw[thick, fill=algoColor] (-2,0) circle (1);
\draw[thick, fill=oracleColor] (2,0) circle (1);
\node [black] at (-2,0) {Algorithm};
\node [above, black] at (2,0) {Oracle};
\node [below, black] at (2,0) {for $f$};
\node [below, black] at (-1.25,-1.5) {Result};
 \draw[noiseColor, fill=noiseColor!30!white, fill opacity = 0.75]
  (0.8,0) \foreach \i in {1,...,23}{ -- ({(0.6+0.2*cos(180*\i))*cos(15*\i)},{(0.9+0.3*cos(180*\i))*sin(15*\i)}) } -- cycle;
\node [below, noiseColor!60!brown] at (0,-1.1) {Noise};
\end{tikzpicture}
\captionsetup{font=small}
\captionsetup{width=0.9\textwidth}
\caption[my caption]{%
An informal illustration of a noisy-query algorithm. The algorithm, initially knowing nothing about the problem instance $f$, is given access to it via an oracle. Each such interaction, however, is faulty with probability $p$. At the end of the computation, the algorithm has to return the correct answer for $f$.}
\label{fig:informal_interact}
\end{figure}
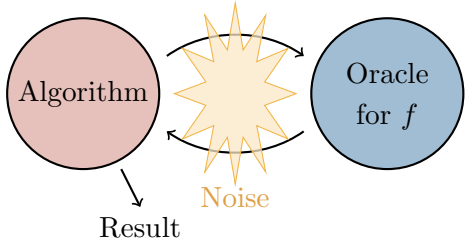

The strongest noise model that we consider is that of the depolarizing noise. In such a noise model, independently for each oracle call, with probability $1-p$ the oracle works properly, while with probability $p$ it replaces the state of all the query registers with the maximally mixed state, effectively erasing those registers.
While this noise model is seemingly much harsher than the negligent oracle considered by Regev and Schiff~\cite{regev:faultySearch} and Temme~\cite{Temme-FaultyOracle-2014}, for the depolarizing noise, quadratic speedups by quantum algorithms were not ruled out even for constant $p$.

In this work we show that the depolarizing oraclar noise indeed thwarts quantum speedups.
We also show the same hardness result for the dephasing noise that, independently for every oracle call, in addition to applying the proper oracle operation, with probability $p$ projects the query registers to the computational basis.

\begin{thm}
\label{thm:main}
For both the depolarizing and the dephasing noise of rate $p>0$, the $\epsilon$-error quantum noisy-query complexity of the unstructured search is at least $np(1-\epsilon)/49 - 1$. 
This lower bound holds even if for every oracle call the algorithm is provided with a bit indicating whether the oracle call was performed properly.
\end{thm}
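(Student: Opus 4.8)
The plan is a hybrid/progress argument in the spirit of Bennett, Bernstein, Brassard and Vazirani~\cite{bennett:searchLowerB}, but with the noise used in an essential way. First I would reduce search to a distinguishing task: a uniformly random element $y\in[n]$ is marked by a phase-flip oracle $O_y=I-2|y\>\<y|$ on the query register, and the algorithm must decide whether such a $y$ is present or the instance is empty ($O_\emptyset=I$). A $T$-query $\epsilon$-error search algorithm solves this on average, so its final states satisfy $\mathbb{E}_y\,\tfrac12\|\rho^{y}_T-\rho^{\emptyset}_T\|_1\ge (1-\epsilon)-1/n$, and it suffices to prove $\mathbb{E}_y\,\tfrac12\|\rho^{y}_T-\rho^{\emptyset}_T\|_1 = O((T+1)/(np))$. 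Since the theorem hands the algorithm the noise flag, I would work in that stronger model, where the flags are recorded: $\rho^{y}_T=\sum_{S\subseteq[T]}\Pr[S]\,|S\>\<S|\otimes\rho^{y,S}_T$ with $S$ the set of faulty calls. By block-decomposition it then suffices to bound, for each noise pattern $S$, the quantity $\mathbb{E}_y\,\tfrac12\|\rho^{y,S}_T-\rho^{\emptyset,S}_T\|_1$ and average over $S$ at the end, using that $|S|$ concentrates around $pT$.

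The reason the naive argument falls short---and where the noise must enter---is as follows. Telescoping against the empty instance shows that a faultless call $t$ increases the $y$-divergence $\tfrac12\|\rho^{y,S}_t-\rho^{\emptyset,S}_t\|_1$ by at most $O(\sqrt{q_{t,y}})$, where $q_{t,y}=\<y|(\rho^{\emptyset,S}_{t-1})_{\mathrm{query}}|y\>$ is the query-register weight on $y$, while a faulty call depolarizes the query register \emph{independently of the oracle} and so is a contraction that cannot increase the divergence; together with $\sum_y q_{t,y}=1$ this yields only $\Omega(\sqrt n)$ --- after a Cauchy--Schwarz over the at most $T$ faultless calls, nothing stronger. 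The missing factor of order $\sqrt n\,p$ must come from the structure of the depolarization: a faulty call replaces the query register by the maximally mixed state but leaves the workspace untouched, so it destroys exactly the coherence between the query register and the workspace; and since amplitude on $|y\>$ can be amplified only while it sits in the query register---precisely where a faulty call erases it---no input-independent unitary can simultaneously park the progress accumulated so far out of harm's way and keep it amplifiable. The consequence I want to extract is that a faulty call \emph{decoheres} the block of faultless calls preceding it from everything that follows, so that the divergence contributions of the $\Theta(pT)$ maximal runs of consecutive faultless calls add roughly in quadrature rather than linearly.

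Making that quadrature statement rigorous is the step I expect to be the main obstacle. The within-run estimate is the easy half: a run of length $\ell$ contributes, on average over the hidden element, a divergence of at most $O(\ell/\sqrt n)$ --- coherent Grover-type growth, via Cauchy--Schwarz on $\sum_t\sqrt{q_{t,y}}$ inside the run together with $\mathbb{E}_y q_{t,y}=1/n$. The hard half is showing that these per-run contributions, once separated by an erasure of the query register, can no longer be recombined coherently, because an input-independent unitary acts the same on all instances and so cannot locate $y$ to merge the $|y\>$-amplitudes deposited by different runs into one coherent amplitude. I would attempt this by carrying two bookkeeping quantities along the computation --- the divergence borne by the reduced workspace state (``safe'') and the divergence borne jointly with the query register (``at risk'') --- with the rules that a faultless call adds at most $O(\sqrt{q_{t,y}})$ of new divergence and only to the at-risk part, an input-independent unitary may transfer divergence from at-risk to safe but only by evacuating the query register (after which that contribution is frozen), and a faulty call zeroes the at-risk part. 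Summing, the success amplitude on $y$ obeys $(\text{amplitude})^2 \le \tfrac1n\sum_r \ell_r^2$ over the runs $r$ of $S$, and since the run-lengths are geometric-type with $\mathbb{E}_S\sum_r\ell_r^2 = O(T/p)$, averaging over the noise and over $y$ gives $\mathbb{E}_y\,\tfrac12\|\rho^{y}_T-\rho^{\emptyset}_T\|_1 = O(T/(np))$; keeping track of the absolute constants throughout (the concentration of $|S|$, the Grover-type growth constant, the Cauchy--Schwarz losses) is what produces the explicit bound $np(1-\epsilon)/49-1$. The dephasing case is covered by the same bookkeeping: with respect to a phase oracle, dephasing the query register is still oracle-independent and still destroys all of its coherence, so the at-risk part is zeroed exactly as before, the only new ingredient being a ``classical'' contribution from the diagonal weight that dephasing (unlike depolarizing) leaves intact, which totals $O(T/n)$ and is dominated by $O(T/(np))$ for $p\le 1$. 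Finally, since the whole argument is carried out in the model where the flag is revealed, the bound holds a fortiori when it is not.
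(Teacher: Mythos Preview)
Your intuition is exactly right: the $\Omega(np)$ bound must come from the fact that progress accumulated in different faultless runs cannot be coherently recombined, so contributions add in quadrature rather than linearly. You even identify this as the main obstacle. But the mechanism you propose to make it rigorous --- the ``safe'' versus ``at-risk'' bookkeeping on the trace distance --- does not work as stated, and I do not see how to repair it within a trace-distance hybrid framework.

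The specific failure is your rule that ``an input-independent unitary may transfer divergence from at-risk to safe but only by evacuating the query register (after which that contribution is frozen).'' Nothing prevents the algorithm from moving the workspace content back into the query register and continuing to amplify; the rule you need is not that safe contributions are frozen, but that the \emph{fresh} progress generated by subsequent faultless queries is orthogonal to whatever was previously shelved. That orthogonality is a statement about a decomposition of the joint state, not about a scalar ``divergence,'' and it cannot be captured by tracking two numbers along the computation. Relatedly, the trace-distance telescoping you start from accumulates linearly by construction, so the quadrature cannot emerge from it without an additional structural ingredient. Your sketch also oscillates between ``divergence'' and ``amplitude'' at the crucial moment, which is a symptom of the same issue.

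The paper obtains precisely the orthogonality you want, but via a different route. It purifies both the random instance (a truth-table register $\regFunc$) and the noise (a record register $\regR$), so the whole computation stays pure, and then defines a progress measure $\Psi_t=\|\Pi^\progC|\phi_t\>\|^2+3\|\Pi^\progB|\phi_t\>\|^2$ on subspaces $\cH^\progA,\cH^\progB,\cH^\progC$ of the \emph{environment} registers. The key step, corresponding to your safe/at-risk split, is a further decomposition $\cH^\progB=\cH^{\progB,\actv}\oplus\cH^{\progB,\pasv}$ that also involves the query-input register: fresh progress from $\cH^\progA$ can enter only $\cH^{\progB,\actv}$, the passive part $\cH^{\progB,\pasv}$ is invariant under both oracle components, and the classical (dephasing) component $O_C$ completely evacuates $\cH^{\progB,\actv}$ into $\cH^\progC\oplus\cH^\progA$. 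These are exact operator identities (Claims~\ref{clm:OQonPas}--\ref{clm:OConPiA}), and together they force $\Psi_{t+1}-\Psi_t\le 48/(p(n-t-1))$ for the single mixed oracle $O_p=\sqrt{1-p}\,O_Q+\sqrt{p}\,O_C$, without ever conditioning on the noise pattern. Your run-length analysis with $\mathbb{E}_S\sum_r\ell_r^2=O(T/p)$ is morally the same $1/p$ that appears here, but the paper's subspace decomposition is what supplies the rigorous orthogonality that your bookkeeping asserts without proof.
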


When $p\ge 1/\sqrt{n}$, this bound can be seen to be tight by, informally speaking, $\OO(np^2)$ times running Grover's algorithm for $\OO(1/p)$ steps and then checking the correctness of each solution. For the dephasing noise, the correctness of the solution can be checked using a single query, while, for the depolarizing noise (assuming $1-p=\Omega(1)$), a logarithmic number of queries might be necessary to obtain the desired confidence.
For $p=o(1/\sqrt{n})$, one expects to be able to run the entire Grover's algorithm before the first faulty oracle call occurs.
Hence, $\OO(\max\{np(1+p\log n),\sqrt n\})$ noisy queries suffice, where the $p\log n$ term can be omitted for the dephasing noise.

As stated in Theorem~\ref{thm:main}, we show that the extra bit of information indicating whether the oracle call was performed properly does not help the algorithm to overcome effects of the depolarizing and the dephasing noise.
 Let us note that, in the case of the negligent oracle considered by Regev and Schiff~\cite{regev:faultySearch}, having such a bit would restore the quadratic speedups achieved by Grover's algorithm.
 
\subsection{Techniques}

In Theorem~\ref{thm:main}, the flag bit that indicates whether the error has occurred, in addition to strengthening the lower bound, plays another role in our proof of the theorem. In particular, because of it, it is sufficient to prove the result only for the dephasing noise. Indeed, if there were faster algorithms for the depolarizing noise, those algorithms could be transformed into algorithms for the dephasing noise of the same query complexity. In particular, whenever the flag bit received by the algorithm indicates that the dephasing has occurred, the algorithm can depolarize all query registers, in effect simulating the depolarizing noise.

The dephasing noise, in turn, has a close connection to the classical oracle calls considered in works on hybrid quantum-classical search algorithms in \cite{Rosmanis:2022:hybrid,hamoudi:2022:tradeoffs}. 
Indeed, the faultless quantum oracle call followed by a completely dephasing noise is exactly the same as the classical oracle call considered in those works. The difference now is that, unlike in the hybrid scenario, the algorithm has no control over when this effectively-classical oracle call will happen, every oracle call being classical with probability $p$.

For the noisy oracle scenario considered in this work, proof techniques by Hamoudi, Liu, and Sinha \cite{hamoudi:2022:tradeoffs}, which allow one to deal with mixed-state computation, are more useful than the ones in \cite{Rosmanis:2022:hybrid}, which stay closer to the approach in \cite{regev:faultySearch}. In addition to being inspired by techniques in \cite{hamoudi:2022:tradeoffs}, we also draw inspiration from the quantum lower bound for search in \cite{Rosmanis2021}.
Both of these works, in turn, are inspired by Zhandry's compressed oracle approach \cite{zhandry:record}. While \cite{hamoudi:2022:tradeoffs} builds upon \cite{zhandry:record}  by showing how to simultaneously handle both classical and quantum queries, \cite{Rosmanis2021} shows how to avoid compression-decompression steps and thus allows handling scenarios where the output of the oracle on one input may depend on that on another input (as is the case when searching for a unique marked element).

Similarly to the adversary bound~\cite{ambainis:adversary} and many of its variants, we introduce the truth-table register, which contains the full description of the function $f$ computed by the oracle. This register then controls actions of the oracle $O_f$. 
Furthermore, similarly to \cite{hamoudi:2022:tradeoffs}, we also introduce what we call the ``record'' registers that purify the overall system. We then use the joint state of the truth-table and record registers, which are the registers not directly accessible by the algorithm, to define a certain metric that measures the progress of the computation. More precisely, similarly to \cite{hamoudi:2022:tradeoffs}, we decompose the entire memory space (including that of the algorithm) into three parts as $\cH^\progA\oplus\cH^\progB\oplus\cH^\progC$, where $\cH^\progC$ essentially corresponds to the scenario where the classical oracle has succeeded (that is, the dephasing noise has collapsed the query input register to a marked input $x$), $\cH^\progB$ corresponds to the space orthogonal to $\cH^\progC$ where the quantum oracle has found a marked input, and $\cH^\progA$ corresponds to the space where no substantial progress has been made. All three spaces are invariant under linear operations on the algorithm registers alone.
Initially, all the probability weight is on the subspace $\cH^\progA$, but, in order for the algorithm to succeed, this probability weight has to be transferred to $\cH^\progB\oplus\cH^\progC$. 

Departing from previous proof techniques, now we have to go further and decompose the space $\cH^\progB$ as $\cH^{\progB,\actv}\oplus \cH^{\progB,\pasv}$, where, this time, the ``active'' subspace $\cH^{\progB,\actv}$ and the ``passive'' subspace $\cH^{\progB,\pasv}$ are not invariant under linear operations on the algorithm registers. 
The active subspace $\cH^{\progB,\actv}$ is the subspace of $\cH^{\progB}$ to which the probability weight from $\cH^\cA$ can be partially transferred, and the subspace which would be used by a noiseless execution of Grover's algorithm. Unfortunately, however, $\cH^{\progB,\actv}$ is also the subspace affected by the noise. On the other hand, the passive subspace $\cH^{\progB,\pasv}$ can be used to shield the quantum memory from noise, yet this shielding thwarts the progress of the computation. So, we show that, in this case, you can't have your cake and eat it too; namely, we show that you cannot simultaneously both progress the computation and avert its corruption by noise. As a result, we show the tight bound $\Omega(np)$ as given in Theorem~\ref{thm:main}.

\subsection{Organization}

First, in Section~\ref{sec:model}, we introduce the computational model of the quantum noisy-query algorithm, as well as formalize the unstructured search problem. Then, Sections~\ref{sec:envRegs}--\ref{sec:boundProg} are devoted to proving our main result, the lower bound given by Theorem~\ref{thm:main}.

The non-unitary noise renders the state of the computation to be mixed. In Section~\ref{sec:envRegs}, we introduce two additional registers that purify the computation, one purifying the non-unitary action of the noise, while the other purifies the random choice among the hardest instances of the search problem, i.e., instances having a unique marked element. In Section~\ref{sec:progressMes} we use these two registers to define the progress measure of the computation, and quantify the total change this measure must undergo in order for the computation to succeed. Then, in Section~\ref{sec:transit}, we present various claims towards limiting by how much a single oracle call can change the progress measure, and, finally, we complete the proof in Section~\ref{sec:boundProg}.

In Section~\ref{sec:algoSketch}, we briefly sketch the algorithm that shows that our lower bound is almost asymptotically tight. We conclude with a brief discussion in Section~\ref{sec:discussion}.

\bigskip

The main body of the text concerns the unstructured search in the worst case setting, and it is self-contained, meaning that, none of the proofs are deferred to the appendix. Our proof techniques, however, also work when one is interested in the noisy search for random functions, and, in Appendix~\ref{app:randFunc}, we sketch how an equivalent bound to our main result can be obtained in that setting.

\section{Model of Computation}
\label{sec:model}

We assume that the reader is familiar with basic concepts of quantum computation. For introductory texts, see, for example~\cite{NielsenChuang,watrous2018}.

In this section we define query algorithms with noisy oracles for a rather general computational problem.
We then focus on the unstructured search problem in particular and describe its various forms at the end, in Section~\ref{sec:SearchDef}.

\subsection{Quantum memory}
\label{sec:QuantMemo}

The memory of a quantum algorithm is organized in registers. Each register is associated with some finite set $S$ and a complex Euclidean space of dimension $|S|$, denoted $\mathbb{C}^{S}$. The \emph{standard basis} or the \emph{computational basis} of this space is some fixed orthonormal basis whose vectors are uniquely labeled by the elements of $S$, and we write it as $\{|s\>\colon s\in S\}$. The pure states of the register are (column) unit vectors in $\mathbb{C}^{S}$. 

A qubit is a register associated with the set $\{0,1\}$, and multiple qubits can be grouped together into a larger register.
We will also occasionally associate a qubit with the set $\{\flagQ,\flagC\}$, where $\flagQ:=0$ and $\flagC:=1$ are flags with $\flagC$ indicating that the noise has caused the error to occur and thus the oracle call is classical in nature, and $\flagQ$ indicating no error and the oracle call being quantum in nature.

We may also write a pure state $|\psi\>$ as its corresponding density operator $|\psi\>\<\psi|$. The quantum memory can also be in a mixed-state, in which case its corresponding density operator is a convex combination over pure states $|\psi\>\<\psi|$.

We use capital letters $\regA,\regQ,\regR, \regFunc, \regW$ in sans serif font to denote registers, and we might add them as subscripts to operators to highlight on which registers these operators act. 

The memory of a quantum algorithm typically consists of multiple registers, and the state of the entire memory is a density operator on the tensor product of the Euclidean spaces corresponding to each register. The evolution of quantum memory is governed by completely-positive trace-preserving (CPTP) maps, also known as quantum channels, an evolution governed by unitary operators being a special case. When an operator or a CPTP map acts as the identity on some registers, we typically omit those registers from the notation. We use $I_k$ to denote the $k$-dimensional identity operator, and we may drop the subscript $k$ when it is clear from the context.

\subsection{Oracle calls}
\label{ssec:oracles}

Let us here introduce various forms of the quantum oracle and notation pertaining to it. Let $[n]:=\{0,1,2,\ldots,n-1\}$ here and throughout the text.

We consider algorithms that have oracle access to a function $f\colon[n]\rightarrow[m]$. We call such algorithms \emph{query algorithms}, and we refer to each oracle access to $f$ by using  terms an ``oracle call'' and a ``query'' interchangeably. Let us assume, without much loss of generality, that $m$ is a power of $2$, and we may interpret the elements of $[m]$ as $\log m$-bit strings. In the main text, we consider only the scenario $m=2$ and, thus, functions $f\colon[n]\rightarrow\{0,1\}$, while the $m=\Omega(n)$ case is relevant in the random function scenario considered in the appendix.

As we will discuss in more detail in Section~\ref{sec:NoisyAlgo}, the algorithm registers $\regA$ will consist of the query register $\regQ$ and the workspace register $\regW$. When considering oracles with error-indicating flag qubits, each oracle call will grow the workspace register by a qubit.

\def\ygap{0.13}

\begin{figure}[!h]
\centering
\begin{tikzpicture}
  \foreach \y in {0,...,5}
      \draw (-1.45,\ygap*\y)--(1.45,\ygap*\y) ;
   \draw (-1.45,-\ygap*4.5)--(1.45,-\ygap*4.5);
   \draw [draw=black,fill=oracleColor] (-.7,-0.9) rectangle (.7,0.9);
   \node at (0,0) {$O_f$};
   \node [right] at (1.7,\ygap*2.5) {$|x\>$};
   \node [right] at (1.7,-\ygap*4.5) {$|y\oplus f(x)\>$};
      \node [right] at (-2.4,\ygap*2.5) {$|x\>$};
    \node [right] at (-2.4,-\ygap*4.5) {$|y\>$};
\end{tikzpicture}
\captionsetup{font=small}
\captionsetup{width=0.9\textwidth}
\caption[my caption]{%
The circuit diagram representation of the noiseless standard oracle call. Here and in subsequent illustrations we only consider the binary case, when  $m=2$. Here $x\in[n]$ and $y\in\{0,1\}$. The oracle can be called in a superposition.}
\label{fig:noiseless_orac}
\end{figure}
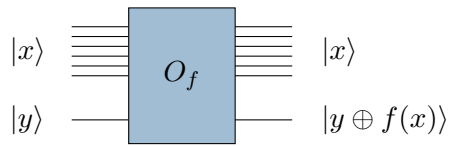

\paragraph{Noiseless oracle.}

The oracle call acts on the \emph{query} register $\regQ=\regQi\regQo$, which is composed of two subregisters: the \emph{query-input} register $\regQi$ corresponding to the set $[n]$ and the \emph{query-output} register $\regQo$ corresponding to $[m]$.
The query-input register and the query-output registers are also commonly referred to as, respectively, the index and the target register.
 The oracle call to $f$ is the unitary
\[
 O_f
 :=\sum_{x\in[n]}\sum_{y\in [m]} |x,y\oplus f(x)\>\<x,y|.
 \]
where we think of $f(x)$ and $y$ as $\log m$-bit strings and $f(x)\oplus y$ denotes their entry-wise exclusive OR (see Figure~\ref{fig:noiseless_orac} for the circuit diagram representation of $O_f$ in the binary case, $m=2$). 
The operator $O_f$ is sometimes referred to as the \emph{standard} oracle, as opposed to the \emph{phase} oracle 
\[
 O^{\mathrm{ph}}_f:= \sum_{x\in[n]}\sum_{y\in[m]} (-1)^{f(x)\cdot y}|x,y\>\<x,y|,
\]
where `$\cdot$' denotes the inner product of bit strings. 
 The two oracles are equivalent up to a basis change of the query-output register.%
\footnote{\label{fn:targetBasis}%
The basis choice for the query-output register could play a role when one talks about dephasing that qubit, as the dephasing noise is basis-specific. However, in this work we already show the hardness of the case when the dephasing noise affects only the query-input register. That, together with flag qubits indicating the presence of error, will also imply hardness for stronger noise models.}
In the binary case, $m=2$, we can write
\[
 O_f  =  I_{2n} - 2\sum_{x\in f^{-1}(1)}|x,-\>\<x,-|,
\]
where $|-\>:=(|0\>-|1\>)/\sqrt 2$ a Hadamard basis state.
We denote the CPTP map corresponding to $O_f$ by $\ccO_f$.
We may call $O_f$ the \emph{noiseless} or the \emph{faultless} oracle (call) to emphasize the absence of noise.

\paragraph{Noisy oracles.}

We will define noisy oracle calls as CPTP maps on the query register $\regQ$ as the composition of the noiseless oracle call $\ccO_f$ and a probabilistic, non-unitary noise $\ccN_p$.
For  $\rho$ a linear operator on $\regQ$,
we define the \emph{completely} depolarizing channel (of all the query registers $\regQ=\regQi\regQo$) as the CPTP map
\[
\ccN_1^{polar}\colon \rho \mapsto \tr[\rho] \,I_\regQ/(nm)
\]
and the completely dephasing channel (of the query-input register $\regQi$) as the CPTP map
\[
\ccN_1^{phase}\colon \rho \mapsto \sum_{x\in[n]} \big(|x\>\<x|\otimes I_\regQo\big)\rho \big(|x\>\<x|\otimes I_\regQo\big),
\]
where $\tr[\,\cdot\,]$ denotes the trace.

Note that we have chosen to define the completely dephasing channel so as not to affect the query-output register. This will turn out to be without loss of generality: for lower bounds, because we will have an access to the flag qubits indicating whether the error in the form of this channel has occurred and thus we will be able to purposely dephase $\regQo$ whenever $\regQi$ dephases; for upper bounds, because the algorithm will work for any oracle that is faultless with probability at least $1/2+\Omega(1)$.

We may omit superscripts $polar$ and $phase$ from $\ccN_1$ when we make general statements addressing both types of noise and in Sections~\ref{sec:envRegs}--\ref{sec:boundProg}, where we only consider the dephasing noise. The subscript $1$ in $\ccN_1$ indicates that the noise causes an error with probability $1$. More generally, for both noise models and for a probability $p\in[0,1]$, we define $\ccN_p$ as a CPTP map that maps $\rho$ to $\rho$ with probability $1-p$ and to $\ccN_1(\rho)$ with probability $p$; in the latter case, we say that the \emph{error occurs}. 
In other words, $\ccN_p:=(1-p)\ccI+p\ccN_1$, where $\ccI$ is the identity map, and it maps
 $\rho$ to $(1-p)\rho + p\ccN_1(\rho)$.
We may refer to $p$ as the \emph{noise level} or the \emph{noise rate}.
We might drop the subscript $p$ from $\ccN_p$ when it is clear from the context, especially when making informal statements. 

We define the noisy oracle call as the CPTP map $\ccO_{f,p}:=\ccN_{p}\circ \ccO_f
= \ccO_f\circ \ccN_{p}$ where $\circ$ denotes the composition of maps.
See Figure~\ref{fig:noisy_orac_comm} for the circuit diagram of the noisy oracle call.

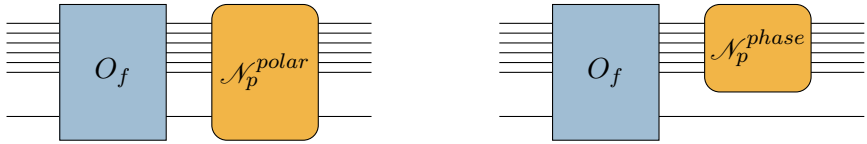
\begin{figure}[!h]
\centering
\begin{tikzpicture}
  \foreach \y in {0,...,5}
      \draw (-2.4,\ygap*\y)--(2.4,\ygap*\y) ;
   \draw (-2.4,-\ygap*4.5)--(2.4,-\ygap*4.5);
   \draw [draw=black,fill=oracleColor] (-1.7,-0.9) rectangle (-0.3,0.9);
   \draw [fill=noiseColor, rounded corners = 2mm] (0.3,-0.9) rectangle (1.7,0.9);
   \node at (-1,0) {$O_f$};
   \node at (1,0) {$\ccN^{polar}_p$};
\end{tikzpicture}
\qquad\qquad
\begin{tikzpicture}
  \foreach \y in {0,...,5}
      \draw (-2.4,\ygap*\y)--(2.4,\ygap*\y) ;
   \draw (-2.4,-\ygap*4.5)--(2.4,-\ygap*4.5);
   \draw [draw=black,fill=oracleColor] (-1.7,-0.9) rectangle (-0.3,0.9);
   \draw [fill=noiseColor, rounded corners = 2mm] (0.3,-0.26) rectangle (1.7,0.9);
   \node at (-1,0) {$O_f$};
   \node at (1,0.32) {$\ccN^{phase}_p$};
\end{tikzpicture}

\captionsetup{font=small}
\captionsetup{width=0.9\textwidth}
\caption[my caption]{%
The circuit diagram of the noisy oracle call corresponding to the depolarizing noise (on the left) and the dephasing noise (on the right). The dephasing noise acts as the identity on the query output register $\regQo$. Here and in other circuit diagrams we use boxes with sharp corners for unitary operations and, more generally, linear isometries, while we use boxes with rounded corners for CPTP maps that are not linear isometries.}
\label{fig:noisy_orac_comm}
\end{figure}

\paragraph{Oracles with flag bits.}

Let us now introduce noisy oracles that signal whether or not the error has occurred. This signal is in a form of a \emph{flag} (qu)bit that the oracle call adds to the workspace of the algorithm. After the oracle call, the algorithm is then permitted to do whatever it pleases with this additional qubit, including, to ignore it, as if it were not even received.  We will refer to this new qubit register, which extends the workspace, as $\regW_+$. Hence, we can formalize \emph{error-signaling probabilistic noise} as a CPTP map that maps from $\regQ$ to $\regQ\regW_+$ and that acts on linear operators $\rho$ on $\regQ$ as
\[
\ccN_p^+\colon \rho\mapsto (1-p)\rho\otimes|\flagQ\>\<\flagQ| + p\ccN_1(\rho)\otimes|\flagC\>\<\flagC|.
\]
Consider a CPTP map $\ccD$ acting on $\regQ\regW_+$ that first measures qubit $\regW_+$ with respect to basis $\{|\flagQ\>,|\flagC\>\}$ and, if the measurement yields $\flagC$, it then applies the completely depolarizing channel $\ccN_1^{polar}$ on $\regQ$.
We have 
$\ccN_p^{polar,+} = \ccD\circ\ccN_p^{phase,+}$.
Since the algorithm can implement $\ccD$ without any queries, any hardness results that we show for error-signaling oracles with dephasing noise will carry over to oracles with depolarizing noise (see also footnote~\ref{fn:targetBasis}).

We define the \emph{error-signaling noisy oracle} call as the CPTP map from $\regQ$ to $\regQ\regW_+$ as the composition $\ccO_{f,p}^+ := \ccN_p^+\circ\ccO_f$. For contrast, we may refer to $\ccO_{f,p}=\ccN_p\circ\ccO_f$, which does not introduce flag bits, as the \emph{error-concealing noisy oracle}.

\subsection{Quantum noisy-query algorithms}
\label{sec:NoisyAlgo}

Now that we have introduced the oracle, let us formalize the computational model of quantum noisy-query algorithm, and describe its operation. We will consider both algorithms with error-concealing oracles and algorithms with error-signaling oracles, which we illustrate in Figure~\ref{fig:algoTwo}.

We divide the memory of the algorithm into two registers: one is the query register $\regQ$ and the other is the \emph{workspace} register $\regW$ that we assume to consist of some number of qubits. As we have described above, in the error-signaling case, every call to the noisy oracle $\ccO^+_{f,p}$ introduces an extra qubit $\regW^+$, which we incorporate into the workspace register.
We call the joint register $\regA=\regQ\regW$ the \emph{algorithm register} (or \emph{registers}).

A quantum noisy-query algorithm is specified by four components: (1) the number of quantum queries, (2) the initial state of the algorithm, (3) input-independent unitary operators that govern the evolution of the quantum system between oracle calls, and (4) the final measurement. Let us describe these components in detail.
\begin{enumerate}
\item
We denote the number of queries by $\tau$ and we enumerate oracle calls from $1$ to $\tau$.
\item Let initially the workspace register $\regW$ consist of $\ell$ qubits, therefore the entire initial memory corresponds to an $nm2^{\ell}$-dimensional Euclidean space. The initial state of the algorithm is an input-independent pure state $|\psi^0\>$ in this space; the first oracle call is performed directly on this state.
\item
In the error-concealing case, the size of the memory does not change throughout the algorithm.
In the error-signaling case,
each noisy oracle call expands the workspace register by a qubit. 
The evolution between oracle calls and after the last call is given by input-independent unitary operators $U_1,\ldots,U_\tau$, where the dimension of $U_t$ is $nm2^{\ell+t}$ in the error-signaling case and $nm2^\ell$ in the error-concealing case.
\item
Given some finite set $\cA$ of answers, the final measurement is given by a set $\{\Pi_a\colon a\in\cA\}$, where each $\Pi_a$ is an orthogonal projector of the same dimension as $U_\tau$, the final unitary operator, and $\sum_a\Pi_a=I$.
\end{enumerate}

\def\xLayerWidth{2.5}
\def\xGateHalfWidth{0.4}
\def\xWirePad{0.5}
\def\xMeasurePad{0.2}
\def\xySlope{\ygap/\xLayerWidth}

\begin{figure}[h!]
\centering
\begin{tikzpicture}
\foreach \y in {0,...,5}
   \draw (0.5*\xLayerWidth-\xGateHalfWidth-\xWirePad,\ygap*\y) -- 
             (4.0*\xLayerWidth+\xGateHalfWidth+\xWirePad+\xMeasurePad,\ygap*\y);
\draw (0.5*\xLayerWidth-\xGateHalfWidth-\xWirePad,-\ygap*4.5) --
          (4.0*\xLayerWidth+\xGateHalfWidth+\xWirePad,-\ygap*4.5);

\draw [fill=gray!40] (4.0*\xLayerWidth+\xGateHalfWidth+\xWirePad+\xMeasurePad,-0.3) rectangle
          (4.0*\xLayerWidth+\xGateHalfWidth+\xWirePad+\xMeasurePad+2*\xGateHalfWidth,1);
\draw[->] (4.0*\xLayerWidth+\xGateHalfWidth+\xWirePad+\xMeasurePad+\xGateHalfWidth,0.2) -- (4.0*\xLayerWidth+\xGateHalfWidth+\xWirePad+\xMeasurePad+1.5*\xGateHalfWidth,0.7);      
\draw (4.0*\xLayerWidth+\xGateHalfWidth+\xWirePad+\xMeasurePad+1.8*\xGateHalfWidth,0.4) arc
    [
        start angle=40,
        end angle=140,
        x radius=0.4 cm,
        y radius =0.4 cm
    ] ;

\foreach \y in {0,...,7}
   \draw (0.5*\xLayerWidth-\xGateHalfWidth-\xWirePad,
                  -1-\ygap*\y+\xGateHalfWidth*\xySlope
                            + \xWirePad*\xySlope-\xLayerWidth*\xySlope ) --
             (4.0*\xLayerWidth+\xGateHalfWidth+\xWirePad,
                  -1-\ygap*\y+\xGateHalfWidth*\xySlope
                            + \xWirePad*\xySlope-\xLayerWidth*\xySlope ) ;
    
\foreach \i in {1,...,4}
{
   \draw [draw=black,fill=algoColor, opacity=1]  (\xLayerWidth*\i-\xGateHalfWidth,-2.2) rectangle (\xLayerWidth*\i+\xGateHalfWidth,1);
   \node at (\xLayerWidth*\i,-0.5) {$U_\i$};
  \draw [top color=noiseColor!55!oracleColor, bottom color=oracleColor, rounded corners = 2mm, opacity = 1]   (\xLayerWidth*\i-0.5*\xLayerWidth-\xGateHalfWidth,-1) rectangle 
  (\xLayerWidth*\i-0.5*\xLayerWidth+\xGateHalfWidth,1);
   \node at (\xLayerWidth*\i-0.5*\xLayerWidth,0) {$\ccO_{f,p}$};
}

\node at (-0.5,-0.65) {$|\psi^0\>$};
\draw [decorate,
    decoration = {calligraphic brace, amplitude=5pt}] (0.2,-2.15) --  (0.2,0.8);
    
\end{tikzpicture}
{\small(\emph{error-concealing case})}

%
%

\bigskip

\begin{tikzpicture}
\foreach \y in {0,...,5}
   \draw (0.5*\xLayerWidth-\xGateHalfWidth-\xWirePad,\ygap*\y) -- 
             (4.0*\xLayerWidth+\xGateHalfWidth+\xWirePad+\xMeasurePad,\ygap*\y);
\draw (0.5*\xLayerWidth-\xGateHalfWidth-\xWirePad,-\ygap*4.5) --
          (4.0*\xLayerWidth+\xGateHalfWidth+\xWirePad,-\ygap*4.5);

\draw [fill=gray!40] (4.0*\xLayerWidth+\xGateHalfWidth+\xWirePad+\xMeasurePad,-0.3) rectangle
          (4.0*\xLayerWidth+\xGateHalfWidth+\xWirePad+\xMeasurePad+2*\xGateHalfWidth,1);
\draw[->] (4.0*\xLayerWidth+\xGateHalfWidth+\xWirePad+\xMeasurePad+\xGateHalfWidth,0.2) -- (4.0*\xLayerWidth+\xGateHalfWidth+\xWirePad+\xMeasurePad+1.5*\xGateHalfWidth,0.7);      
\draw (4.0*\xLayerWidth+\xGateHalfWidth+\xWirePad+\xMeasurePad+1.8*\xGateHalfWidth,0.4) arc
    [
        start angle=40,
        end angle=140,
        x radius=0.4 cm,
        y radius =0.4 cm
    ] ;

\foreach \i in {1,...,4}
{
   \draw (\i*\xLayerWidth-\xGateHalfWidth,
                       -1-0.5*\xLayerWidth*\xySlope+\xGateHalfWidth*\xySlope) -- 
             (4.0*\xLayerWidth+\xGateHalfWidth+\xWirePad,
             -1+\ygap*\i -4.5*\xLayerWidth*\xySlope-\xGateHalfWidth*\xySlope
                            -\xWirePad*\xySlope);                            
   \draw (\i*\xLayerWidth-0.5*\xLayerWidth+\xGateHalfWidth,
                             -1-\xGateHalfWidth*\xySlope+\ygap) .. controls 
             (\i*\xLayerWidth-0.25*\xLayerWidth,
                       -1-0.25*\xLayerWidth*\xySlope+\ygap) and
             (\i*\xLayerWidth-0.25*\xLayerWidth,
                       -1-0.25*\xLayerWidth*\xySlope) ..
             (\i*\xLayerWidth-\xGateHalfWidth,
                       -1-0.5*\xLayerWidth*\xySlope+\xGateHalfWidth*\xySlope);
}                           

\foreach \y in {0,...,7}
   \draw (0.5*\xLayerWidth-\xGateHalfWidth-\xWirePad,
                  -1-\ygap*\y+\xGateHalfWidth*\xySlope
                            + \xWirePad*\xySlope-\xLayerWidth*\xySlope ) --
             (4.0*\xLayerWidth+\xGateHalfWidth+\xWirePad,
                  -1-\ygap*\y-4.5*\xLayerWidth*\xySlope-\xGateHalfWidth*\xySlope
                            -\xWirePad*\xySlope) ;
    
\foreach \i in {1,...,4}
{
   \draw [draw=black,fill=algoColor, opacity=1]  (\xLayerWidth*\i-\xGateHalfWidth,-2.2-\i*\ygap) rectangle (\xLayerWidth*\i+\xGateHalfWidth,1);
   \node at (\xLayerWidth*\i,-0.5-0.5*\i*\ygap) {$U_\i$};
  \draw [top color=noiseColor!55!oracleColor, bottom color=oracleColor, rounded corners = 2mm, opacity = 1]   (\xLayerWidth*\i-0.5*\xLayerWidth-\xGateHalfWidth,-0.9) -- 
  (\xLayerWidth*\i-0.5*\xLayerWidth-\xGateHalfWidth,1) --
  (\xLayerWidth*\i-0.5*\xLayerWidth+\xGateHalfWidth,1) --
  (\xLayerWidth*\i-0.5*\xLayerWidth+\xGateHalfWidth,-1.1) -- cycle;
   \node at (\xLayerWidth*\i-0.5*\xLayerWidth,0) {$\ccO_{f,p}^+$};
}

\node at (-0.5,-0.65) {$|\psi^0\>$};
\draw [decorate,
    decoration = {calligraphic brace, amplitude=5pt}] (0.2,-2.15) --  (0.2,0.8);
    
\end{tikzpicture}
{\small(\emph{error-signaling case})}
\captionsetup{font=small}
\captionsetup{width=0.9\textwidth}
\caption[my caption]{%
A quantum query algorithm with error-concealing noisy oracle (top) and with error-signaling noisy oracle (bottom). Both algorithms make $\tau=4$ queries, and their initial workspace consists of $\ell=8$ qubits. For the final measurement, here we focus on the problem of unstructured search, for which the set of answers is $\cA=[n]$ and, without loss of generality, the final measurement is given by the set $\{|x\>\<x|_\regQi\otimes I_{\regQo\regW}\colon x\in[n]\}$. Therefore we illustrate the final measurement to act only on $\regQi$.}
\label{fig:algoTwo}
\end{figure}
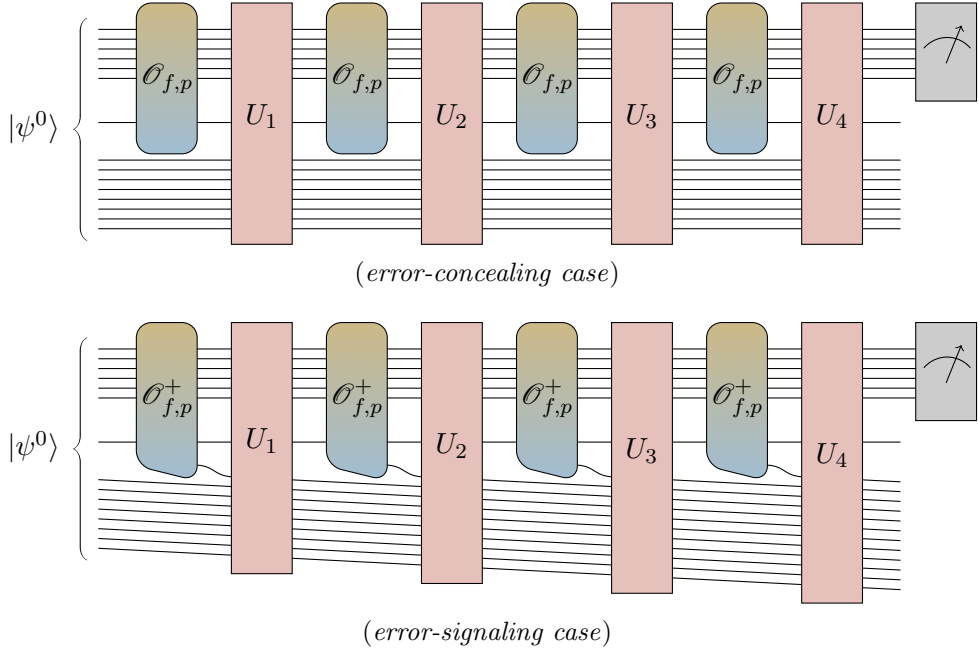

The execution of the algorithm starts in the initial state $|\psi^0\>$, and then alternates between oracle calls and input-independent unitaries as follows. Iteratively, for $t=1,2,\ldots,\tau$, the algorithm first performs an oracle call ($\ccO_{f,p}$ or $\ccO_{f,p}^+$), and then applies unitary $U_{t}$. Finally, the algorithm performs the measurement according to $\{\Pi_a\colon a\in\cA\}$, returning the measurement outcome $a$ as an answer. We say that the algorithm is \emph{successful} if $a$ is a correct answer for $f$, and we say that it \emph{fails} otherwise.
See Figure~\ref{fig:algoTwo} for circuit diagrams of quantum query algorithms with error-concealing and error-signaling noisy oracles.

Note that, because the workspace is not affected by the noise, the model that we are considering is equivalent to the one which allows intermediate measurements and classical post-precessing.

\subsection{Unstructured Search and Hardest Instances}
\label{sec:SearchDef}

From now on, let us focus on the problem of unstructured search. We will consider two of its versions: the worst case, in the main text, and the average case, in the appendix.

\paragraph{Worst case.}

For the worst case scenario, it suffices to consider the case when $m=2$. Here, for a function $f\colon[n]\rightarrow\{0,1\}$, we call an input $x\in[n]$ \emph{marked} if $f(x)=1$. The goal of the unstructured search is, given an oracle access to $f$, to find a marked input $x$, assuming there exists one.

Intuitively, the hardest instances of the problem are functions $f$ that have exactly one marked input, and, for lower bounds, we will only consider such functions. Namely, let $f_x\colon[n]\rightarrow\{0,1\}$ be the function for which $x$ is the unique marked input, that is, 
\[
f_x(x')=
\begin{cases}
1 & \text{if }x'=x, \\
0 & \text{if }x'\in[n]\setminus\{x\}.
\end{cases}
\]
Note that the noiseless oracle corresponding to $f_x$ is $O_{f_x} = I_{2n} - 2|x,-\>\<x,-|$.

We define the $\epsilon$-error quantum noisy-query complexity of the unstructured search to be the minimum number of queries needed by any algorithm that errs with probability no more than $\epsilon$ for all $f$ having at least one marked element (i.e., all $f$ other than the constant $f(x)=0$).

\paragraph{Average case.}
In the average case, for a function $f\colon[n]\rightarrow[m]$, we call an input $x\in[n]$ \emph{marked} if, in binary, $f(x)=\marked$, where $\marked$ is the all-zeros string $0^{\log m}$. Here, the algorithm is given an access to such an $f$ chosen uniformly at random, each with probability $1/m^n$.

It is possible that $f$ contains no marked elements, in which case the algorithm is guaranteed to fail. Therefore, one typically considers the scenario where $m=\OO(n)$ so that at least a constant fraction of functions $f$ have at least one marked element.

\section{Purifying the Computation}
\label{sec:envRegs}

At this point, we start working towards the proof of our main result, Theorem~\ref{thm:main}, the whole proof covering Sections~\ref{sec:envRegs}--\ref{sec:boundProg}.
In this section we introduce registers that purify the overall computation, those extra registers being a part of the overall lower bound framework.
From now on, until  Section~\ref{sec:algoSketch}, we will only consider the error-signaling noisy oracle corresponding to the dephasing noise, and, for brevity, we will simply refer to it as the noisy oracle.

Let us consider a scenario where we are given a noisy oracle access to the function $f_x$ being chosen uniformly at random from the set of functions $\Func:=\{f_1,\ldots,f_n\}$, that is, from the set of functions with exactly one marked element. Our goal is to find $x$.

Without loss of generality, let the query-input register $\regQi$ be also used by the algorithm to return the answer (as already illustrated in Figure~\ref{fig:algoTwo}), and thus let $\{ \Pi_x:=|x\>\<x| \colon x\in[n]\}$ be the final measurement. Hence, if $\rho_{x,\tau}$ is the final state of the computation with oracle access to $f_x$, then the average success probability $q_{succ}$ is given by 
\[
q_{succ} = \frac{1}{n}\sum_{x\in[n]}\tr[(\Pi_x\otimes I_{\regQo\regW})\rho_{x,\tau}].
\]

\subsection{Environment registers}

The overall quantum system will consist of three sets of registers. In addition to the registers $\regQ$ and $\regW$ used by the algorithm, we introduce two additional registers, which we describe below.

\paragraph{Truth register.}

The \emph{truth-table} register (or, the \emph{truth} register, for short) $\regFunc$  corresponds to the set of functions $\Func$. One can think of each its basis state $|f\>$ containing the full truth table of $f$.%
\footnote{In similar lower bound techniques, this register is called the oracle register, the adversary register, and the function register.}

At the beginning of the computation, we initialize the register $\regFunc$ as the uniform superposition $|\unif\>:=\sum_{x\in[n]}|f_x\>/\sqrt{n}$, and we can think of this register as effectively purifying the random choice of $f_x\in\Func$.

\paragraph{Record register.} 

The \emph{record} register $\regR$ will start empty, then the subregister $\regR_1$ will be appended to it by the first oracle call, $\regR_2$ by the second oracle call, and so on. 
For every $t\in\{1,\ldots,\tau\}$, the register $\regR_t$ corresponds to the set $\rec:=\{\bot,1,2,\ldots,n\}=[n]\cup\{\bot\}$,
where $\bot$ indicates the absence of error in $t$-th oracle call,%
\footnote{The state $|\bot\>$ corresponding to the label $\bot\in\rec$ is unrelated to an equally-denoted state in Zhandry's work on the compressed oracle~\cite{zhandry:record}.}
while, $x\in[n]$ indicates that the dephasing has occurred and the query-input register has been dephased to the basis state $|x\>$.
In effect, the register $\regR_t$ will purify the action of non-unitary, noisy oracle call number $t$.

After $t$ oracle calls, the register $\regR$ corresponds to the set $\rec^t$, and we think about its contents as strings $R=r_1r_2\ldots r_t$ of length $t$, which we call \emph{records}. As per standard notation, let $\rec^*:=\bigcup_t \rec^t$ be the set of all records.

For a record $R=r_1\ldots r_t\in\rec^*$, we say that an input $x\in[n]$ \emph{appears} in $R$ if $r_{t'}=x$ for some $t'$, and let $\inrec{R}\subset[n]$ be the set of all inputs appearing in $R$; note that here we ignore the symbol $\bot$. Also note that, for $R\in \rec^t$, we have $|\inrec{R}|\le t$.
For conciseness, let $n_R:=n-|\inrec{R}| = |[n]\setminus\inrec{R}|$, which will be approximately $n$ because we will mostly consider $R\in\rec^t$ with $t\ll n$.

For a record $R=r_1\ldots r_t\in\rec^t$ and a symbol $r\in \rec$, let $R\app r$ denote the record obtained by appending $r$ to $R$. That is, $R\app r=r_1\ldots r_tr\in\rec^{t+1}$. We have $\inrec{(R\app\bot)}=\inrec{R}$ and $\inrec{(R\app x)}=\inrec{R}\cup\{x\}$ for $x\in[n]$.

\subsection{Extended oracles}

We define the extended noisy oracle call as a linear isometry 
\[
O_p :=\sum_{f_z\in\Func}|f_z\>\<f_z|_\regFunc\otimes\Big(\big(\sqrt{1-p}I_\regQi\otimes|\flagQ,\bot\>
+\sqrt{p}\sum_{x\in[n]}|x\>\<x|\otimes|\flagC,x\>\big)\otimes I_\regQo\Big)O_{f_z},
\]
where the states $|\flagQ,\bot\>$ and $|\flagC,x\>$ are on registers $\regW_+\regR_t$. Here, the subscript of the register $\regR_t$ indicates that we implicitly think of $O_p$ as $t$-th oracle call.
See Figure~\ref{fig:purifiedOrac} for one potential circuit diagram for $O_p$.
Note that, if we call $O_p$ with the truth register $\regFunc$ being in the state $|f_z\>$ and then discard the newly introduced record register $\regR_t$, then the resulting map is equal to $\ccO_{f_z,p}^+$.

\def\ygap{0.16}
\def\xgap{0.20} 
\def\rCXctrl{0.07}
\def\rCXnot{0.11}

\def\yRecordTop{4.7}
\def\yFlagRec{3.7}
\def\yTruthTop{3.2}
\def\yIndexTop{2}
\def\yTarget{1}
\def\yFlagAlgo{0.41}

\def\xWireLeft{0}
\def\xWireMid{3.4}
\def\xWireRight{6.42}

\def\xOracle{1}
\def\yOraclePad{0.2} 
\def\xFlagCopy{4.2}
\def\xIndexCopy{5.0}
\def\xKetEnd{3.35} 

\begin{figure}[!h]
\centering
\begin{tikzpicture}

\draw (\xWireMid,\yFlagRec)--(\xWireRight,\yFlagRec);
\draw (\xWireLeft,\yTarget)--(\xWireRight,\yTarget);
\draw (\xWireMid,\yFlagAlgo)--(\xWireRight,\yFlagAlgo);
\foreach \i in {0,...,3} {
     \draw (\xWireMid,\yRecordTop-\ygap*\i)--(\xWireRight,\yRecordTop-\ygap*\i);
     \draw (\xWireLeft,\yTruthTop-\ygap*\i)--(\xWireRight,\yTruthTop-\ygap*\i);
     \draw (\xWireLeft,\yIndexTop-\ygap*\i)--(\xWireRight,\yIndexTop-\ygap*\i);
     \draw (\xIndexCopy+\xgap*\i,\yRecordTop-\ygap*\i+\rCXnot) -- (\xIndexCopy+\xgap*\i,\yIndexTop-\ygap*\i);
     \draw (\xIndexCopy+\xgap*\i,\yRecordTop-\ygap*\i) circle (\rCXnot);
     \draw [fill = black] (\xIndexCopy+\xgap*\i,\yFlagRec) circle (\rCXctrl);
     \draw [fill = black] (\xIndexCopy+\xgap*\i,\yIndexTop-\ygap*\i) circle (\rCXctrl);
}
\draw (\xFlagCopy,\yFlagRec) -- (\xFlagCopy,\yFlagAlgo-\rCXnot);
\draw [fill = black] (\xFlagCopy,\yFlagRec) circle (\rCXctrl);
\draw (\xFlagCopy,\yFlagAlgo) circle (\rCXnot);
\draw (\xOracle,\yTruthTop) -- (\xOracle,\yIndexTop);
\draw [fill = black] (\xOracle,\yTruthTop) circle (\rCXctrl);
\draw [fill = black] (\xOracle,\yTruthTop-3*\ygap) circle (\rCXctrl);
\draw [fill = black] (\xOracle-\rCXctrl,\yTruthTop-3*\ygap) rectangle (\xOracle+\rCXctrl,\yTruthTop);
\node at (\xOracle+0.12,\yTruthTop-3*\ygap-0.17) {${}_f$};
\draw [draw=black,fill=oracleColor] (\xOracle-\xGateHalfWidth,\yTarget-\yOraclePad) rectangle (\xOracle+\xGateHalfWidth,\yIndexTop+\yOraclePad);
\node at (\xOracle,0.5*\yTarget+0.5*\yIndexTop) {$O_f$};
\draw [dotted, thick, orange] (\xWireLeft-3.7,0.5*\yFlagRec+0.5*\yTruthTop)--(\xWireRight+0.25,0.5*\yFlagRec+0.5*\yTruthTop);
\draw [dotted, thick, orange] (\xWireLeft-3.7,0.5*\yIndexTop+0.5*\yTruthTop-1.5*\ygap)--(\xWireRight+0.25,0.5*\yIndexTop+0.5*\yTruthTop-1.5*\ygap);
\node [left] at (\xKetEnd,\yRecordTop-1.5*\ygap) {$|0\>^{\otimes\log n}$};
\node [left] at (\xKetEnd,\yFlagRec) {$\sqrt{1-p}|0\>+\sqrt{p}|1\>$};
\node [left] at (\xKetEnd,\yFlagAlgo) {$|0\>$};
\node [] at (-3,\yTruthTop-1.5*\ygap+1.1) {\small record};
\node [] at (-3,\yTruthTop-1.5*\ygap) {\small truth-table};
\node [] at (-3,\yTruthTop-1.5*\ygap-1.1) {\small algorithm};
\node [right] at (-1,0.5*\yRecordTop+0.5*\yFlagRec) {$\regR_t$};
\node [right] at (-1,\yTruthTop-1.5*\ygap) {$\regFunc$};
\node [right] at (-1,\yIndexTop-1.5*\ygap) {$\regQi$};
\node [right] at (-1,\yTarget) {$\regQo$};
\node [right] at (-1,\yFlagAlgo) {$\regW^+$};

\end{tikzpicture}
\captionsetup{font=small}
\captionsetup{width=0.9\textwidth}
\caption[my caption]{%
A circuit diagram of $O_p$, which can be thought of as a controlled version of $O_f$, namely, $\sum_{z\in[n]}|f_z\>\<f_z|_\regFunc\otimes (O_{f_z})_\regQ$, followed by a purification (essentially, as a Stinespring representation) of the error-signaling dephasing noise $\ccN_p^+$.  The symbols of $\rec$ are encoded in the record register $\regR_t$ as follows: $|0_{\mathrm b},0\>$ represents $\bot$ while $|x_{\mathrm b},1\>$ represents $x\in[n]$, where $x_{\mathrm b}$ is the $\log n$-bit binary encoding of $x$. Note that the completely dephasing channel on $\regQi$ is equivalent to copying the state of that register, in the computational basis, to a fresh ancilla, and then disregarding that ancilla.
As mentioned in Section~\ref{sec:QuantMemo}, for the flag $\regW^+$, we associate $0=\flagQ$ and $1=\flagC$.}
\label{fig:purifiedOrac}
\end{figure}

While the $t$-th extended noisy oracle call $O_p$ acts as the identity on earlier record subregisters $\regR_1\ldots \regR_{t-1}$, it is sometimes useful to reintroduce them---and thus the whole record register $\regR$---in the notation when expressing $O_p$. So, we can $t$-th call $O_p$ as 
\begin{multline*}
O_p =\sum_{f_z\in\Func}|f_z\>\<f_z|\otimes\sum_{R\in\rec^{t-1}}\Big(\big(\sqrt{1-p}I_\regQi\otimes|\flagQ,R\app\bot\>\<R|
\\ + \sqrt{p}\sum_{x\in[n]}|x\>\<x|\otimes|\flagC,R\app x\>\<R|\big)\otimes I_\regQo\Big)O_{f_z}.
\end{multline*}

It will be useful to separate the errorless and the erroneous components of $O_p$, and express it as $O_p=\sqrt{1-p}O_Q+\sqrt{p}O_C$,
where
\begin{align*}
& O_Q := O_0 = \sum_{f}|f\>\<f|\otimes O_f \otimes|\flagQ,\bot\>,
\\ &
O_C := O_1 =\sum_{f}|f\>\<f|\otimes \sum_{x\in[n]} |x\>\<x|
\otimes \sum_{y\in\{0,1\}}|y\oplus f(x)\>\<y| \otimes|\flagC,x\>
\end{align*}
are linear isometries with orthogonal images.
The latter one, $O_C$, exhibits close similarity to the classical oracle considered in works \cite{Rosmanis:2022:hybrid,hamoudi:2022:tradeoffs} on hybrid quantum-classical query algorithms, except that it permits the query-output register to remain in a superposition. For that reason, we may refer to $O_C$ as the \emph{classical oracle} and to $O_Q$ as the \emph{quantum oracle}.

\subsection{Extended computation}

Now let us consider how the algorithm extends to the entire space of registers $\regFunc\regA\regR$, and describe its execution.
The execution of the \emph{(extended) computation} starts in the initial state $|\phi_0\>:=|\unif\>\otimes|\psi^0\>$, which is also the state of the overall system just before the first oracle call. 
Recall that $|\unif\>=\sum_{z\in[n]}|f_z\>/\sqrt{n}$ and also recall that we start the computation with empty record, that is, initially, the register $\regR$ corresponds to one-dimensional Euclidean space (i.e., zero qubits).
Next, similarly as before, the computation alternates between extended noisy oracle calls on $\regFunc\regQ\regR$ and input-independent unitaries on $\regQ\regW$ as follows. Iteratively, for $t=1,2,\ldots,\tau$, the computation first performs an oracle call $O_{p}$, and then applies unitary $U_{t}$.
Finally, on the final state, the extended computation measures registers $\regFunc$ and $\regQi$, obtaining a function $f_z$ and an input $x$, and the computation is successful if and only if $x$ is a marked input for $f_z$, that is, $x=z$. Accordingly, let us define the projector on successful outcomes as
 $\Pi_{succ}:=\sum_{x\in[n]}|f_x,x\>\<f_x,x|$, which acts on registers $\regFunc\regQi$.
 
Because both $U_t$ and $O_p$ are linear isometries, the overall memory of the extended computation always remain in a pure state. For $t\in\{1,\ldots,\tau-1\}$, let $|\phi_t\>$ be the state of the overall system just before $(t+1)$-th oracle call and let $|\phi_\tau\>$ be the final state of the system. These states can be recursively expressed as $|\phi_t\>=U_t O_p|\phi_{t-1}\>$ for all $t\in\{1,\ldots,\tau\}$. Because discarding registers $\regFunc$ and $\regR$ from the computation would be equivalent to running the algorithm on a randomly chosen $f_z$, we have $q_{succ}=\|\Pi_{succ}|\phi_\tau\>\|^2$.

See Figure~\ref{fig:purifiedAlgo} for the circuit diagram of the extended computation. (Note that,  for the sake of simplicity, in the illustration, the order of registers is $\regR\regFunc\regA$, while, in formulae, we typically use the order $\regFunc\regA\regR$.)

\def\xLayerWidth{2.5}
\def\xGateHalfWidth{0.4}
\def\xWirePad{0.5}
\def\xMeasurePad{0.2}
\def\xySlope{\ygap/\xLayerWidth}
\def\xyHalfSlope{0.5*\xySlope}

\begin{figure}[!h]
\centering
\begin{tikzpicture}
\draw [ultra thick] (0.5*\xLayerWidth-\xGateHalfWidth-\xWirePad,1.2) --
          (4.0*\xLayerWidth+\xGateHalfWidth+\xWirePad+\xMeasurePad,1.2);
   \draw [ultra thick] (0.5*\xLayerWidth-\xGateHalfWidth-\xWirePad,0.2) -- 
             (4.0*\xLayerWidth+\xGateHalfWidth+\xWirePad+\xMeasurePad,0.2);
\draw (0.5*\xLayerWidth-\xGateHalfWidth-\xWirePad,-0.35) --
          (4.0*\xLayerWidth+\xGateHalfWidth+\xWirePad,-0.35);

\draw [dotted, thick, orange] (-2.7+0.5*\xLayerWidth-\xGateHalfWidth-\xWirePad,1.7) --
          (0.8+4.0*\xLayerWidth+\xGateHalfWidth+\xWirePad,1.7);
\draw [dotted, thick, orange] (-2.7+0.5*\xLayerWidth-\xGateHalfWidth-\xWirePad,0.7) --
          (0.8+4.0*\xLayerWidth+\xGateHalfWidth+\xWirePad,0.7);
\node [] at (-1.7,2.2) {\small record};
\node [] at (-1.7,1.2) {\small truth-table};
\node [] at (-1.7,0.2) {\small algorithm};

\draw [fill=gray!40] (4.0*\xLayerWidth+\xGateHalfWidth+\xWirePad+\xMeasurePad,
                      1.2-\xGateHalfWidth) rectangle
          (4.0*\xLayerWidth+\xGateHalfWidth+\xWirePad+\xMeasurePad+2*\xGateHalfWidth,
                      1.2+\xGateHalfWidth);
\node [right] at (4.0*\xLayerWidth+\xGateHalfWidth+\xWirePad+\xMeasurePad+2*\xGateHalfWidth-0.45,
                      1.39-\xGateHalfWidth) {\small$f\!{}_z$};

\draw [fill=gray!40] (4.0*\xLayerWidth+\xGateHalfWidth+\xWirePad+\xMeasurePad,0.2-\xGateHalfWidth) rectangle
          (4.0*\xLayerWidth+\xGateHalfWidth+\xWirePad+\xMeasurePad+2*\xGateHalfWidth,0.2+\xGateHalfWidth);
\node [right] at (4.0*\xLayerWidth+\xGateHalfWidth+\xWirePad+\xMeasurePad+2*\xGateHalfWidth-0.39,
                      0.34-\xGateHalfWidth) {\small$x$};                      

\draw[->] (4.0*\xLayerWidth+\xGateHalfWidth+\xWirePad+\xMeasurePad+\xGateHalfWidth,0.05) -- (4.0*\xLayerWidth+\xGateHalfWidth+\xWirePad+\xMeasurePad+1.5*\xGateHalfWidth,0.55);      
\draw (4.0*\xLayerWidth+\xGateHalfWidth+\xWirePad+\xMeasurePad+1.8*\xGateHalfWidth,0.25) arc
    [
        start angle=40,
        end angle=140,
        x radius=0.4 cm,
        y radius =0.4 cm
    ] ;                          

\draw[->] (4.0*\xLayerWidth+\xGateHalfWidth+\xWirePad+\xMeasurePad+\xGateHalfWidth,1.05) -- (4.0*\xLayerWidth+\xGateHalfWidth+\xWirePad+\xMeasurePad+1.5*\xGateHalfWidth,1.55);      
\draw (4.0*\xLayerWidth+\xGateHalfWidth+\xWirePad+\xMeasurePad+1.8*\xGateHalfWidth,1.25) arc
    [
        start angle=40,
        end angle=140,
        x radius=0.4 cm,
        y radius =0.4 cm
    ] ;

\foreach \i in {1,...,4}
{
   \draw [ultra thick] (\i*\xLayerWidth-\xGateHalfWidth,
                       2+0.5*\xLayerWidth*\xySlope-\xGateHalfWidth*\xySlope) -- 
             (4.0*\xLayerWidth+\xGateHalfWidth+\xWirePad,
             2-\ygap*\i +4.5*\xLayerWidth*\xySlope+\xGateHalfWidth*\xySlope
                            +\xWirePad*\xySlope);        
   \draw [ultra thick] (\i*\xLayerWidth-0.5*\xLayerWidth+\xGateHalfWidth,
                             2+\xGateHalfWidth*\xySlope-\ygap) .. controls 
             (\i*\xLayerWidth-0.25*\xLayerWidth,
                       2+0.25*\xLayerWidth*\xySlope-\ygap) and
             (\i*\xLayerWidth-0.25*\xLayerWidth,
                       2+0.25*\xLayerWidth*\xySlope) ..
             (\i*\xLayerWidth-\xGateHalfWidth,
                       2+0.5*\xLayerWidth*\xySlope-\xGateHalfWidth*\xySlope);
   \draw (\i*\xLayerWidth-\xGateHalfWidth,
                       -0.7-0.5*\xLayerWidth*\xyHalfSlope+\xGateHalfWidth*\xyHalfSlope) -- 
             (4.0*\xLayerWidth+\xGateHalfWidth+\xWirePad,
             -0.7+0.5*\ygap*\i -4.5*\xLayerWidth*\xyHalfSlope-\xGateHalfWidth*\xyHalfSlope
                            -\xWirePad*\xyHalfSlope);
   \draw (\i*\xLayerWidth-0.5*\xLayerWidth+\xGateHalfWidth,
                             -0.7-\xGateHalfWidth*\xyHalfSlope+\ygap) .. controls 
             (\i*\xLayerWidth-0.25*\xLayerWidth,
                       -0.7-0.25*\xLayerWidth*\xyHalfSlope+\ygap) and
             (\i*\xLayerWidth-0.25*\xLayerWidth,
                       -0.7-0.25*\xLayerWidth*\xyHalfSlope) ..
             (\i*\xLayerWidth-\xGateHalfWidth,
                       -0.7-0.5*\xLayerWidth*\xyHalfSlope+\xGateHalfWidth*\xyHalfSlope);
}                           

\foreach \y in {0,...,7}
   \draw (0.5*\xLayerWidth-\xGateHalfWidth-\xWirePad,
                  -0.7-0.5*\ygap*\y+\xGateHalfWidth*\xyHalfSlope
                            + \xWirePad*\xyHalfSlope-\xLayerWidth*\xyHalfSlope ) --
             (4.0*\xLayerWidth+\xGateHalfWidth+\xWirePad,
                  -0.7-0.5*\ygap*\y-4.5*\xLayerWidth*\xyHalfSlope-\xGateHalfWidth*\xyHalfSlope
                            -\xWirePad*\xyHalfSlope) ;
    
\foreach \i in {1,...,4}
{
   \draw [draw=black,fill=algoColor]  (\xLayerWidth*\i-\xGateHalfWidth,-1.65-0.5*\i*\ygap) rectangle (\xLayerWidth*\i+\xGateHalfWidth,0.2+\xGateHalfWidth);
   \node at (\xLayerWidth*\i,-0.5-0.25*\i*\ygap) {$U_\i$};
  \draw [top color=noiseColor!55!oracleColor, bottom color=oracleColor]   (\xLayerWidth*\i-0.5*\xLayerWidth-\xGateHalfWidth,-0.6) -- 
  (\xLayerWidth*\i-0.5*\xLayerWidth-\xGateHalfWidth,1.97) --
  (\xLayerWidth*\i-0.5*\xLayerWidth+\xGateHalfWidth,2.07) --
  (\xLayerWidth*\i-0.5*\xLayerWidth+\xGateHalfWidth,-0.7) -- cycle;
   \node at (\xLayerWidth*\i-0.5*\xLayerWidth,0.55) {$O_p$};
}

\node [left] at (0.1, 1.2) {$|\unif\>$};
\node [left] at (0.1,-0.65) {$|\psi^0\>$};
\draw [decorate,
    decoration = {calligraphic brace, amplitude=5pt}] (0.25,-1.65) --  (0.25,0.3);

\foreach \i in {0,...,4}
{
   \draw [thick, phiStateCol, opacity=0.35] (0.25*\xLayerWidth+\xLayerWidth*\i,-1.63-0.5*\i*\ygap) -- (0.25*\xLayerWidth+\xLayerWidth*\i,1.73+\xGateHalfWidth+\i*\ygap);
   \node [below, phiStateCol, opacity=0.9] at (0.1+0.25*\xLayerWidth+\xLayerWidth*\i,-1.63-0.5*\i*\ygap) {$|\phi_\i\>$};
}
    
\end{tikzpicture}
\captionsetup{font=small}
\captionsetup{width=0.9\textwidth}
\caption[my caption]{%
The extended computation with four extended oracle calls. The state of the overall memory remains pure throughout the computation, with $|\phi_t\>$ being the state just after the unitary $U_t$. The thick wires represent multiple qubits bundled in a single register, corresponding to sets $\rec$, $\Func$, or $[n]$. At the end of the computation, registers $\regFunc$ and $\regQi$ are measured, yielding $f_z$ and $x$, respectively, and the computation is successful if $z=x$.}
\label{fig:purifiedAlgo}
\end{figure}
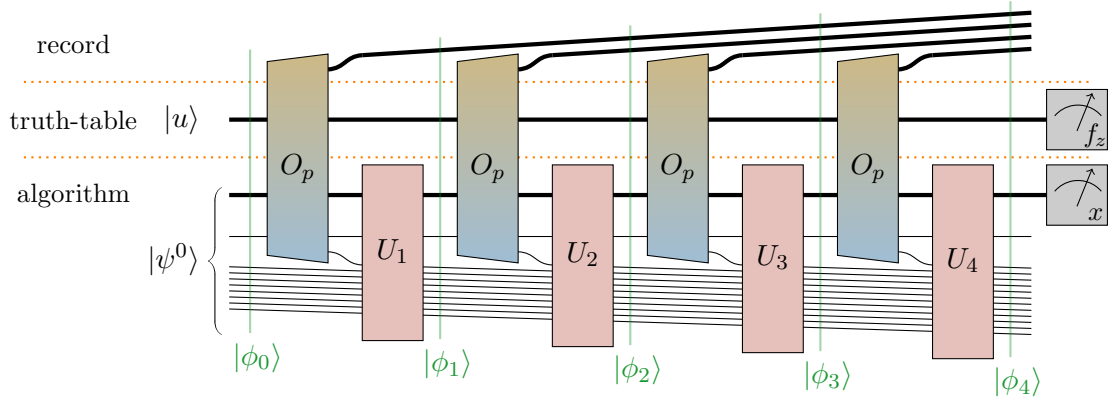

\section{Progress Measure}
\label{sec:progressMes}

In this section, we introduce a measure to quantify the progress of the computation towards solving the search problem. The value of this measure will be based on $|\phi_t\>$, the current state of the computation. Here we also present our main lemma, Lemma~\ref{lem:progEvol}, which addresses the initial and the final values of the progress measure, as well as bounds by how much a single oracle call can increase the progress measure. Then, theorem~\ref{thm:main} is a relatively straightforward corollary of Lemma~\ref{lem:progEvol}.

To start with, let us decompose the whole joint space of registers $\regFunc\regA\regR$ into three subspaces corresponding to the scenarios where, informally, 1) the query-input register has been dephased to a marked input by the classical oracle $O_C$, 2) the query-input register has not been dephased to the marked input, but the quantum oracle $O_Q$ has found the marked input, and 3) no progress has been made. When considering oracle calls in Section~\ref{sec:transit}, we will further decompose the second subspace.

\subsection{Progress-defining subspaces}

Let us consider the $n$-dimensional space corresponding to the register $\regFunc$, that is, the space 
spanned by $|f_x\>$. Recall that the initial state of the truth register is the uniform superposition $|\unif\>=\sum_{x\in[n]}|f_x\>/\sqrt{n}$.
Also recall the notation $n_R=n-|\inrec{R}|$.
Let us define the unit vector
\[
|\psi_{\inrec{R}}\>:= \sum_{x\in[n]\setminus \inrec{R}} |f_x\>/\sqrt{n_R},
\]
for which we have
\begin{equation}
\label{eq:fxPsiR}
\< f_x|\psi_{\inrec{R}}\> = 
\begin{cases}
1/\sqrt{n_R} & \text{if } x \in [n]\setminus\inrec{R}, \\
0 & \text{if } x \in \inrec{R}.
\end{cases}
\end{equation}
In turn, let us define projectors
\begin{align*}
&
\Pi^{\progC}_{t} := \sum_{R\in\rec^t} \sum_{z\in\inrec{R}}|f_z\>\<f_z| \otimes |R\>\<R|,
\\ &
\Pi^{\progB}_{t} := \sum_{R\in\rec^t} \Big(\sum_{z\in[n]\setminus\inrec{R}}|f_z\>\<f_z|-|\psi_{\inrec{R}}\>\<\psi_{\inrec{R}}|\Big) \otimes |R\>\<R|,
\\ &
\Pi^{\progA}_{t} := \sum_{R\in\rec^t} |\psi_{\inrec{R}}\>\<\psi_{\inrec{R}}| \otimes |R\>\<R|
\end{align*}
on registers $\regFunc\regR$, which act as the identity on the algorithm registers $\regA$. We denote the spaces corresponding to these projectors, that is, their images, by $\cH^\progC_t,\cH^\progB_t,\cH^\progA_t$, respectively. We might drop the subscript $t$ when it is clear from the context.

Similarly to~\cite{hamoudi:2022:tradeoffs}, we define the \emph{progress measure} as
\[
\Psi_t:=\|\Pi^\progC_{t}|\phi_t\>\|^2 + 3 \|\Pi^\progB_{t}|\phi_t\>\|^2,
\]
which is essentially an upper bound on the provisional success probability of the algorithm after $t$ oracle calls (see the second claim of Lemma~\ref{lem:progEvol}).
We elaborate on the choice for the scalar $3$ in front of $ \|\Pi^\progB_{t}|\phi_t\>\|^2$ in Remark~\ref{rmk:FourInProg}.

\subsection{Main lemma on the progress measure}

Here we state Lemma~\ref{lem:progEvol} and show how it easily leads to Theorem~\ref{thm:main}. Then, the rest of the paper is devoted to proving the lemma, its first two claims being relatively easy to show.

\begin{lem}
\label{lem:progEvol}
We have
\begin{subequations}
\begin{align}
& \Psi_0 = 0, \\
& q_{succ} \le \Psi_\tau +  \frac2{n-\tau}, \\
& \Psi_{t+1}-\Psi_{t} \le \frac{48}{p(n-t-1)}.
\end{align}
\end{subequations}
\end{lem}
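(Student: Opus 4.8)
\textbf{Proof plan for Lemma~\ref{lem:progEvol}.}

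\emph{The easy parts.} The first identity $\Psi_0 = 0$ follows immediately from the definitions: the initial state is $|\phi_0\> = |\unif\>\otimes|\psi^0\>$ with empty record, so $R$ is the empty string, $\inrec{R}=\emptyset$, $n_R = n$, and $|\psi_{\inrec{R}}\> = |\unif\>$. Hence $|\phi_0\>$ lies entirely in $\cH^\progA_0$, killing both $\Pi^\progC_0$ and $\Pi^\progB_0$. For the second inequality, I would observe that $\Pi_{succ} = \sum_x |f_x,x\>\<f_x,x|$ acts on $\regFunc\regQi$ only, and decompose $\Pi_{succ}|\phi_\tau\>$ according to the three progress subspaces $\cH^\progA_\tau\oplus\cH^\progB_\tau\oplus\cH^\progC_\tau$. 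On $\cH^\progC_\tau$ I use $\|\Pi_{succ}\Pi^\progC_\tau|\phi_\tau\>\| \le \|\Pi^\progC_\tau|\phi_\tau\>\|$; on $\cH^\progB_\tau$ I similarly bound by $\|\Pi^\progB_\tau|\phi_\tau\>\|$; and on $\cH^\progA_\tau$, the component in each record sector $R$ is proportional to $|\psi_{\inrec R}\>$, and $\<f_x|\psi_{\inrec R}\>$ is either $0$ or $1/\sqrt{n_R}$ by \eqref{eq:fxPsiR}, so the success-projected norm-squared in that sector is at most $1/n_R \le 1/(n-\tau)$ times the sector weight. Combining via $\|a\|^2+\|b\|^2+\|c\|^2 \le (\|a\|+\|b\|+\|c\|)^2$-type manipulations (or more carefully, treating the $\cH^\progA$ contribution as a small additive error and using that the $\cH^\progB,\cH^\progC$ pieces are already counted with coefficients $3$ and $1$) yields $q_{succ} = \|\Pi_{succ}|\phi_\tau\>\|^2 \le \|\Pi^\progC_\tau|\phi_\tau\>\|^2 + 3\|\Pi^\progB_\tau|\phi_\tau\>\|^2 + \frac{2}{n-\tau}$; the slack factor $3$ is exactly what absorbs the cross terms, which is presumably the content of Remark~\ref{rmk:FourInProg}.

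\emph{The main part: the per-query bound.} Since $U_t$ acts only on the algorithm registers $\regQ\regW$ and the subspaces $\cH^\progA_t, \cH^\progB_t, \cH^\progC_t$ are invariant under such operations, $\Psi_{t+1}$ is unchanged by applying $U_{t+1}$; so it suffices to bound $\|\Pi^\progC_{t+1}O_p|\phi_t\>\|^2 + 3\|\Pi^\progB_{t+1}O_p|\phi_t\>\|^2 - \|\Pi^\progC_t|\phi_t\>\|^2 - 3\|\Pi^\progB_t|\phi_t\>\|^2$. I would first decompose $|\phi_t\>$ across record sectors $R\in\rec^{t}$ and across the three progress subspaces, writing $|\phi_t\> = |\phi_t^\progA\> + |\phi_t^\progB\> + |\phi_t^\progC\>$. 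The plan is then to track where each piece goes under $O_p = \sqrt{1-p}\,O_Q + \sqrt p\,O_C$ and the new projectors $\Pi^\bullet_{t+1}$ (which now involve the appended symbol $r_{t+1}\in\{\bot\}\cup[n]$). The key structural facts to exploit: (i) $O_C$, on the $\sqrt p$ branch, dephases $\regQi$ to $|x\>$ and records $x$, so it can move weight into $\cH^\progC_{t+1}$ only when $x$ is marked for $f_x$ — i.e. only on the diagonal $|f_x,x\>$ — and the amplitude there is governed by how much of $|\phi_t\>$ already "points at" $|f_x\>$, which in the $\cH^\progA$ sector is again controlled by the $1/\sqrt{n_R}$ factor; (ii) $O_Q$ preserves the record, appends $\bot$, and acts as the controlled reflection $I - 2|f_x\>\<f_x|\otimes|x,-\>\<x,-|$, so it can only increase $\|\Pi^\progB|\phi\>\|$ by transferring the $\cH^\progA$ component $|\psi_{\inrec R}\>$ (amplitude $1/\sqrt{n_R}$ on each relevant $f_x$) into the orthogonal complement of $|\psi_{\inrec R}\>$ inside $\mathrm{span}\{|f_x\>: x\notin\inrec R\}$. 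In both cases the "gain" per sector scales like $1/\sqrt{n_R}$ times the relevant amplitude, so the gain in $\Psi$ scales like $1/n_R \le 1/(n-t-1)$; the factor $1/p$ appears because one must divide by $\sqrt p$ when the algorithm tries to route progress through the low-probability classical branch — i.e., an adversary-style trade-off showing that making large $\cH^\progC$ progress forces a $\sqrt p$-suppressed amplitude, hence at most $O(1/p)$ weight overall.

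\emph{Where the real difficulty lies.} The honest obstacle, as the Techniques section flags, is that $\cH^\progB$ has to be further split as $\cH^{\progB,\actv}\oplus\cH^{\progB,\pasv}$, and these are \emph{not} invariant under the algorithm's unitaries, so the clean "only $O_p$ matters" reduction only partially applies and one must argue carefully about how $U_t$ can shuffle weight between active and passive parts while $O_p$ then either progresses the active part (toward $\cH^\progB$) or corrupts it (via the dephasing noise recording a fresh $x$). The crux is a quantitative "can't have your cake and eat it" inequality: any unit of amplitude in the active subspace that $O_Q$ converts into genuine $\cH^\progB$ progress is, with probability $p$, instead hit by $O_C$ and dumped into a recorded-$x$ sector in a way that does \emph{not} help (and pushes weight back toward $\cH^\progA_{t+1}$ in the next round, since $|\psi_{\inrec{(R\app x)}}\>$ re-absorbs it). Making this precise — presumably via a triangle-inequality / Cauchy–Schwarz bookkeeping over the finitely many sectors, bounding $\|\Pi^\progC_{t+1}O_C|\phi_t\>\|$ and $\|\Pi^\progB_{t+1}O_Q|\phi_t\>\|$ each by something like $\sqrt{\frac{c}{n-t-1}}\cdot(\text{norm of active part}) + \|\Pi^\progB_t|\phi_t\>\|$ and then combining with the $\sqrt p/\sqrt{1-p}$ weights — and chasing the constants to land at $48$, is the heart of the argument and the step I expect to occupy Sections~\ref{sec:transit}--\ref{sec:boundProg}. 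I would set up the sector decomposition and the active/passive split first, prove the two "one-step transfer" bounds as separate claims, and only then assemble them with the probabilistic weights.
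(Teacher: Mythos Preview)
Your treatment of the first two claims is essentially the paper's; one small correction is that Remark~\ref{rmk:FourInProg} is not about absorbing cross terms in the second claim but about why the coefficient on $\|\Pi^\progB|\phi_t\>\|^2$ in $\Psi_t$ must strictly exceed a certain coefficient arising in the third claim.

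For the third claim you have the right structural ingredients (the active/passive split, the transition claims for $O_Q$ and $O_C$), but your account of \emph{why} the bound is $O(1/(p n_t))$ is off, and the approach you sketch would not produce the $1/p$. The factor $1/p$ does \emph{not} come from ``dividing by $\sqrt p$'' on the classical branch. The actual mechanism is this: the classical oracle $O_C$ completely evacuates $\cH^{\progB,\actv}$ (Claim~\ref{clm:OQonAct}), sending that weight to $\cH^\progC\cup\cH^\progA$. Consequently, in the $p$-branch the surviving $\Pi^\progB$-weight is only $\|\Pi^{\progB,\pasv}_t|\phi_t\>\|^2$, while in the $(1-p)$-branch it is at most $(\|\Pi^{\progB,\actv}_t|\phi_t\>\|+2/\sqrt{n_t})^2+\|\Pi^{\progB,\pasv}_t|\phi_t\>\|^2$. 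When you subtract $\Psi_t=\|\Pi^\progC_t|\phi_t\>\|^2+3\|\Pi^{\progB,\actv}_t|\phi_t\>\|^2+3\|\Pi^{\progB,\pasv}_t|\phi_t\>\|^2$, the coefficient of $\|\Pi^{\progB,\actv}_t|\phi_t\>\|^2$ becomes $-(3-2p)+3(1-p)=-p$, i.e.\ strictly negative. What remains is a downward-opening quadratic in $\|\Pi^{\progB,\actv}_t|\phi_t\>\|$ with leading coefficient $-p$ and linear coefficient $O(1/\sqrt{n_t})$; maximizing it gives $O(1/(p n_t))$, and chasing constants yields $48$. So the ``can't have your cake and eat it'' is not a separate inequality to be proved but a consequence of this sign: shielding progress in $\cH^{\progB,\pasv}$ gains nothing, while keeping it in $\cH^{\progB,\actv}$ costs you a $-p$ quadratic penalty.

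One more point: your worry that $U_t$ shuffles weight between active and passive is a non-issue for the argument's structure. The progress measure $\Psi_t$ is defined via $\Pi^\progB_t$ (unsplit), which \emph{is} invariant under $U_t$; the active/passive decomposition is applied only to the state $|\phi_t\>$ entering $O_p$, and the bound on $\Psi_{t+1}-\Psi_t$ holds uniformly over that decomposition.
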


\begin{proof}[Proof of Theorem~\ref{thm:main} given Lemma~\ref{lem:progEvol}]
From Lemma~\ref{lem:progEvol}, we see that the success probability $q_{succ}$ is at most
\[
\frac{2}{n-\tau} + \frac{48}{p}\sum_{t=1}^\tau \frac{1}{n-t} 
\le \frac{2}{n-\tau} + \frac{48\tau}{p(n-\tau)} 
= \frac{2p+48\tau}{p(n-\tau)}
\]
Since we want the success probability to be at least $1-\epsilon$, we thus get
\[
\tau
\ge \frac{p(n(1-\epsilon)-2)}{48+p(1-\epsilon)}
\ge \frac{pn(1-\epsilon)}{49}-1.
\qedhere
\]
\end{proof}

Now it is left to prove Lemma~\ref{lem:progEvol}.
Its first claim is trivial, because initially the record is empty and for the empty record $R$ we have $\inrec{R}=\emptyset$ and $|\psi_{\emptyset}\>=|\unif\>$, which is the initial state of the truth register.

 Let us now prove the second claim of the lemma, while the proof of the final claim is much more involved, and we leave it to Sections~\ref{sec:transit} and \ref{sec:boundProg}.

\bigskip

\begin{proof}[Proof of the second claim of Lemma~\ref{lem:progEvol}]
To prove the second claim of Lemma~\ref{lem:progEvol}, recall 
$\Pi_{succ} = \sum_{z\in[n]}|f_z\>\<f_z|\otimes|z\>\<z|$.
It can be easily seen that $\Pi_{succ}$ commutes with both $\Pi^\progC_\tau$ and $\Pi^\progB_\tau+\Pi^\progA_\tau$, all being diagonal in the computational basis (also see Claim~\ref{clm:nonAlter} below). Hence
\begin{align*}
q_{succ} = \, & \|\Pi_{succ}\Pi^\progC_\tau|\phi_\tau\>\|^2 + \|\Pi_{succ}(\Pi^\progB_\tau+\Pi^\progA_\tau)|\phi_\tau\>\|^2
\\ \le \, & \|\Pi^\progC_\tau|\phi_\tau\>\|^2 + 
\big(
\|\Pi^\progB_\tau|\phi_\tau\>\|
+ \|\Pi_{succ}\Pi^\progA_\tau\|
\big)^2
\\ \le \, & \|\Pi^\progC_\tau|\phi_\tau\>\|^2 + 
2\|\Pi^\progB_\tau|\phi_\tau\>\|^2
+ 2\|\Pi_{succ}\Pi^\progA_\tau\|^2
\\ \le \, & \Psi_\tau + 2\|\Pi_{succ}\Pi^\progA_\tau\|^2,
\end{align*}
where we have used $\|\Pi_{succ}\|=1$ and $\||\phi_\tau\>\|=1$ for the first inequality.
To conclude, we have
\[
\|\Pi_{succ}\Pi^\progA_\tau\| = \max_{\substack{R\in\rec^\tau\\z\in[n]}}
\||f_z\>\<f_z|\psi_{\inrec{R}}\>\<\psi_{\inrec{R}}|\| = 1/\sqrt{n-\tau}.
\qedhere
\]
\end{proof}

\section{Transitions Among Progress-defining Subspaces}
\label{sec:transit}

Here we first decompose $\cH^\progB_{t}$ as a direct sum $\cH^{\progB,\actv}_{t}\oplus\cH^{\progB,\pasv}_{t}$, and then we provide various claims that will serve as basis of proving Lemma~\ref{lem:progEvol}.

\subsection{Active and passive intermediate subspaces}
\label{sec:ActAndPas}

It is important to note that the progress measure $\Psi_t$ is not affected by any operations on the algorithm registers alone, in particular, unitaries $U_t$. That is because $\Pi^\progC_{t}$, $\Pi^\progB_{t}$, $\Pi^\progA_{t}$ all act as the identity on the algorithm registers. However, when analyzing how $\Psi_t$ evolves under oracle calls, it is useful to decompose $\Pi^\progB_{t}$ further, this decomposition involving query input register as well.

 For $R\in\rec^*$ and $x\in [n]\setminus\inrec{R}$, let us define the approximation of $|f_x\>$ with respect to $R$ as the unit vector
\begin{align*}
|\tilde{f}_{x,R}\> := \,&
 \sqrt{\frac{n_R-1}{n_R}}|f_x\> - \frac1{\sqrt{n_R(n_R-1)}}\sum_{\substack{x'\in[n]\setminus\inrec{R}\\x'\ne x}}|f_{x'}\>,
 \end{align*}
 which we can also rewrite as 
 \begin{align*}
|\tilde{f}_{x,R}\>  = \, & \sqrt{\frac{n_R-1}{n_R}}|f_x\> - |\psi_{\inrec{(R\app x)}}\>
 \frac1{\sqrt{n_R}}
=
  \frac{\sqrt{n_R}|f_x\>-|\psi_{\inrec{R}}\>}{\sqrt{n_R-1}}.
\end{align*}
 We note that $\<\psi_{\inrec{R}}|\tilde{f}_{x,R}\>=0$, $\<\psi_{\inrec{(R\app x)}}|f_{x}\>=0$, and
 \begin{equation}
 \label{eq:fxANDfxapprox}
|\psi_{\inrec{R}}\>\<\psi_{\inrec{R}}|+|\tilde{f}_{x,R}\>\<\tilde{f}_{x,R}|
=
|\psi_{\inrec{(R\app x)}}\>\<\psi_{\inrec{(R\app x)}}|+|f_x\>\<f_x|.
 \end{equation}
 Also note that $|\tilde{f}_{x,R}\>$ is the same for all $R$ with the same $\inrec{R}$.

For every $t$, let us decompose
\[
\Pi^\progB_{\regFunc\regR}\otimes I_\regQi = \Pi^{\progB,\actv}_{\regFunc\regR\regQi}+\Pi^{\progB,\pasv}_{\regFunc\regR\regQi},
\]
where
\begin{align*}
& \Pi^{\progB,\actv}_{t} := \sum_{R\in\rec^t} \sum_{x\in [n]\setminus\inrec{R}}
|\tilde{f}_{x,R},R,x\>\<\tilde{f}_{x,R},R,x|, \\
& \Pi^{\progB,\pasv}_{t} := \sum_{R\in\rec^t}
\Big[
\sum_{x\in [n]\setminus\inrec{R}}
\Big(\sum_{z\in[n]\setminus\inrec{(R\app x)}}|f_z\>\<f_z|
-|\psi_{\inrec{(R\app x)}}\>\<\psi_{\inrec{(R\app x)}}|\Big)
\otimes |R,x\>\<R,x|
\\ & \qquad\qquad\qquad\qquad
 + \sum_{x\in\inrec{R}}
\Big(\sum_{z\in[n]\setminus\inrec{R}}|f_z\>\<f_z|
-|\psi_{\inrec{R}}\>\<\psi_{\inrec{R}}|\Big)
\otimes |R,x\>\<R,x|\Big]
\\ & \qquad\quad
= \sum_{R\in\rec^t}
\sum_{x\in [n]}
\Big(\sum_{z\in[n]\setminus\inrec{(R\app x)}}|f_z\>\<f_z|
-|\psi_{\inrec{(R\app x)}}\>\<\psi_{\inrec{(R\app x)}}|\Big)
\otimes |R,x\>\<R,x|.
\end{align*}
It is easy to see that $ \Pi^{\progB,\actv}_{t} $ and $ \Pi^{\progB,\pasv}_{t} $ are orthogonal projectors and, using (\ref{eq:fxANDfxapprox}), that they indeed sum up to $ \Pi^{\progB}_{t} $.
We call the subspace $\cH^{\progB,\actv}_{t}$ corresponding to $\Pi^{\progB,\actv}_{t}$ the \emph{active subspace} and  the subspace $\cH^{\progB,\pasv}_{t}$ corresponding to $\Pi^{\progB,\pasv}_{t}$ the \emph{passive subspace}.

\subsection{Intuition behind the bound}
\label{sec:intuit}

Before Sections~\ref{sec:non-alterability}--\ref{sec:escaping-no-progress}, where we establish rigorous claims on how the classical and the quantum oracle calls can change the overlap of the current state of the memory on spaces $\cH^{\progA}$, $\cH^{\progB,\actv}$, $\cH^{\progB,\pasv}$, and  $\cH^{\progC}$---the claims that we use to prove Lemma~\ref{lem:progEvol} in Section~\ref{sec:boundProg}---let us try to obtain some intuition behind the proof of the bound.
This section, Section~\ref{sec:intuit}, is rather vague and expresses author's intuition behind the result, the intuition that was partially formed before the completion of the result.
This intuition will hopefully illuminate the motivation behind the upcoming claims, which we tie together in~Section~\ref{sec:boundProg}.

\bigskip

Up to a negligible additive term, the progress measure $\Psi_t$ is an upper bound on the success probability of the computation (the second claim of Lemma~\ref{lem:progEvol}), and, by inspecting the definition of the progress measure, we can see that, in order for the computation to succeed, one has to gradually transfer the state of the system from $\cH^{\progA}$ to $\cH^{\progB} \oplus \cH^{\progC}$. More quantitively, one has to reduce the \emph{probability (weight)} $\|\Pi^{\progA}|\phi\>\|^2$ on $\cH^{\progA}$ or, equivalently, the \emph{amplitude} $\|\Pi^{\progA}|\phi\>\|$ on $\cH^{\progA}$, where $|\phi\>$ is the state of the overall system.

The probability weights on $\cH^{\progA}$, $\cH^{\progB,\actv}$, $\cH^{\progB,\pasv}$, $\cH^{\progC}$ sum up to $1$, and thus we could easily formalize the concept of transferring the probability weight from one subspace to another. However, to understand the dynamics of the computation and how the overall state gets transferred from one subspace to another, the probability weight might not be the most instructive concept to consider.
 
As an example, consider the vanilla Grover's algorithm in the noiseless setting (i.e., $p=0$). In this case, only the spaces $\cH^{\progA}$ and $\cH^{\progB,\actv}$ are utilized, and we might be interested in inspecting $\|\Pi^{\progB,\actv}_{t+1}|\phi_{t+1}\>\|^2-\|\Pi^{\progB,\actv}_{t}|\phi_{t}\>\|^2$, the probability weight transferred ``towards the success'' by a single oracle call. However, the value of this ``probability gain'' changes notably form one oracle call to another---it increases about linearly in $t$---but, when proving lower bounds, it is often easier to focus on quantities whose change is bounded by the same value, irrespective from the number of a given oracle call. Hence, it may be more useful to inspect the ``state transfer'' $\Pi^{\progB,\actv}_{t+1} O_Q\Pi^{\progA}_{t}|\phi_{t}\>$.%
\footnote{We could think of this state transfer as $\Pi^{\progB,\actv}_{t+1}|\phi_{t+1}\>-\Pi^{\progB,\actv}_{t}|\phi_{t}\>$, but it might be a bit misleading, because this latter quantity can be made artificially large, say, by introducing the global phase $-1$. Also, the notation $\Pi^{\progB,\actv}_{t+1} O_Q\Pi^{\progA}_{t}|\phi_{t}\>$ better expresses from which space to which space the state gets transferred.}
That is because Grover's algorithm maintains the norm of this state transfer about the same for every oracle call, $2/\sqrt{n}$, and, in fact, this norm can by shown%
\footnote{Essentially, following the proof of Claim~\ref{clm:OQonPiA}, but with the record always remaining empty.}
 to be at most $2/\sqrt{n}$ within any (noiseless) algorithm. In addition, for Grover's algorithm, all these state transfers are parallel and thus their norms add up, which is the reason why the success probability of Grover's algorithm scales quadratically in the total number of queries.

\bigskip

Now, let us again look at the noisy scenario, and consider the following ideas for an algorithm. Suppose we have run an instance of Grover's algorithm for some number of queries, we have been lucky and so far no error has occurred, and we have generated some partial success, meaning that, we have increased $\| \Pi^{\progB}|\phi\>\|$. Before our luck runs out, we may wish to safeguard this partial progress form the noise, so the question is: Can we do that?
Yes, we indeed can: we can copy the state from the query register $\regQ$, which is the only register affected by the noise, to somewhere within the workspace register $\regW$, and let it rest there, unaffected by noise.
Then, as the next step, we could again initialize the query registers in a uniform superposition $\frac1{\sqrt n}\sum_x{|x\>_\regQi|-\>_\regQo}$ and start running a fresh instance of Grover's algorithm. This way, each query will again yield the state transfer $\Pi^{\progB}_{t+1} O_p\Pi^{\progA}_{t}|\phi_{t}\>$ of norm about $2/\sqrt{n}$, while having the comfort of having some progress ``saved'' in the noiseless workspace $\regW$.

So, it might seem like we can get the best of the both worlds: we have accumulated and  preserved significant amount of success amplitude $\|\Pi^\progB|\phi\>\|$ in the workspace $\regW$ and we are also continuing to add to $\cH^{\progB}$ fresh state transfer of norm $2/\sqrt{n}$, so it may seem like the success probability, which is the square of the ``success amplitude'', might increase with an accelerating rate, just like in the noiseless Grover's algorithm. Such an approach may seem even more promising when we have flag bits that signal occurrences of the error.

However, we \emph{cannot} get the best of the both worlds: as it turns out, the state transfers freshly added to $\cH^\progB$ are orthogonal to the ``partial success state'' already saved in the workspace. And, hence, the norms of these two states do not add up (as if they were parallel); instead, their squares do, so the increase in the success probability is of a much smaller order.

This is essentially the reasoning for decomposing $\cH^{\progB}$ as $\cH^{\progB,\actv}\oplus\cH^{\progB,\pasv}$. The space $\cH^{\progB,\actv}$ is where we can ``actively'' increase the progress (by transferring some probability weight from $\cH^{\progA})$ and the space which is affected by the noise, while its orthogonal complement within $\cH^{\progB}$ is where we can passively store some so-far-acquired progress, it being safe from the noise.


\definecolor{GreenOq}{cmyk}{0.7, 0, 0.9, 0.3}
\definecolor{RedOc}{rgb}{1,0,0}
  
  \def\hgap{0.1}
  \begin{figure}[!h]
\centering
\begin{tikzpicture}

     \coordinate (A) at (-3,0.8);
     \draw (A) circle (1-2*\hgap);
     \draw [dashed, blue] (A) circle (1);
     \node [] at (A) {$\cH^{\progA}$};

  \coordinate (A)   at (3,0.8);
     \draw (A) circle (1-1.5*\hgap);
     \draw [dashed, blue] (A) circle (1+0.5*\hgap);
     \node [] at (A) {$\cH^{\progC}$};

       \draw [dashed, blue] (0,0) ellipse (1 and 2);
       \draw [] (0,0) ellipse ({1-2*\hgap} and {2-2*\hgap} );
       \draw [] ({-1+4*\hgap},\hgap) -- ({1-4*\hgap},\hgap);
       \draw [] ({-1+4*\hgap},-\hgap) -- ({1-4*\hgap},-\hgap);
       
       \filldraw [white] ({-1+\hgap},{-3*\hgap+0.01}) rectangle ({-1+3*\hgap},{3*\hgap-0.01});
       \filldraw [white] ({1-3*\hgap},{-3*\hgap+0.01}) rectangle ({1-\hgap},{3*\hgap-0.01});

       \draw [] ({-1+4*\hgap} ,\hgap) .. controls 
       ({-1+2.5*\hgap} ,\hgap) and 
       ({-1+2*\hgap} ,1.5*\hgap) ..  
       ({-1+2*\hgap+0.01} ,{3*\hgap});

       \draw [] ({1-4*\hgap} ,\hgap) .. controls 
       ({1-2.5*\hgap} ,\hgap) and 
       ({1-2*\hgap} ,1.5*\hgap) ..  
       ({1-2*\hgap-0.01} ,{3*\hgap});
       
       \draw [] ({-1+4*\hgap} ,-\hgap) .. controls 
       ({-1+2.5*\hgap} ,-\hgap) and 
       ({-1+2*\hgap} ,-1.5*\hgap) ..  
       ({-1+2*\hgap+0.01} ,{-3*\hgap});

       \draw [] ({1-4*\hgap} ,-\hgap) .. controls 
       ({1-2.5*\hgap} ,-\hgap) and 
       ({1-2*\hgap} ,-1.5*\hgap) ..  
       ({1-2*\hgap-0.01} ,{-3*\hgap});
       \node [shift={(0.,0.)}] at (0,.9) {$\cH^{\progB,\actv}$};
       \node [shift={(0.,0.)}] at (0,-.7) {$\cH^{\progB,\pasv}$};
 
\draw [ultra thick, green!50!black, <->, shift={(-0.8,-1.5)}, rotate = -60]
 (0.15,0.5) .. controls (0.65,-0.5) and (-0.65,-0.5) .. (-0.15,0.5);
\draw [ultra thick, red, ->, shift={(0.8,-1.5)},rotate = 60]
 (0.15,0.5) .. controls (0.65,-0.5) and (-0.65,-0.5) .. (-0.15,0.5);
 \node [GreenOq, shift={(-0.45,-0.2)}] at (-0.8,-1.5){$1$};
\node [RedOc, shift={(0.45,-0.2)}] at (0.8,-1.5) {$1$};
\draw [ultra thick, RedOc, <-, shift={(3.8,1.55)}, rotate = 130]
 (0.15,0.5) .. controls (0.65,-0.5) and (-0.65,-0.5) .. (-0.15,0.5);
\draw [ultra thick, GreenOq, <->, shift={(3.8,0.05)},rotate = 50]
 (0.15,0.5) .. controls (0.65,-0.5) and (-0.65,-0.5) .. (-0.15,0.5);
\node [RedOc, shift={(0.4,0.3)}] at (3.8,1.55){$1$};
\node [GreenOq, shift={(0.4,-0.3)}] at (3.8,0.05) {$1$};
\draw [very thick, green!50!black, <->, shift={(-3.8,1.55)}, rotate = -130]
 (0.15,0.5) .. controls (0.65,-0.5) and (-0.65,-0.5) .. (-0.15,0.5);
\draw [very thick, red, ->, shift={(-3.8,0.05)},rotate = -50]
 (0.15,0.5) .. controls (0.65,-0.5) and (-0.65,-0.5) .. (-0.15,0.5); 
 \node [GreenOq, shift={(-0.35,0.45)}] at (-3.8,1.55){$\approx\! 1$};
\node [RedOc, shift={(-0.35,-0.45)}] at (-3.8,0.05) {$\approx\! 1$};
\draw [very thick, green!50!black, <->, shift={(1.02,0.5)},rotate = 90]
 (0.15,0.5) .. controls (0.65,-0.5) and (-0.65,-0.5) .. (-0.15,0.5);
 \node [GreenOq] at (1.45,0.9){$\approx\! 1$};

 \draw [red, ->, shift={(0,1.4)}]
 (-2.7,0) .. controls  (-1.1,1.3) and  (1.1,1.3) ..  (2.7,0);
  \node [RedOc] at (0,2.7){\small $\lesssim\! \tfrac{1}{\sqrt{n}}$};
\draw [GreenOq, <->, shift={(-1.5,0.6)}, rotate=-5]
 (-.9,0) .. controls  (-0.5,-0.4) and  (0.5,-0.4) ..  (.9,0);
 \draw [RedOc, <-, shift={(-1.6,0.3)}]
 (-1,0) .. controls  (-0.5,-0.5) and  (0.5,-0.5) ..  (1,0);
    \node [GreenOq] at (-1.48,0.7){\small $\lesssim\! \tfrac{2}{\sqrt{n}}$};
   \node [RedOc] at (-1.7,-0.45){\small $\lesssim\! \tfrac{1}{\sqrt{n}}$};
 \draw [very thick, red, ->, shift={(1.535,1.1)}, rotate = 8]
 (-1.15,0.4) .. controls  (-0.5,0.8) and  (0.5,0.6) ..  (1.05,0);
 \node [RedOc, shift={(0.4,0.2)}] at (0.6,1.7){$\approx\! 1$};

\node [gray!80!white, shift={(-3.1,2.)}, rotate=-75] at (0,0) {$\Rightarrow$};
\node [gray, shift={(-3.1,2.)}] at (-0.05,0.4) {start};

\node [gray] at (2.9,2.6) {\tiny orthogonal};
\node [gray] (Nim) at (2.9,2.4) {\tiny images};

\draw [gray!80!white] (Nim) -- (2.2,1.8);
\draw [gray!80!white] (Nim) -- (3.6,1.8);


\draw [GreenOq, <->, shift={(4,-1.25)}, line width=0.8pt] (-1.5,0.04) --  (-0.4,0.04);
\node [] at (4,-1.25) {$O_Q$};
\draw [RedOc, ->, shift={(4,-1.75)}, line width=0.8pt] (-1.5,0.04) --  (-0.4,0.04);
\node [] at (4,-1.75) {$O_C$};

\end{tikzpicture}
\captionsetup{font=small}
\captionsetup{width=0.9\textwidth}
\caption[my caption]{%
Possible state transfers between subspaces $\cH^{\progA},\cH^{\progB,\actv},\cH^{\progB,\pasv},\cH^{\progC}$ by oracle calls. More formally, the value by the arrow from a space $\cH^{\mathsf{label}_1}$ to a space $\cH^{\mathsf{label}_2}$ indicates the norm of $\Pi^{\mathsf{label}_2}_{t+1} O \Pi^{\mathsf{label}_1}_{t}$ where $t\ll n$ and $O$ is either $O_Q$ or $O_C$. 
Double-headed green arrows $\textcolor{GreenOq}{\longleftrightarrow}$ indicate the quantum oracle $O_Q$ and single-headed red arrows $\textcolor{RedOc}{\longrightarrow}$ the classical oracle $O_C$; the width of an arrow suggests the value of the norm. 
For quantum oracle calls, we have $\|\Pi^{\mathsf{label}_2}_{t+1} O_Q \Pi^{\mathsf{label}_1}_{t}\|=\|\Pi^{\mathsf{label}_1}_{t+1} O_Q \Pi^{\mathsf{label}_2}_{t}\|$ because, aside from introducing the flag $|\flagQ\>_{\regW^+}$ and the record $|\bot\>_{\regR_{t+1}}$, $O_Q$ is its own inverse.

\hspace{10pt}
The blue dashed lines enclose together spaces between which unitaries $U_t$ can make state transfers.
At the beginning of the computation, all the probability weight is on $\cH^{\progA}$. The images of $\Pi^{\progC}_{t+1} O_C \Pi^{\progC}_{t}$ and $\Pi^{\progC}_{t+1} O_C (\Pi^{\progB}_{t}+\Pi^{\progA}_{t})$ are orthogonal.}
\label{fig:InterspaceTransfers}
\end{figure}

We may ask:
What would happen if we didn't try to store away any previously acquired progress in the noise-unaffected subspace $\cH^{\progB,\pasv}$? On one hand side, we could have all the state transfers $\Pi^{\progB,\actv}_{t+1} O_p\Pi^{\progA}_{t}|\phi_{t}\>$ pointing in the same direction and thus their norms adding up.
On the other hand, a $p$-dependent fraction of the formerly acquired progress $\|\Pi^{\progB,\actv}_{t}|\phi_{t}\>\|$ would ``leak'' from the quantum-success subspace $\cH^{\progB,\actv}$ to the classical-success subspace $\cH^{\progC}$ (see Claim~\ref{clm:OQonAct} and Footnote~\ref{foot:AtoBact}). 
Ultimately, some kind of equilibrium would likely be reached, and the amount of the probability weight coming in from $\cH^{\progA}$ to $\cH^{\progB,\actv}$ would be equal to that leaking out from $\cH^{\progB,\actv}$ to $\cH^{\progC}$.

Here we are again talking about probability weight, because, whenever an oracle call (in particular, the classical component $O_C$ of $O_p$) transfers state from $\cH^{\progA}\oplus\cH^{\progB,\actv}$ to $\cH^{\progC}$, it transfers it to an orthogonal subspace of $\cH^{\progC}$. More formally, $\Pi^{\progC}_{t+1} O_p(\Pi^{\progA}_{t}+\Pi^{\progB,\actv}_{t})|\phi_{t}\>$ and $\Pi^{\progC}_{t+1} O_p\Pi^{\progC}_{t}|\phi_{t}\>$ are orthogonal. This is because all state transfers to $\cH^{\progC}$ effectively come with a time-stamp $t+1$ indicating when the dephasing on the marked element happened for the first time.


In Figure~\ref{fig:InterspaceTransfers}, we show how much ``norm'' gets transferred (or can get transferred) form one space to another by a single extended oracle call $O_p=\sqrt{1-p}O_Q+\sqrt{p}O_C$. The quantities displayed in the figure are due to Claims~\ref{clm:nonAlter}--\ref{clm:OConPiA}.

\subsection{Non-alterability of the record}
\label{sec:non-alterability}

Let us now establish various claims towards formalizing the intuitive argument made above.
To start with, observe that $\Pi^\progC$ commutes with $|f_z\>\<f_z|$ for every $z$, and thus so does $I-\Pi^\progC=\Pi^\progB+\Pi^\progA$. Now let us prove the following claim, which, informally speaking, states that a quantum oracle call cannot change whether a marked element is in the record, that a marked element in the record cannot be erased by a classical oracle call, and that states with orthogonal records remain orthogonal under oracle calls.

\begin{clm}
\label{clm:nonAlter}
We have $O_Q\Pi^{\progC}_t = \Pi^{\progC}_{t+1}O_Q$ and $O_C\Pi^{\progC}_t = \Pi^{\progC}_{t+1}O_C\Pi^{\progC}_t$, and the images of $\Pi^\progC_{t+1} O_C\Pi^\progC_t$ and $\Pi^\progC_{t+1} O_C(\Pi^\progB_t+\Pi^\progA_t)$ are orthogonal.
\end{clm}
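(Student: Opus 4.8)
The plan is to verify each of the three assertions directly from the explicit formulas for $O_Q$, $O_C$, and the projectors $\Pi^\progC_t$, working in the computational basis of $\regFunc\regR$ and (for $O_C$) of $\regQi$. Throughout I will use that $O_Q$ acts as $\sum_{z}|f_z\>\<f_z|\otimes O_{f_z}\otimes|\flagQ,\bot\>\<\cdot|$ (appending a fresh record symbol $\bot$), so it leaves the register $\regFunc$ untouched and enlarges the record only by the fixed symbol $\bot$, whence $\inrec{(R\app\bot)}=\inrec{R}$; and that $O_C=\sum_z|f_z\>\<f_z|\otimes\sum_{x}|x\>\<x|\otimes(\text{bit-flip on }\regQo)\otimes|\flagC,x\>\<\cdot|$, which again fixes $\regFunc$ but appends the symbol $x$ read off the query-input register, so $\inrec{(R\app x)}=\inrec{R}\cup\{x\}$.

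\textbf{First identity, $O_Q\Pi^\progC_t=\Pi^\progC_{t+1}O_Q$.} Here I would note that $\Pi^\progC_t=\sum_{R\in\rec^t}\sum_{z\in\inrec R}|f_z\>\<f_z|\otimes|R\>\<R|$ is diagonal in the $\regFunc\regR$ basis and acts as identity on $\regA$. Since $O_Q$ acts as identity on $\regFunc$ (up to the diagonal controls $|f_z\>\<f_z|$) and maps $|R\>\mapsto|R\app\bot\>$, conjugating $\Pi^\progC_t$ through amounts to replacing each $R$ by $R\app\bot$ and each condition $z\in\inrec R$ by $z\in\inrec{(R\app\bot)}=\inrec R$; every record in $\rec^{t+1}$ reachable by $O_Q$ ends in $\bot$, and on those $\Pi^\progC_{t+1}$ imposes exactly the same condition. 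So the two compositions agree on the image of $O_Q$, and both kill the orthogonal complement, giving equality as operators. (I would phrase this as: $\Pi^\progC_{t+1}O_Q=\sum_{z}\sum_{R\in\rec^t}[z\in\inrec R]\,|f_z\>\<f_z|\otimes O_{f_z}\otimes|\flagQ,R\app\bot\>\<R|=O_Q\Pi^\progC_t$.)

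\textbf{Second identity, $O_C\Pi^\progC_t=\Pi^\progC_{t+1}O_C\Pi^\progC_t$.} The nontrivial content is that $O_C$ never \emph{erases} a marked element from the record but can \emph{create} one. Expanding $\Pi^\progC_{t+1}O_C$ and using that $O_C$ appends the symbol $x$: for a basis state $|f_z,R,x\>$ (with $|R|=t$), $O_C$ sends it to $|f_z\>\otimes(\ldots)\otimes|R\app x\>$, and $\Pi^\progC_{t+1}$ keeps it iff $z\in\inrec{(R\app x)}=\inrec R\cup\{x\}$. Thus $\Pi^\progC_{t+1}O_C=O_C\,\big(\Pi^\progC_t+\Sigma\big)$ where $\Sigma$ is the projector onto basis states with $z=x\notin\inrec R$ — i.e. the piece of $\Pi^\progB_t\oplus\Pi^\progA_t$ that gets freshly ``timestamped''. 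Multiplying on the right by $\Pi^\progC_t$ kills $\Sigma$ (since $\Sigma$ and $\Pi^\progC_t$ are orthogonal: one requires $z\in\inrec R$, the other $z\notin\inrec R$), leaving $\Pi^\progC_{t+1}O_C\Pi^\progC_t=O_C\Pi^\progC_t$, which is the claim. The only mild subtlety is that $\Pi^\progB_t,\Pi^\progA_t$ are not diagonal in $\regFunc$ (they involve $|\psi_{\inrec R}\>$), but that is irrelevant here because I only compare $\Pi^\progC_{t+1}O_C$ against $O_C$ acting on \emph{all} of $\regFunc\regR$, and $O_C$ is diagonal on $\regFunc$.

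\textbf{Orthogonality of the images.} I would show $\big(\Pi^\progC_{t+1}O_C\Pi^\progC_t\big)^\dagger\big(\Pi^\progC_{t+1}O_C(\Pi^\progB_t+\Pi^\progA_t)\big)=0$ by inserting a resolution of $\Pi^\progC_{t+1}$ into records $R'\in\rec^{t+1}$ and splitting according to whether the last symbol $r_{t+1}$ equals the ``witness'' index $z$ or not. A basis vector in the image of $O_C\Pi^\progC_t$ has record $R\app x$ with $z\in\inrec R$ (the marked element was already present \emph{before} step $t+1$), whereas a basis vector in the image of $O_C(\Pi^\progB_t+\Pi^\progA_t)$, once projected by $\Pi^\progC_{t+1}$, necessarily has $z=x\notin\inrec R$ (the marked element appeared \emph{for the first time} at step $t+1$) — because on the $z\notin\inrec R$ part of $\regFunc$, $\Pi^\progB_t+\Pi^\progA_t$ acts as the identity, while on the $z\in\inrec R$ part it vanishes, and $\Pi^\progC_{t+1}O_C$ on that vanishing part contributes nothing. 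These two sets of basis vectors are disjoint (they differ in whether $r_{t+1}$ equals the last-occurring witness), so the corresponding subspaces of $\regFunc\regR$ are orthogonal, hence so are the images after tensoring with whatever happens on $\regA$. I expect this last part — carefully bookkeeping the ``first time the marked element is dephased'' timestamp — to be the only place requiring real care; the first two identities are essentially a change of summation variable $R\mapsto R\app\bot$ resp. $R\mapsto R\app x$ together with the observation $[z\in\inrec R]\le[z\in\inrec{(R\app x)}]$.
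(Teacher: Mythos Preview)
Your proposal is correct and follows essentially the same route as the paper: direct computation in the $|f_z,R\>$ basis using $\inrec{(R\app\bot)}=\inrec R$ for the first identity, $\inrec R\subseteq\inrec{(R\app x)}$ for the second, and the dichotomy $z\in\inrec R$ versus $z\notin\inrec R$ for the orthogonality (the paper likewise observes that $\Pi^\progB_t+\Pi^\progA_t$ is supported on $\{|f_z,R\>:z\notin\inrec R\}$). One minor remark: your intermediate decomposition $\Pi^\progC_{t+1}O_C=O_C(\Pi^\progC_t+\Sigma)$ is slightly slicker than the paper's treatment, since it gives the orthogonality claim almost for free---$O_C$ is an isometry and $\Pi^\progC_t\perp\Sigma$, so their images under $O_C$ are orthogonal---whereas the paper re-derives this by inspecting the span of the $\regFunc\regR$ components separately; your parenthetical about ``whether $r_{t+1}$ equals the last-occurring witness'' is a bit loose (the clean invariant is simply whether $z\in\inrec R$), but the substance is right.
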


\noindent
Note that we do not claim the equality of $\Pi^{\progC}_{t+1}O_C$ and $O_C\Pi^{\progC}_t$.

\begin{proof}
For the first claim,
since $O_Q$ appends $\bot$ to the record, yet $\inrec{R}=\inrec{(R\app\bot)}$, we have
\begin{align*}
O_Q\Pi^{\progC}_{t}
= &\, \sum_{R\in\rec^t} \sum_{z\in\inrec{R}} O_Q \big(|f_z\>\<f_z| \otimes |R\>\<R|\big)
\\ = &\, \sum_{R\in\rec^t} \sum_{z\in\inrec{R}} 
|f_z\>\<f_z| \otimes O_{f_z}\otimes|\flagQ,R\app\bot\>\<R|
\\ = &\, \sum_{\substack{R'\in\rec^{t+1}\\R'_{t+1}=\bot}} \sum_{z\in\inrec{R'}}  \big(|f_z\>\<f_z| \otimes |R'\>\<R'|\big) O_Q 
=
  \Pi^{\progC}_{t+1}O_Q,
\end{align*}
where we have used that $\<R'|O_Q=0$ whenever the last entry of the record $R'$ is not $\bot$.

For the second claim, because $|f_z\>\<f_z|$ commutes with $O_C$ and because the oracle call appends some symbol $r\in\rec$ to the record, we have
\begin{align*}
O_C\Pi^{\progC}_{t}
= &\, \sum_{R\in\rec^t} \sum_{z\in\inrec{R}} O_C\big(|f_z\>\<f_z| \otimes |R\>\<R|\big)
\\ = &\, \sum_{R\in\rec^t} \sum_{z\in\inrec{R}} \sum_{r\in\rec}
\big(|f_z\>\<f_z| \otimes |R\app r\>\<R\app r|\big) 
O_C\big(|f_z\>\<f_z| \otimes |R\>\<R|\big).
\end{align*}
For every $x\in[n]$, we clearly have $\inrec{(R\app x)}=\inrec{R}\cup\{x\}\supseteq\inrec{R}$.
Thus, for $R\in\rec^t$, $z\in\inrec{R}$, and $x\in[n]$, we have 
\[
\Pi^{\progC}_{t+1} \big(|f_z\>\<f_z| \otimes |R\app x\>\<R\app x|\big) = |f_z\>\<f_z| \otimes |R\app x\>\<R\app x|,
\]
which means that $O_C\Pi^{\progC}_{t}$, as expressed above, is unaffected when multiplied by $\Pi^{\progC}_{t+1}$ from the left.

For the final claim, if we look at the truth $\regFunc$ and the record $\regR$ registers of the image of $\Pi^{\progB}_{t}+\Pi^{\progA}_{t}$, it is spanned by vectors in form $|f_z,R\>$, where $R\in\rec^t$ and $z\in[n]\setminus\inrec{R}$. Hence, the image of $\Pi^{\progC}_{t+1} O_C (\Pi^{\progB}_{t}+\Pi^{\progA}_{t})$ restricted to those registers is spanned by vectors in form $|f_z,R\app z\>$ where $R\in\rec^t$ and $z\in[n]\setminus\inrec{R}$. 
On the other hand, the image of $\Pi^{\progC}_{t}$ restricted to $\regFunc\regR$ is spanned by vectors in form $|f_z,R\>$, where $R\in\rec^t$ and $z\in\inrec{R}$.  Hence, the image of $\Pi^{\progC}_{t+1} O_C \Pi^{\progC}_{t}$ restricted to $\regFunc\regR$ is spanned by vectors in form $|f_z,R\app x\>$ where $R\in\rec^t$,  $z\in\inrec{R}$, and $x\in[n]$. This concludes the proof.
\end{proof}

\subsection{The action of $O_Q$ and $O_C$ on the active and passive subspaces}

First we show that the passive subspace $ \cH^{\progB,\pasv}$ is invariant under both the quantum and the classical oracle calls.

\begin{clm}
\label{clm:OQonPas}
We have both $\cH^{\progB,\pasv}_{t+1} O_Q = O_Q \cH^{\progB,\pasv}_{t}$ and $\cH^{\progB,\pasv}_{t+1} O_C = O_C \cH^{\progB,\pasv}_{t}$, and thus $\cH^{\progB,\pasv}_{t+1} O_p = O_p \cH^{\progB,\pasv}_{t}$.
\end{clm}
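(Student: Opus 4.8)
The plan is to establish the two operator identities $O_Q\,\Pi^{\progB,\pasv}_{t}=\Pi^{\progB,\pasv}_{t+1}\,O_Q$ and $O_C\,\Pi^{\progB,\pasv}_{t}=\Pi^{\progB,\pasv}_{t+1}\,O_C$ — which, read as operators, are exactly the asserted equalities $\cH^{\progB,\pasv}_{t+1}O_Q=O_Q\cH^{\progB,\pasv}_{t}$ and $\cH^{\progB,\pasv}_{t+1}O_C=O_C\cH^{\progB,\pasv}_{t}$ — and then to deduce the $O_p$ statement from $O_p=\sqrt{1-p}\,O_Q+\sqrt p\,O_C$ by linearity. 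Before computing I would fix shorthand: for $S\subseteq[n]$ with $|S|<n$ write $P_S:=\sum_{z\in[n]\setminus S}|f_z\>\<f_z|-|\psi_S\>\<\psi_S|$ for the $\regFunc$-operator occurring in the definition of $\Pi^{\progB,\pasv}$, so that, in the combined form, $\Pi^{\progB,\pasv}_{t}=\sum_{R\in\rec^t}\sum_{x\in[n]}P_{\inrec{(R\app x)}}\otimes|R\>\<R|_{\regR}\otimes|x\>\<x|_{\regQi}$, acting as the identity on $\regQo$ and $\regW$. The two ingredients I would prepare are: (i) each $O_{f_w}$ is block-diagonal in the $\regQi$-basis and acts as the identity on $\regQo$ on every block $|x\>\<x|_{\regQi}$ with $x\ne w$; hence $O_{f_w}$ commutes with $|x\>\<x|_{\regQi}\otimes I_{\regQo}$, and $O_{f_w}(|x\>\<x|_{\regQi}\otimes I_{\regQo})=|x\>\<x|_{\regQi}\otimes I_{\regQo}$ unless $x=w$; and (ii) if $w\in S$ then $P_S|f_w\>=0$ and $\<f_w|P_S=0$, which is immediate from $\<f_w|\psi_S\>=0$ and $w\notin[n]\setminus S$.

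For the $O_Q$ identity I would write $O_Q$ with the record register reintroduced, $O_Q=\sum_{w\in[n]}\sum_{R\in\rec^t}|f_w\>\<f_w|\otimes(O_{f_w})_{\regQ}\otimes|\flagQ\>_{\regW^+}\otimes|R\app\bot\>\<R|_{\regR}$, and compose it with the above form of $\Pi^{\progB,\pasv}_{t}$: matching the $\regR$-ket $|R\>$ of the projector to the $\regR$-bra of $O_Q$ (so the record is the common $R$) and the two $\regQi$-labels (the common $x$), one obtains a triple sum over $R,x,w$ whose summand has $\regFunc$-part $|f_w\>\<f_w|\,P_{\inrec{(R\app x)}}$ and $\regQ$-part $O_{f_w}(|x\>\<x|_{\regQi}\otimes I_{\regQo})$. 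Since $x\in\inrec{(R\app x)}=:S$, fact (ii) annihilates the $w=x$ summand; every surviving $w$ has $x\ne w$, so by (i) its $\regQ$-part is simply $|x\>\<x|_{\regQi}\otimes I_{\regQo}$, and summing the $\regFunc$-parts gives $\big(\sum_{w\ne x}|f_w\>\<f_w|\big)P_S=(I_{\regFunc}-|f_x\>\<f_x|)P_S=P_S$, again by (ii). Hence $O_Q\,\Pi^{\progB,\pasv}_{t}=\sum_{R,x}P_{\inrec{(R\app x)}}\otimes|R\app\bot\>\<R|_{\regR}\otimes|x\>\<x|_{\regQi}\otimes I_{\regQo}\otimes|\flagQ\>_{\regW^+}$. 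Carrying out the mirror-image expansion of $\Pi^{\progB,\pasv}_{t+1}O_Q$ — now the length-$(t+1)$ record is $R\app\bot$, and the key point is $\inrec{((R\app\bot)\app x)}=\inrec{(R\app x)}$, so the same projector $P_{\inrec{(R\app x)}}$ reappears, the $\regFunc$-part being $P_S|f_w\>\<f_w|$ — facts (i) and (ii) apply verbatim and produce the same operator, which proves $O_Q\,\Pi^{\progB,\pasv}_{t}=\Pi^{\progB,\pasv}_{t+1}\,O_Q$. The argument for $O_C$ is word-for-word the same, the only changes being that $O_C$ appends the query-input value $x$ to the record instead of $\bot$ and uses the flag $\flagC$; since $\inrec{((R\app x)\app x)}=\inrec{(R\app x)}$ as well, the projectors still line up on both sides, and one gets $O_C\,\Pi^{\progB,\pasv}_{t}=\Pi^{\progB,\pasv}_{t+1}O_C=\sum_{R,x}P_{\inrec{(R\app x)}}\otimes|R\app x\>\<R|_{\regR}\otimes|x\>\<x|_{\regQi}\otimes I_{\regQo}\otimes|\flagC\>_{\regW^+}$. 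Combining with weights $\sqrt{1-p}$ and $\sqrt p$ gives the $O_p$ statement.

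I expect the only non-mechanical point — the real content of the claim — to be fact (ii) combined with the observation that $x$ always lies in $\inrec{(R\app x)}$. The summand $w=x$ is precisely the one on which $O_{f_w}$ acts nontrivially on the query-output register (and would thereby entangle $\regFunc$ with $\regQo$ and break the commutation), and this summand is killed by the projector $P_{\inrec{(R\app x)}}$; once it is removed, every remaining term is oblivious to whether $\bot$ or $x$ was appended to the record, because appending either leaves $\inrec{\cdot}$ unchanged, and that is exactly why the left- and right-hand sides coincide. Everything else is routine bookkeeping of registers and indices.
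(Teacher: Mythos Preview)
Your proof is correct and follows essentially the same route as the paper's: the key observation is your fact~(ii), that $P_{\inrec{(R\app x)}}|f_x\>=0$, which is exactly the paper's ``$\Pi^{\progB,\pasv}_t|f_x,x,R\>=0$'', and both arguments use it to kill the one summand on which the oracle acts nontrivially. The only cosmetic difference is that the paper first packages the controlled oracle as $I-2\sum_x|f_x,x,-\>\<f_x,x,-|$ and shows in one stroke that this commutes with $\Pi^{\progB,\pasv}_t$, whereas you expand both sides separately; the content is identical.
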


\begin{proof}
Recall that $O_{f_x}=I_{2n}-2|x,-\>\<x,-|$ and consider
\[
\sum_{x\in[n]}|f_x\>\<f_x|\otimes O_{f_x} 
= I-2\sum_{x\in[n]}|f_x,x,-\>\<f_x,x,-|.
\]
Note that $\Pi^{\progB,\pasv}_{t}|f_x,x,R\>=0$ for all $x\in[n]$, all $t$, and all $R\in\rec^t$, irrespective of whether $x\in\inrec{R}$. Hence,
\[
\Big(
\sum_{x\in[n]}|f_x\>\<f_x|\otimes O_{f_x} 
\Big)
\Pi^{\progB,\pasv}_{t} = \Pi^{\progB,\pasv}_{t}
=
\Pi^{\progB,\pasv}_{t}
\Big(
\sum_{x\in[n]}|f_x\>\<f_x|\otimes O_{f_x} 
\Big).
\]
This means that both
\begin{align*}
O_Q \Pi^{\progB,\pasv}_{t} 
& =\Big(I_\regQi\otimes|\flagQ,\bot\> \Big)\Pi^{\progB,\pasv}_{t}
=\Pi^{\progB,\pasv}_{t}\otimes|\flagQ,\bot\> 
\\ & =\Pi^{\progB,\pasv}_{t+1} \Big(I_\regQi\otimes|\flagQ,\bot\> \Big)
= \Pi^{\progB,\pasv}_{t+1} O_Q,
\end{align*}
which already establishes the claim for the quantum oracle, 
and
\begin{align*}
O_C \Pi^{\progB,\pasv}_{t} & =\Big(\sum_{x\in[n]}|x\>\<x|\otimes|\flagC,x\> \Big)\Pi^{\progB,\pasv}_{t}
\\ & = 
\sum_{R\in\rec^t}
\sum_{x\in [n]}
\Big(\sum_{z\in[n]\setminus\inrec{(R\app x)}}|f_z\>\<f_z|
-|\psi_{\inrec{(R\app x)}}\>\<\psi_{\inrec{(R\app x)}}|\Big)
\\ & \hspace{230pt}
\otimes |\flagC\>\otimes|R\app x,x\>\<R,x|.
\end{align*}

Now let us consider $\Pi^{\progB,\pasv}_{t+1}O_C$.
We can write
\begin{align*}
\Pi^{\progB,\pasv}_{t+1} 
 & = \sum_{R\in\rec^t}\sum_{r\in\rec}
\sum_{x\in [n]}
\Big(\sum_{z\in[n]\setminus\inrec{(R\app rx)}}|f_z\>\<f_z|
-|\psi_{\inrec{(R\app rx)}}\>\<\psi_{\inrec{(R\app rx)}}|\Big)
\\ & \hspace{245pt}
\otimes |R\app r,x\>\<R\app r,x|.
\end{align*}
 Because $O_C$ adds some $x'\in[n]$ to the record, $O_C$ eliminates vectors with $r=\bot$ in the above expression, and we have
\begin{multline*}
\Pi^{\progB,\pasv}_{t+1} O_C =
\Big[ \sum_{R\in\rec^t}
\sum_{x,x'\in [n]}
\Big(\sum_{z\in[n]\setminus\inrec{(R\app x'x)}}|f_z\>\<f_z|
-|\psi_{\inrec{(R\app x'x)}}\>\<\psi_{\inrec{(R\app x'x)}}|\Big)
\\
\otimes |\flagC\>\<\flagC|
\otimes |R\app x',x\>\<R\app x',x| \Big] O_C.
\end{multline*}
Because the content of the register $\regQi$ determines what $O_C$ adds to the record, we must have $x'=x$ above. Thus, because of $\inrec{(R\app x x)}=\inrec{(R\app x)}$, we get
\begin{align*}
\Pi^{\progB,\pasv}_{t+1} O_C & =
\Big[ \sum_{R\in\rec^t}
\sum_{x\in [n]}
\Big(\sum_{z\in[n]\setminus\inrec{(R\app x)}}|f_z\>\<f_z|
-|\psi_{\inrec{(R\app x)}}\>\<\psi_{\inrec{(R\app x)}}|\Big)
\\ & \hspace{180pt}
\otimes |\flagC\>\<\flagC|
\otimes |R\app x,x\>\<R\app x,x| \Big] O_C
\\ & =
\sum_{R\in\rec^t} \sum_{x\in [n]}
\Big(\sum_{z\in[n]\setminus\inrec{(R\app x)}}|f_z\>\<f_z|
-|\psi_{\inrec{(R\app x)}}\>\<\psi_{\inrec{(R\app x)}}|\Big)
\\ & \hspace{180pt}
\otimes|\flagC\>
\otimes |R\app x,x\>\<R,x| . \qedhere
\end{align*}
\end{proof}

While we do not need to claim anything regarding what happens if the quantum oracle call acts on the active subspace, $\cH^{\progB,\actv}$, we will need to use the following fact which says that the classical oracle completely ``removes'' everything from the active subspace.%
\footnote{The action of $Q_Q$ on $\cH^{\progB,\actv}$ illustrated in Figure~\ref{fig:InterspaceTransfers} comes from the observation that $Q_Q$ is essentially its own inverse.}

\begin{clm}
\label{clm:OQonAct}
We have $O_C\Pi^{\progB,\actv}_t=(\Pi^\progC_{t+1}+\Pi^\progA_{t+1})O_C\Pi^{\progB,\actv}_t$.
\end{clm}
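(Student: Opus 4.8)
Since the three projectors restrict, for each fixed record $R'\in\rec^{t+1}$, to the operators $\sum_{z\in\inrec{R'}}|f_z\>\<f_z|$, $\sum_{z\in[n]\setminus\inrec{R'}}|f_z\>\<f_z|-|\psi_{\inrec{R'}}\>\<\psi_{\inrec{R'}}|$ and $|\psi_{\inrec{R'}}\>\<\psi_{\inrec{R'}}|$ on $\regFunc$, and since $|\psi_{\inrec{R'}}\>$ lies in the span of $\{|f_z\>:z\notin\inrec{R'}\}$, we have $\Pi^\progA_{t+1}+\Pi^\progB_{t+1}+\Pi^\progC_{t+1}=I$ on $\regFunc\regR$ (and they act as the identity on $\regA$). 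Hence the claim $O_C\Pi^{\progB,\actv}_t=(\Pi^\progC_{t+1}+\Pi^\progA_{t+1})O_C\Pi^{\progB,\actv}_t$ is \emph{equivalent} to $\Pi^\progB_{t+1}O_C\Pi^{\progB,\actv}_t=0$. So the plan is to prove the latter.

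The image of $\Pi^{\progB,\actv}_t$ is spanned by vectors $|\tilde f_{x,R},R,x\>_{\regFunc\regR\regQi}\otimes|\eta\>_{\regQo\regW}$ over $R\in\rec^t$, $x\in[n]\setminus\inrec{R}$, and arbitrary $|\eta\>$; by linearity it suffices to treat $|\eta\>=|y\>_{\regQo}|w\>_\regW$ with $y\in\{0,1\}$ (the main text only needs the binary case $m=2$). I will use the form $|\tilde f_{x,R}\>=\sqrt{(n_R-1)/n_R}\,|f_x\>-|\psi_{\inrec{(R\app x)}}\>/\sqrt{n_R}$ given just above. Now $O_C$, on the branch with query input $x$, appends $x$ to the record, sets the flag to $\flagC$, and acts on $\regQo$ by $y\mapsto y\oplus f_z(x)$; it flips $\regQo$ on the branch $|f_x\>$ (since $f_x(x)=1$) and leaves $\regQo$ untouched on every branch $|f_{x'}\>$ with $x'\ne x$ (since $f_{x'}(x)=0$), and $|\psi_{\inrec{(R\app x)}}\>$ is supported precisely on such $x'$. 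Therefore
\[
O_C\big(|\tilde f_{x,R},R,x\>|y\>|w\>\big)
=|x\>_{\regQi}|w\>_{\regW}|\flagC\>_{\regW_+}|R\app x\>\otimes\Big(\sqrt{\tfrac{n_R-1}{n_R}}\,|f_x\>|y\oplus 1\>-\tfrac{1}{\sqrt{n_R}}\,|\psi_{\inrec{(R\app x)}}\>|y\>\Big).
\]

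Conditioned on the new record $R\app x$, the $\regFunc$-component of the first term is $|f_x\>$, which lies in $\cH^\progC_{t+1}$ because $x\in\inrec{(R\app x)}$, and the $\regFunc$-component of the second term is $|\psi_{\inrec{(R\app x)}}\>$, which lies in $\cH^\progA_{t+1}$; both are orthogonal to $\cH^\progB_{t+1}$, so $\Pi^\progB_{t+1}$ annihilates the whole vector. Since this holds for every spanning vector of the image of $\Pi^{\progB,\actv}_t$, we get $\Pi^\progB_{t+1}O_C\Pi^{\progB,\actv}_t=0$, i.e.\ the claim. I do not expect a genuine obstacle here: the only things needing care are the register bookkeeping (that $O_C$ writes the new symbol into the fresh subregister $\regR_{t+1}$, so the projectors $\Pi^\bullet_{t+1}$ "see" the record $R\app x$, not $R$, and that the flag goes into $\regW_+$) and the observation that the conclusion does not depend on the state $|\eta\>$ of the query-output register — which is exactly what makes the statement "$O_C$ removes everything from the active subspace" correct.
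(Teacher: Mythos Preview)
Your proof is correct and follows essentially the same approach as the paper: both compute $O_C$ on a spanning vector $|\tilde f_{x,R},R,x,y\>$ using the decomposition $|\tilde f_{x,R}\>=\sqrt{(n_R-1)/n_R}\,|f_x\>-|\psi_{\inrec{(R\app x)}}\>/\sqrt{n_R}$ and observe that the two resulting terms land in $\cH^\progC_{t+1}$ and $\cH^\progA_{t+1}$ respectively. Your preliminary remark that the statement is equivalent to $\Pi^\progB_{t+1}O_C\Pi^{\progB,\actv}_t=0$ is a harmless reformulation; the core computation is identical to the paper's.
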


\begin{proof}
The space $\cH^{\progB,\actv}_t$ is spanned by vectors in form $|\tilde{f}_{x,R},x,y,R\>$ (here we ignore the content of workspace registers),  where $R\in\rec^t$, $x \in [n]\setminus\inrec{R}$, and $y\in\{0,1\}$. We have
\begin{align*}
&
O_C |\tilde{f}_{x,R},x,y,R\>
\\
& \qquad = \sqrt{\frac{n_R-1}{n_R}}O_C|f_x,x,y,R\> 
- \frac1{\sqrt{n_R(n_R-1)}}\sum_{\substack{x'\in[n]\setminus\inrec{R}\\x'\ne x}}O_C|f_{x'},x,y,R\>
\\ 
& \qquad =  \sqrt{\frac{n_R-1}{n_R}}|f_x,x,\neg y,\flagC,R\app x\> 
- \frac1{\sqrt{n_R(n_R-1)}}\sum_{\substack{x'\in[n]\setminus\inrec{R}\\x'\ne x}}|f_{x'},x,y,\flagC,R\app x\>
\\ 
& \qquad =  \sqrt{\frac{n_R-1}{n_R}}|f_x,x,\neg y,\flagC,R\app x\> 
- \frac1{\sqrt{n_R}} |\psi_{\inrec{(R\app x)}}\>|x,y,\flagC,R\app x\>,
\end{align*}
where $\neg y:=y\oplus 1$.
The former vector is in $\cH^\progC_{t+1}$, while the latter is in $\cH^\progA_{t+1}$.%
\footnote{By slightly extending the proof, one can see that $\|\Pi^\progA_{t+1}O_C\Pi^{\progB,\actv}_t\|\le 1\sqrt{n-t}$, which, while unnecessary for our proof, is still illustrated in Figure~\ref{fig:InterspaceTransfers}.\label{foot:AtoBact}}
 \end{proof}

\subsection{Escaping the no-progress subspace}
\label{sec:escaping-no-progress}

Now let us show that, informally speaking, the oracle calls acting on states in $\cH^\progA$ can move them only to specific subspaces and can do it only partially, and we place limits on how much can be moved.

\begin{clm}
\label{clm:OQonPiA}
We have $\Pi^\progB_{t+1} O_Q\Pi^\progA_t = \Pi^{\progB,\actv}_{t+1} O_Q\Pi^\progA_t $, and its norm is 
$2\frac{\sqrt{n-t-1}}{n-t}$.
\end{clm}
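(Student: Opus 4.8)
The idea is to compute $\Pi^\progB_{t+1}O_Q$ explicitly on a spanning set of $\cH^\progA_t$ and read off both assertions at once. Since $O_Q$ acts block-diagonally with respect to the record register (it simply appends $\bot$) and both $\Pi^\progB_{t+1}$ and $\Pi^\progA_{t+1}$ respect the record structure, it suffices to work at a fixed $R\in\rec^t$. A spanning family for $\cH^\progA_t$ is given by the vectors $|\psi_{\inrec{R}}\>|R\>|x\>_\regQi|b\>_\regQo|w\>_\regW$ with $b\in\{+,-\}$ (using the Hadamard basis on $\regQo$, in which $O_{f_z}=I-2|z,-\>\<z,-|$ is diagonal) and $|w\>$ an arbitrary workspace basis state.

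\textbf{Step 1: action on basis states.} Using $O_Q=\sum_f|f\>\<f|\otimes O_f\otimes|\flagQ,\bot\>$ and $O_{f_z}|x,+\>=|x,+\>$, the case $b=+$ gives $|\psi_{\inrec{R}}\>|x,+\>|w\>|\flagQ,R\app\bot\>$, which lies in $\cH^\progA_{t+1}$ (recall $\inrec{(R\app\bot)}=\inrec{R}$ and $|\psi_{\inrec{R}}\>=|\psi_{\inrec{(R\app\bot)}}\>$) and is killed by $\Pi^\progB_{t+1}$. For $b=-$, one gets $\big(\sum_{z\in[n]\setminus\inrec{R}}\tfrac{(-1)^{[x=z]}}{\sqrt{n_R}}|f_z\>\big)|x,-\>|w\>|\flagQ,R\app\bot\>$; the $\regFunc$-factor equals $|\psi_{\inrec{R}}\>$ when $x\in\inrec{R}$ (again killed by $\Pi^\progB_{t+1}$) and equals $|\psi_{\inrec{R}}\>-\tfrac2{\sqrt{n_R}}|f_x\>$ when $x\in[n]\setminus\inrec{R}$. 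In the last case, applying $\Pi^\progB_{t+1}$ projects out the $|\psi_{\inrec{R}}\>$ component, and using $\<\psi_{\inrec{R}}|f_x\>=1/\sqrt{n_R}$ from \eqref{eq:fxPsiR} together with the identity $|f_x\>-\tfrac1{\sqrt{n_R}}|\psi_{\inrec{R}}\>=\tfrac{\sqrt{n_R-1}}{\sqrt{n_R}}|\tilde f_{x,R}\>$ yields $-\tfrac{2\sqrt{n_R-1}}{n_R}|\tilde f_{x,R}\>|x,-\>|w\>|\flagQ,R\app\bot\>$. Since $\inrec{(R\app\bot)}=\inrec{R}\not\ni x$ and $|\tilde f_{x,R}\>$ depends only on $\inrec{R}$, this vector lies in $\cH^{\progB,\actv}_{t+1}$. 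This proves the first assertion $\Pi^\progB_{t+1}O_Q\Pi^\progA_t=\Pi^{\progB,\actv}_{t+1}O_Q\Pi^\progA_t$.

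\textbf{Step 2: the norm.} For a general unit vector $|\Phi\>=\sum_{R\in\rec^t}|\psi_{\inrec{R}}\>|R\>|\xi_R\>\in\cH^\progA_t$, expand $|\xi_R\>=\sum_{x,b}|x\>_\regQi|b\>_\regQo|\xi_{R,x,b}\>_\regW$; by Step~1 only the components with $b=-$ and $x\in[n]\setminus\inrec{R}$ survive, and the surviving images $|\tilde f_{x,R},R\app\bot,x\>\otim(\cdots)$ are mutually orthogonal for distinct $(R,x)$ (orthogonal records, orthogonal $\regQi$-states). Hence $\|\Pi^\progB_{t+1}O_Q|\Phi\>\|^2=\sum_R\tfrac{4(n_R-1)}{n_R^2}\sum_{x\in[n]\setminus\inrec{R}}\|\xi_{R,x,-}\|^2$. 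Since $g(m):=4(m-1)/m^2$ is decreasing for $m\ge2$ and $n_R\ge n-t$ for $R\in\rec^t$, each coefficient is at most $g(n-t)=4(n-t-1)/(n-t)^2$, so the whole expression is at most $4(n-t-1)/(n-t)^2$, using $\sum_R\sum_x\|\xi_{R,x,-}\|^2\le1$. Equality is attained by taking $|\Phi\>$ supported on a single record $R$ with $|\inrec{R}|=t$ and a single $x\notin\inrec{R}$ with output $|-\>$, giving operator norm exactly $2\sqrt{n-t-1}/(n-t)$.

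\textbf{Main obstacle.} The only delicate points are bookkeeping — carrying the record/flag labels along correctly so that the "record-respecting" block structure is visible, and recognizing that the surviving image automatically lands in $\cH^{\progB,\actv}_{t+1}$ (rather than just $\cH^\progB_{t+1}$) via the $|\tilde f_{x,R}\>$ identity. The monotonicity of $g$ on $[2,\infty)$ (its derivative is $4(2-m)/m^3\le0$) handles the optimization cleanly.
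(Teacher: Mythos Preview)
Your proof is correct. The main difference from the paper's argument concerns the first assertion: the paper derives $\Pi^{\progB,\pasv}_{t+1}O_Q\Pi^\progA_t=0$ as an immediate consequence of Claim~\ref{clm:OQonPas} (invariance of $\cH^{\progB,\pasv}$ under $O_Q$), whereas you bypass that claim entirely and verify directly, by acting on the spanning family $|\psi_{\inrec{R}}\>|R\>|x,b\>|w\>$ in the Hadamard basis, that the image of $\Pi^\progB_{t+1}O_Q\Pi^\progA_t$ already lies in $\cH^{\progB,\actv}_{t+1}$ via the identity $|f_x\>-\tfrac1{\sqrt{n_R}}|\psi_{\inrec{R}}\>=\tfrac{\sqrt{n_R-1}}{\sqrt{n_R}}|\tilde f_{x,R}\>$. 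Your route is more self-contained; the paper's is more modular. For the norm, the two computations are essentially the same (orthogonality over $(R,x)$, then maximize $2\sqrt{n_R-1}/n_R$ at $n_R=n-t$); your explicit monotonicity check of $g(m)=4(m-1)/m^2$ makes the last step a bit more transparent.
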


\begin{proof}
Because $\Pi^{\progB}_{t+1} = \Pi^{\progB,\actv}_{t+1} + \Pi^{\progB,\pasv}_{t+1} $, the claimed operator equality follows directly from Claim~\ref{clm:OQonPas}.
We can see that
\begin{align*}
\Pi^{\progB,\actv}_{t+1} O_Q\Pi^\progA_t
= \, & 
\bigg(
\sum_{R'\in\rec^{t+1}} \sum_{x\in [n]\setminus\inrec{R'}}
|\tilde{f}_{x,R'},x\>\<\tilde{f}_{x,R'},x| \otimes |R'\>\<R'|
\bigg)
\\ & \quad\cdot
\bigg(\Big(I_{\regFunc\regQ}-2\sum_{x\in[n]} |f_x,x,-\>\<f_x,x,-|\Big)\otimes|\flagQ,\bot\>\bigg)
\\ & \quad\cdot
\bigg(\sum_{R\in\rec^t} |\psi_{\inrec{R}}\>\<\psi_{\inrec{R}}| \otimes |R\>\<R|\bigg)
\\ %
= \, & 
-2 \sum_{R\in\rec^{t}} \sum_{x\in[n]\setminus\inrec{R}} 
\Big(
|\tilde{f}_{x,R}\>\<\tilde{f}_{x,R}|
|f_x\>\<f_x| |\psi_{\inrec{R}}\>\<\psi_{\inrec{R}}|
\Big)
\\ & \hspace{75pt}\otimes|x,-\>\<x,-|\otimes|\flagQ\>\otimes |R\app\bot\>\<R|,
\end{align*}
where we have used $|\tilde{f}_{x,R\app \bot}\>=|\tilde{f}_{x,R}\>$.
In the sum above, the terms corresponding to distinct $x$ and $R$ are orthogonal, therefore 
\begin{align*}
\big\|\Pi^{\progB,\actv}_{t+1} O_Q\Pi^\progA_t\big\|
\, & =  
2 \max_{\substack{R\in\rec^t\\x\notin \inrec{R}}}  
\big|
\<\tilde{f}_{x,R}|f_x\>
\<f_x|\psi_{\inrec{R}}\>
\big|
\\ &
=
2 \max_{R\in\rec^t} \sqrt{\frac{n_R-1}{n_R}} \frac1{\sqrt{n_R}}
=2\frac{\sqrt{n-t-1}}{n-t}. \qedhere
\end{align*}
\end{proof}

\begin{clm}
\label{clm:OConPiA}
We have $\Pi^\progB_{t+1} O_C\Pi^\progA_t=0$ and $\|\Pi^\progC_{t+1} O_C\Pi^\progA_t\|^2=1/(n-t)$.
\end{clm}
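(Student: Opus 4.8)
The plan is to write out $O_C \Pi^\progA_t$ explicitly in the basis, using the fact that $\cH^\progA_t$ is spanned (on the $\regFunc\regR$ registers) by the vectors $|\psi_{\inrec{R}}\>\otimes|R\>$ for $R\in\rec^t$, and that the query register $\regQi$ is unconstrained on $\cH^\progA_t$. So I would start from a generic basis vector $|\psi_{\inrec R}\>\otimes|x\>_\regQi\otimes|y\>_\regQo\otimes|R\>$, expand $|\psi_{\inrec R}\>=\frac{1}{\sqrt{n_R}}\sum_{z\in[n]\setminus\inrec R}|f_z\>$, and apply $O_C$. Since $O_C$ maps $|f_z,x,y\>$ to $|f_z,x,y\oplus f_z(x)\>\otimes|\flagC,x\>$ and $f_z(x)=1$ iff $z=x$, the image splits into the single term with $z=x$ (which picks up the bit flip $y\mapsto \neg y$) and the remaining terms with $z\ne x$ (which are unchanged on $\regQo$). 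That is exactly the computation already performed inside the proof of Claim~\ref{clm:OQonAct}, so I would borrow it: we get
\[
O_C\big(|\psi_{\inrec R}\>|x,y,R\>\big)
= \tfrac{1}{\sqrt{n_R}}|f_x,x,\neg y,\flagC,R\app x\>
+ \tfrac{1}{\sqrt{n_R}}\!\!\sum_{\substack{z\in[n]\setminus\inrec R\\ z\ne x}}\!\! |f_z,x,y,\flagC,R\app x\>,
\]
bearing in mind the case split on whether $x\in\inrec R$ (if $x\in\inrec R$ the first term is absent since $x$ does not appear in the sum defining $|\psi_{\inrec R}\>$, but then $\inrec{(R\app x)}=\inrec R$ and the whole image is just $|\psi_{\inrec R}\>|x,y,\flagC,R\app x\>\in\cH^\progA_{t+1}$, so that branch contributes nothing to either $\Pi^\progB_{t+1}$ or $\Pi^\progC_{t+1}$ and $1/n_R = 1/(n-t)$ only for $x\notin\inrec R$ anyway — I'd need to keep this bookkeeping clean).

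For the first assertion, $\Pi^\progB_{t+1}O_C\Pi^\progA_t=0$: I observe that the image just computed lies entirely in $\cH^\progC_{t+1}\oplus\cH^\progA_{t+1}$. Indeed, for $x\notin\inrec R$ the first term has truth-table component $|f_x\>$ with $x\in\inrec{(R\app x)}$, so it is in $\cH^\progC_{t+1}$; the second term, summed over $z\in[n]\setminus\inrec{(R\app x)}$, equals $\sqrt{n_R-1}\,|\psi_{\inrec{(R\app x)}}\>$ up to normalization, hence lies in $\cH^\progA_{t+1}$. Since $\Pi^\progB_{t+1}$ annihilates both $\cH^\progC_{t+1}$ and $\cH^\progA_{t+1}$, we get $\Pi^\progB_{t+1}O_C\Pi^\progA_t=0$. (Alternatively, I could cite Claim~\ref{clm:OQonAct} in spirit, but here it is cleaner to just read it off the expansion.)

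For the norm bound, $\|\Pi^\progC_{t+1}O_C\Pi^\progA_t\|^2=1/(n-t)$: from the expansion, $\Pi^\progC_{t+1}O_C\big(|\psi_{\inrec R}\>|x,y,R\>\big) = \tfrac{1}{\sqrt{n_R}}|f_x,x,\neg y,\flagC,R\app x\>$ when $x\notin\inrec R$, and $0$ when $x\in\inrec R$. These output vectors, for distinct $(R,x,y)$, are mutually orthogonal (they differ in the record register $\regR$, or in $\regQi$, or in $\regQo$), so $\Pi^\progC_{t+1}O_C\Pi^\progA_t$ acts as a direct sum of rank-one maps each of operator norm $1/\sqrt{n_R}$, and its overall norm is $\max_R 1/\sqrt{n_R} = 1/\sqrt{n-t}$ (maximized at $|\inrec R|=t$). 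Squaring gives $1/(n-t)$. I should also double-check the $\regW$ (workspace and flag) registers play no role: $O_C$ and all the projectors act trivially there, and $\Pi^\progA_t$, being a projector on $\regFunc\regR$ tensored with identity, does not constrain $\regA$, so the above computation tensored with an arbitrary algorithm state has the same norm.

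The one place requiring care — the main (minor) obstacle — is the bookkeeping of the case $x\in\inrec R$ versus $x\notin\inrec R$, together with making sure the orthogonality of the output vectors over distinct $(R,x,y)$ genuinely holds so that the operator norm is the pointwise maximum rather than something larger; once that is pinned down, both statements follow immediately from the explicit expansion.
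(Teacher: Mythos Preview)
Your proposal is correct and follows essentially the same approach as the paper: both compute $O_C\Pi^\progA_t$ explicitly, split the result into the $z=x$ piece (landing in $\cH^\progC_{t+1}$) and the $z\ne x$ piece (which reassembles into $|\psi_{\inrec{(R\app x)}}\>$ and lands in $\cH^\progA_{t+1}$), and then read off both assertions. The paper carries out the same calculation at the operator level, writing $O_C\Pi^\progA_t=M'+M''$ with $M'$ the $z\ne x$ part and $M''$ the $z=x$ part, while you work pointwise on basis vectors and handle the case $x\in\inrec R$ explicitly; the paper absorbs that case into its unified formula via $\inrec{(R\app x)}=\inrec R$ when $x\in\inrec R$.
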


\begin{proof}
We can write $O_C\Pi^\progA_t$ as the summation 
 \begin{align*}
O_C\Pi^\progA_t = \, & 
\Big(\sum_{\substack{x,z\in[n]\\y\in\{0,1\}}}  |f_z,x,y\oplus\delta_{x,z}\>\<f_z,x,y| \otimes|x\>\otimes|\flagC\>\Big)
\Big(
\sum_{R\in\rec^t} |\psi_{\inrec{R}}\>\<\psi_{\inrec{R}}| \otimes |R\>\<R|
\Big)
\\ = \, & M' + M''
\end{align*}
where $M'$ corresponds to all terms in the summation such that $z\ne x$, and $M''$ to those with $z=x$.
For both, we use (\ref{eq:fxPsiR}) to evaluate $\< f_x|\psi_{\inrec{R}}\>$, and let
$I_\regQo=|0\>\<0|+|1\>\<1|$
and 
$X_\regQo=|1\>\<0|+|0\>\<1|$.
We have
 \begin{align*}
M'
= \, &
\sum_{R\in\rec^t} 
\sum_{z\in[n]\setminus\inrec{R}}
\sum_{\substack{x\in[n]\\x\ne z}}
\frac{|f_z\>\<\psi_{\inrec{R}}|}{\sqrt{n_R}}\otimes |R\app x\>\<R| 
\otimes|x\>\<x| \otimes I_\regQo \otimes|\flagC\>
\\ = \, &
 \sum_{R\in\rec^t} 
\sum_{x\in[n]}
\bigg(\sum_{\substack{z\in[n]\setminus\inrec{R}\\z\ne x}}
\frac{|f_z\>}{\sqrt{n_R}}
|R\app x\>\bigg)
\<\psi_{\inrec{R}}|\<R|
\otimes|x\>\<x| \otimes I_\regQo \otimes |\flagC\>
\\ = \, &
 \sum_{R\in\rec^t} 
\sum_{x\in[n]}
\bigg(\sqrt{\frac{n_{R\app x}}{n_R}}
 |\psi_{\inrec{(R\app x)}}\>
|R\app x\>\bigg)
\<\psi_{\inrec{R}}|\<R|
\otimes|x\>\<x| \otimes I_\regQo \otimes |\flagC\>,
\end{align*}
whose image is in $\cH^\progA_{t+1}$,
and we have
 \begin{align*}
M''
= \, &
\sum_{R\in\rec^t} 
\sum_{z\in[n]\setminus\inrec{R}}
\frac{|f_z\>\<\psi_{\inrec{R}}|}{\sqrt{n_R}} \otimes |R\app z\>\<R| 
\otimes|z\>\<z| \otimes X_\regQo \otimes|\flagC\>,
\end{align*}
whose image is in $\cH^\progC_{t+1}$. Since the terms corresponding to distinct $z$ and $R$ are orthogonal, $\|M''\|=\max_{R\in\rec^t} 1/\sqrt{n_R}=1/\sqrt{n-t}$.
\end{proof}

\section{Bounding Increase in  Progress}
\label{sec:boundProg}

We now combine Claims \ref{clm:nonAlter}--\ref{clm:OConPiA} to show the following lemma, which bounds the size of overlaps on spaces $\cH^\progB_{t+1}$ and $\cH^\progC_{t+1}$ after the quantum and the classical oracle calls, given the size of overlaps on $\cH^{\progB,\actv}_{t}$, $\cH^{\progB,\pasv}_{t}$, and $\cH^{\progC}_{t}$ before those oracle calls. We then use this lemma to conclude the proof of Lemma~\ref{lem:progEvol} and, in turn, that of Theorem~\ref{thm:main}.

\begin{lem}
\label{lem:PiBCOQC}
For $|\phi_t\>$ the state of the whole system just before $(t+1)$-th oracle call, we have
\begin{subequations}
\begin{align}
&
\|\Pi^\progB_{t+1} O_Q |\phi_t\>\|^2
 \le \bigg(\| \Pi^{\progB,\actv}_t|\phi_t\> \|
+ \frac2{\sqrt{n-t-1}}
\bigg)^2
  + \|\Pi^{\progB,\pasv}_t|\phi_t\>\|^2,
  \label{eq:PiBOQbound}
\\ \label{eq:PiBOCbound} &
\|\Pi^\progB_{t+1} O_C|\phi_t\>\|^2
 = \|\Pi^{\progB,\pasv}_t|\phi_t\>\|^2 ,
\\ \label{eq:PiCOQbound} &
\|\Pi^\progC_{t+1} O_Q |\phi_t\>\|^2   =  \|\Pi^\progC_t |\phi_t\>\|^2,
\\  &
\|\Pi^\progC_{t+1} O_C|\phi_t\>\|^2  
 \le \|\Pi^\progC_t |\phi_t\>\|^2 
+ 2\|\Pi^{\progB,\actv}_t|\phi_t\>\|^2
+ \frac 2{n-t-1}.
\label{eq:PiCOCbound}
\end{align}
\end{subequations}
\end{lem}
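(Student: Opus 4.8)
The overall plan is to expand the pre-query state along the four-way orthogonal decomposition $I=\Pi^\progC_t+\Pi^{\progB,\actv}_t+\Pi^{\progB,\pasv}_t+\Pi^\progA_t$ of the identity (legitimate because $\Pi^\progB_{\regFunc\regR}\otimes I_\regQi=\Pi^{\progB,\actv}_t+\Pi^{\progB,\pasv}_t$ and each of the four projectors acts as the identity on $\regQo\regW$), then apply the relevant linear isometry ($O_Q$ or $O_C$) to each of the four pieces, and finally invoke Claims~\ref{clm:nonAlter}--\ref{clm:OConPiA} to record into which level-$(t+1)$ subspace each image falls. Since $O_Q$ and $O_C$ preserve norms, three of the four pieces will contribute either $0$ or exactly $\|\Pi^\bullet_t|\phi_t\>\|$ to the quantity being bounded; the only inexact estimates will come from the two pieces for which there is no structural claim, namely $\Pi^{\progB,\actv}_{t+1}O_Q\Pi^{\progB,\actv}_t$ and $\Pi^\progC_{t+1}O_C\Pi^{\progB,\actv}_t$ (bounded in operator norm by $1$), together with the $\Pi^\progA_t$-pieces whose norms come from Claims~\ref{clm:OQonPiA} and \ref{clm:OConPiA}. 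Everything is then assembled with the triangle inequality and the elementary bound $(a+b)^2\le 2a^2+2b^2$.

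Equations~\eqref{eq:PiBOCbound} and \eqref{eq:PiCOQbound} are immediate. For \eqref{eq:PiCOQbound}, Claim~\ref{clm:nonAlter} gives $\Pi^\progC_{t+1}O_Q=O_Q\Pi^\progC_t$, so $\Pi^\progC_{t+1}O_Q|\phi_t\>=O_Q\Pi^\progC_t|\phi_t\>$ and the isometry property finishes it. For \eqref{eq:PiBOCbound}, among the four pieces of $\Pi^\progB_{t+1}O_C|\phi_t\>$, the $\Pi^\progC_t$-piece vanishes because $O_C\Pi^\progC_t=\Pi^\progC_{t+1}O_C\Pi^\progC_t$ has image orthogonal to $\cH^\progB_{t+1}$ (Claim~\ref{clm:nonAlter}), the $\Pi^{\progB,\actv}_t$-piece vanishes because $O_C\Pi^{\progB,\actv}_t$ lands in $\cH^\progC_{t+1}\oplus\cH^\progA_{t+1}$ (Claim~\ref{clm:OQonAct}), the $\Pi^\progA_t$-piece vanishes by Claim~\ref{clm:OConPiA}, and the $\Pi^{\progB,\pasv}_t$-piece equals $O_C\Pi^{\progB,\pasv}_t|\phi_t\>$ (Claim~\ref{clm:OQonPas}), of norm $\|\Pi^{\progB,\pasv}_t|\phi_t\>\|$.

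For \eqref{eq:PiBOQbound} I would first use that $\Pi^{\progB,\actv}_{t+1}$ and $\Pi^{\progB,\pasv}_{t+1}$ are orthogonal and sum to $\Pi^\progB_{t+1}$, giving the Pythagorean split $\|\Pi^\progB_{t+1}O_Q|\phi_t\>\|^2=\|\Pi^{\progB,\actv}_{t+1}O_Q|\phi_t\>\|^2+\|\Pi^{\progB,\pasv}_{t+1}O_Q|\phi_t\>\|^2$; by Claim~\ref{clm:OQonPas} the second term is $\|\Pi^{\progB,\pasv}_t|\phi_t\>\|^2$. In the first term the $\Pi^\progC_t$- and $\Pi^{\progB,\pasv}_t$-pieces drop out (their $O_Q$-images lie in $\cH^\progC_{t+1}$, resp.\ $\cH^{\progB,\pasv}_{t+1}$, both orthogonal to $\cH^{\progB,\actv}_{t+1}$, by Claims~\ref{clm:nonAlter} and \ref{clm:OQonPas}), so by the triangle inequality, the isometry property, and Claim~\ref{clm:OQonPiA} its norm is at most $\|\Pi^{\progB,\actv}_t|\phi_t\>\|+\frac{2\sqrt{n-t-1}}{n-t}\le\|\Pi^{\progB,\actv}_t|\phi_t\>\|+\frac2{\sqrt{n-t-1}}$, which gives \eqref{eq:PiBOQbound}. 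For \eqref{eq:PiCOCbound} the key is the orthogonality of the images of $\Pi^\progC_{t+1}O_C\Pi^\progC_t$ and $\Pi^\progC_{t+1}O_C(\Pi^\progB_t+\Pi^\progA_t)$ from Claim~\ref{clm:nonAlter}, which splits $\|\Pi^\progC_{t+1}O_C|\phi_t\>\|^2$ as $\|\Pi^\progC_{t+1}O_C\Pi^\progC_t|\phi_t\>\|^2+\|\Pi^\progC_{t+1}O_C(\Pi^\progB_t+\Pi^\progA_t)|\phi_t\>\|^2$; the first summand equals $\|\Pi^\progC_t|\phi_t\>\|^2$ (Claim~\ref{clm:nonAlter}), while in the second summand the $\Pi^{\progB,\pasv}_t$-piece vanishes (Claim~\ref{clm:OQonPas}), leaving $\big(\|\Pi^{\progB,\actv}_t|\phi_t\>\|+\frac1{\sqrt{n-t}}\big)^2\le 2\|\Pi^{\progB,\actv}_t|\phi_t\>\|^2+\frac2{n-t-1}$ by the operator-norm bound $1$ on $\Pi^\progC_{t+1}O_C\Pi^{\progB,\actv}_t$ and by Claim~\ref{clm:OConPiA}.

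I expect the only genuine friction to be bookkeeping rather than mathematical depth: keeping straight that the level-$t$ projectors $\Pi^{\progB,\actv}_t,\Pi^{\progB,\pasv}_t$ act on $\regFunc\regR\regQi$ whereas $\Pi^\progC_t,\Pi^\progB_t,\Pi^\progA_t$ act only on $\regFunc\regR$, that after the $(t+1)$-th call the record has grown by one subregister and all five relevant projectors now live at level $t+1$, and that the numerical slacks $\frac{2\sqrt{n-t-1}}{n-t}\le\frac2{\sqrt{n-t-1}}$ and $\frac1{\sqrt{n-t}}\le\frac1{\sqrt{n-t-1}}$ point the right way so that the stated right-hand sides come out exactly as written. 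Once the correct decompositions are set up, each of the four inequalities is a one- or two-line computation.
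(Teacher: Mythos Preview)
Your proposal is correct and follows essentially the same route as the paper's own proof: the same four-way decomposition of $|\phi_t\>$, the same invocations of Claims~\ref{clm:nonAlter}--\ref{clm:OConPiA} for each piece, the Pythagorean split along $\Pi^{\progB,\actv}_{t+1}\oplus\Pi^{\progB,\pasv}_{t+1}$ for \eqref{eq:PiBOQbound}, and the image-orthogonality from Claim~\ref{clm:nonAlter} followed by $(a+b)^2\le 2a^2+2b^2$ for \eqref{eq:PiCOCbound}. The only differences are cosmetic (e.g.\ the paper writes the operator identity $\Pi^\progB_{t+1}O_Q=\Pi^{\progB,\actv}_{t+1}O_Q\Pi^{\progB,\actv}_t+\Pi^{\progB,\pasv}_{t+1}O_Q\Pi^{\progB,\pasv}_t+\Pi^{\progB,\actv}_{t+1}O_Q\Pi^\progA_t$ before applying it to $|\phi_t\>$, whereas you split Pythagoreanly first).
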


\begin{proof}
Towards (\ref{eq:PiBOQbound}), first note that $\Pi^\progB_{t+1} O_Q\Pi^\progC_t=0$ due to Claim~\ref{clm:nonAlter}. Then, by Claim~\ref{clm:OQonPas}, we have
\[
\Pi^\progB_{t+1} O_Q
  = \Pi^{\progB,\actv}_{t+1} O_Q \Pi^{\progB,\actv}_t
  + \Pi^{\progB,\pasv}_{t+1} O_Q \Pi^{\progB,\pasv}_t
  + \Pi^{\progB,\actv}_{t+1} O_Q\Pi^\progA_t.
\]
Because $\Pi^{\progB,\actv}_{t+1}$ and $\Pi^{\progB,\pasv}_{t+1}$ have orthogonal images,
we therefore have
\begin{align*}
\|\Pi^\progB_{t+1} O_Q|\phi_t\>\|^2
  = & \| \Pi^{\progB,\actv}_{t+1} O_Q \Pi^{\progB,\actv}_t|\phi_t\> + \Pi^{\progB,\actv}_{t+1} O_Q\Pi^\progA_t |\phi_t\>\|^2
+ \|\Pi^{\progB,\pasv}_{t+1} O_Q \Pi^{\progB,\pasv}_t|\phi_t\>\|^2
\\
\le & \Big( \| \Pi^{\progB,\actv}_t|\phi_t\> \| + \|\Pi^{\progB,\actv}_{t+1} O_Q\Pi^\progA_t |\phi_t\>\|\Big)^2
+ \|\Pi^{\progB,\pasv}_t|\phi_t\>\|^2,
\end{align*}
and the inequality (\ref{eq:PiBOQbound}) follows due to $\|\Pi^{\progB,\actv}_{t+1} O_Q\Pi^\progA_t \|<\frac2{\sqrt{n-t-1}}$, as given by Claim~\ref{clm:OQonPiA}.

Towards (\ref{eq:PiBOCbound}), we have $\Pi^\progB_{t+1} O_C\Pi^\progC_t=0$ by Claim~\ref{clm:nonAlter}, $\Pi^\progB_{t+1} O_C\Pi^{\progB,\actv}_t=0$ by Claim~\ref{clm:OQonAct}, and $\Pi^\progB_{t+1} O_C\Pi^\progA_t=0$ by Claim~\ref{clm:OConPiA}. Hence,
$\Pi^\progB_{t+1} O_C = \Pi^\progB_{t+1} O_C \Pi^{\progB,\pasv}_t$,
and the equality (\ref{eq:PiBOCbound}) follows from Claim~\ref{clm:OQonPas}.

The equality (\ref{eq:PiCOQbound}) follows immediately from Claim~\ref{clm:nonAlter}.
Towards (\ref{eq:PiCOCbound}),
recall Claim~\ref{clm:nonAlter} stating that $\Pi^\progC_{t+1} O_C\Pi^\progC_t=O_C\Pi^\progC_t$ and that its image is orthogonal to that of $\Pi^\progC_{t+1} O_C(\Pi^\progB_t+\Pi^\progA_t)$.
Since $\Pi^\progC_{t+1} O_C \Pi^{\progB,\pasv}_t = 0$ by Claim~\ref{clm:OQonPas},  we have
\begin{align}
\notag
\|\Pi^\progC_{t+1} O_C|\phi_t\>\|^2  
& = \|\Pi^\progC_t |\phi_t\>\|^2 + \|\Pi^\progC_{t+1} O_C(\Pi^{\progB,\actv}_t + \Pi^\progA_t)|\phi_t\>\|^2
\\ 
& \le \|\Pi^\progC_t |\phi_t\>\|^2 
+ 2\|\Pi^{\progB,\actv}_t|\phi_t\>\|^2
+ 2\underbrace{\|\Pi^\progC_{t+1} O_C\Pi^\progA_t\|^2}_{=\frac1{n-t}<\frac1{n-t-1}},
\label{eq:sansansan}
\end{align}
where the equality under the brace is due to Claim~\ref{clm:OConPiA}.
\end{proof}

\begin{rmk}
\label{rmk:FourInProg}
Consider the last two norms on the left hand side of (\ref{eq:sansansan}), namely, $N_{\progB,\actv}:=\|\Pi^{\progB,\actv}_t|\phi_t\>\|$ and $N_{\progC\leftarrow\progA}:=
\|\Pi^\progC_{t+1} O_C\Pi^\progA_t\|$, and also consider $N_{\progB}:=\|\Pi^\progB_t |\phi_t\>\|$ appearing in the definition of the progress measure $\Psi_t$.
As we will see below, when proving the final claim of Lemma~\ref{lem:progEvol}, we need that the scalar in front of $N_{\progB}^2$ in the definition of $\Psi_t$ must be 
strictly larger than the scalar in front of $N_{\progB,\actv}^2$ in (\ref{eq:sansansan}).
While currently those scalars are $3$ and $2$, respectively, we could have taken them to be
$1+\alpha$ and $1+\alpha/2$ for any $\alpha>0$. That is because we could have used in (\ref{eq:sansansan}) the fact that 
\[
(N_{\progB,\actv}+N_{\progC\leftarrow\progA})^2 \le (1+\alpha/2)N_{\progB,\actv}^2 + (1+2/\alpha)N_{\progC\leftarrow\progA}^2.
\]
Note that this would also require the second claim of Lemma~\ref{lem:progEvol} to be changed to $q_{succ}\le \Psi_\tau + \frac{1+1/\alpha}{n-\tau}$.
We have chosen the scalar $3$ instead of $1+\alpha$ in the definition of $\Psi_t$ for the sake of simplicity.
\end{rmk}

\begin{proof}[Proof of the final claim of Lemma~\ref{lem:progEvol}]
For conciseness, let $n_t:=n-t-1$.
Note that $\Pi^\progB_{t+1} O_Q$, $\Pi^\progB_{t+1} O_C$, $\Pi^\progC_{t+1} O_Q$, $\Pi^\progC_{t+1} O_C$ have orthogonal images. Hence, by Lemma~\ref{lem:PiBCOQC} and the fact that $U_{t+1}$ commutes with both $\Pi^\progC_{t+1}$ and $\Pi^\progB_{t+1}$, we have
\begin{align*}
\Psi_{t+1}  =\, &
\|\Pi^\progC_{t+1} U_{t+1}O_p|\phi_t\>\|^2+3\|\Pi^\progB_{t+1} U_{t+1}O_p|\phi_t\>\|^2
\\
=\, &
(1-p)\|\Pi^\progC_{t+1} O_Q|\phi_t\>\|^2
+p\|\Pi^\progC_{t+1} O_C|\phi_t\>\|^2
\\ & \quad + 3(1-p)\|\Pi^\progB_{t+1} O_Q|\phi_t\>\|^2
+3p\|\Pi^\progB_{t+1} O_C|\phi_t\>\|^2
\\ \le \, &
 \|\Pi^\progC_t |\phi_t\>\|^2 
+ 2p\|\Pi^{\progB,\actv}_t|\phi_t\>\|^2
+ \frac{2p}{n_t}
\\ & \quad +
3(1-p)\Big(\| \Pi^{\progB,\actv}_t|\phi_t\> \| + \frac{2}{\sqrt{n_t}}\Big)^2
  + 3\|\Pi^{\progB,\pasv}_t|\phi_t\>\|^2.
  \end{align*}
Note that
  \[
   \|\Pi_t^\progC |\phi_t\>\|^2 + 3\|\Pi_t^{\progB,\pasv}|\phi_t\>\|^2 = \Psi_t - 3\|\Pi_t^{\progB,\actv}|\phi_t\>\|^2,
  \]
  therefore we have
  \begin{align*}
\Psi_{t-1} - \Psi_t
\le\, &
\frac{2p}{n_t} -(3- 2p)\|\Pi^{\progB,\actv}_t|\phi_t\>\|^2
+ 3(1-p)\Big(\| \Pi^{\progB,\actv}_t|\phi_t\> \| + \frac{2}{\sqrt{n_t}}\Big)^2
 \\ = \, &
\frac{48 - 84 p + 38 p^2 }{pn_t}
- p\bigg(\|\Pi_t^{\progB,\actv}|\phi_t\>\| 
- \frac{6(1-p)}{p\sqrt{n_t}}\bigg)^2
\\ \le \, &
\frac{48}{pn_t}.
\end{align*}
(In above, we used that $-(3-2p)+3(1-p)$ is strictly negative, which, in Remark~\ref{rmk:FourInProg}, concerns the scaling of $N_{\progB,\actv}^2$ and $N_{\progB}^2$.)
\end{proof}

\section{Almost Optimal Algorithm}
\label{sec:algoSketch}

The main contribution of this work is showing an $\Omega(\max\{np,\sqrt{n}\})$ query lower bound for search with a noisy oracle of rate $p$. We get the bound by combining Theorem~\ref{thm:main} with the tight bound in the noiseless setting~\cite{bennett:searchLowerB}. In this section we show that this lower bound is asymptotically optimal for the dephasing noise and almost asymptotically optimal for the depolarizing noise.
In particular, we present an 
$\OO(\max\{np(1+p\log n),\sqrt{n}\})$
quantum noisy-query algorithm when $1-p=\Omega(1)$, and the $p\log n$ term and the bound on $1-p$ can be omitted when the dephasing noise is considered.

We do not claim the novelty of this algorithm, as it is a combination of well studied techniques, and for that reason, we present the algorithm and its analysis only informally. The correctness of the algorithm and its claimed running time can be established following the now-somewhat-standard analysis of Grover's algorithm.

\bigskip

Because no algorithm can succeed when no element is marked, we assume that at least one marked element exists. If the number of marked elements in non-constant, say, $n^{1/3}$, we can likely achieve further speedups, as it is done in the case of noiseless search. In this work, however, we do not investigate the dependence of the query complexity on the number of marked elements.

For all upper bounds, we assume that the noisy oracle does not signal occurrences of errors via flag bits. If it did, the algorithm presented here would still work by simply ignoring those bits.
The algorithm that we present works even if the noise rate $p$ is not known in advance.%
\footnote{To allow the running time to depend on an unknown $p$, we slightly deviate from the computational model described in Section~\ref{sec:model}, because that model does not allow the computation to terminate early.}

\subsection{Classical checking subroutine}
\label{sec:MajCheck}

The algorithm that we are about to present requires a classical subroutine that with high precision can check if a given element $x$ is marked, that is, compute $f(x)$. Here, by ``classical'' we mean that we never call this subroutine in a superposition over inputs.

For the dephasing noise, we simply prepare the state $|x\>_\regQi|0\>_\regQo$, call the noisy oracle on this state, and then measure the register $\regQo$. Even if the noise has caused dephasing, we are still guaranteed that our measurement yields $f(x)$. Thus, using a single query, with probability $1$ we output $f(x)$.

For the depolarizing noise, we use a similar procedure, but repeat it $\Theta(\log n)$ times. More precisely, suppose $p=1-r$ for some $r=\Omega(1)$. If we prepare the state $|x\>_\regQi|0\>_\regQo$, call the noisy oracle on this state, and then measure the register $\regQo$, then with probability $1/2+r/2$ we will obtain $f(x)$ and with probability $1/2-r/2$ we will obtain $\neg f(x)$. Repeating this procedure $\Theta(\log n)$ times and taking the majority vote, we can ensure that the correct value $f(x)$ is output with probability at least $1-o(1/(n\log n))$.

Below, we will simply refer to this subroutine as the ``checking procedure'' and denote its cost as $C$. We note that the algorithm below will work correctly for any noise model that independently affects each oracle call with probability $p$ that is bounded below $1/2$ by at least a constant.

\subsection{Truncated run of Grover's algorithm}

For now, let us assume that the noise rate $p\ge 1/\sqrt n$ is known; we will deal with the more general scenario at the very end.
Let $k$ denote the number of marked elements.
In this section, we sketch a construction of an algorithmic subroutine that achieves the following task.

\begin{clm}
\label{clm:TruncGrover}
There exists an algorithm that, given $\OO(1/p+C)$ calls to a noisy oracle, either outputs $\mathsf{fail}$ or, with probability at least $\Omega(1/(np^2))$, it outputs $x\in[n]$. Moreover, if the algorithm outputs $x\in[n]$, then $x$ is marked with probability at least $1-o(1/n)$. The algorithm works regardless of the number of marked elements.
\end{clm}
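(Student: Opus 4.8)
The plan is to run Grover's algorithm for a number of iterations that is truncated so that, with constant probability, no error occurs during those iterations, and then verify the candidate with the checking procedure. Suppose first that the number of marked elements is $k$ and, as a warm-up, that $k$ is known. Standard analysis of Grover's algorithm tells us that after $j$ iterations (each consisting of an oracle call and the diffusion operator), in the \emph{noiseless} execution the success probability of measuring a marked element is $\sin^2((2j+1)\theta)$ where $\sin\theta=\sqrt{k/n}$. I would pick a target number of iterations $j^\ast := \min\{\lfloor c/p\rfloor, \lfloor (\pi/4)\sqrt{n/k}\rfloor\}$ for a small absolute constant $c$, and in fact sample the actual number of iterations $j$ uniformly at random from $\{0,1,\dots,j^\ast\}$ (the standard trick to avoid overshooting when $\sqrt{n/k}$ is small or $k$ is unknown). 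Since each oracle call is faultless with probability $1-p$ independently, the probability that all $j\le j^\ast\le c/p$ calls are faultless is at least $(1-p)^{c/p}=\Omega(1)$. Conditioned on that event, the execution is exactly a noiseless run, and a standard averaging argument over $j$ (e.g. the analysis in Boyer--Brassard--H\o yer--Tapp) shows that the measurement yields a marked element with probability $\Omega\bigl(\min\{j^\ast\sqrt{k/n},\,1\}\bigr)\cdot$ something; more carefully, the probability of obtaining a marked element is $\Omega((j^\ast)^2 k/n)$ when $j^\ast\le\sqrt{n/k}$, and $\Omega(1)$ otherwise. Plugging in $j^\ast=\Theta(\min\{1/p,\sqrt{n/k}\})$, in either regime this is $\Omega(\min\{1/(np^2)\cdot k,\,1\})=\Omega(1/(np^2))$ using $k\ge1$, which is the claimed lower bound (the case $p\ge1/\sqrt n$ guarantees $1/(np^2)\le1$, so the bound is meaningful). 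The number of oracle calls used is $j^\ast+C=\OO(1/p+C)$.

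After measuring a candidate $x\in[n]$, I would run the checking procedure of Section~\ref{sec:MajCheck} on $x$ at cost $C$; if it reports $x$ unmarked, output $\mathsf{fail}$, otherwise output $x$. Because the checking procedure errs with probability $o(1/n)$ (with probability $1$ for the dephasing noise), if the algorithm outputs some $x\in[n]$ then $x$ is marked except with probability $o(1/n)$, giving the second assertion. Conversely, when the truncated Grover run does produce a genuinely marked $x$ — which happens with probability $\Omega(1/(np^2))$ by the previous paragraph, up to the $o(1/n)$ failure of checking, which is absorbed since $1/(np^2)\ge\Omega(1/n)$ — the check passes and we output it; so the algorithm outputs an $x\in[n]$ with probability $\Omega(1/(np^2))$ as required.

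The remaining point is that $k$ is not known. The uniform-random choice of $j\in\{0,\dots,j^\ast\}$ already handles the overshooting issue when $k$ could be large, but we also need $j^\ast$ itself not to depend on $k$; taking simply $j^\ast=\lfloor c/p\rfloor$ (dropping the $\sqrt{n/k}$ cap) still works, because if $c/p$ exceeds the optimal Grover time $\Theta(\sqrt{n/k})$, then a uniformly random $j$ in $\{0,\dots,\lfloor c/p\rfloor\}$ lands in a ``good'' window of width $\Theta(\sqrt{n/k})$ around a near-optimal iteration count with probability $\Omega(\sqrt{n/k}\,/\,(1/p))=\Omega(p\sqrt{n/k})$, and conditioned on that the success probability is $\Omega(1)$; multiplying, and combining with the faultlessness probability $\Omega(1)$, yields overall success $\Omega(p\sqrt{n/k})\ge\Omega(p\sqrt{1/n})\cdot\sqrt{1/k}\cdot\sqrt k =\Omega(p/\sqrt n)\ge\Omega(1/(np^2))$ in the regime $p\ge 1/\sqrt n$ — and when $c/p\le\Theta(\sqrt{n/k})$ instead, the earlier $\Omega((1/p)^2 k/n)\ge\Omega(1/(np^2))$ bound applies directly. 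I expect the main obstacle to be bookkeeping the two regimes ($c/p$ larger vs.\ smaller than the optimal Grover time) uniformly in the unknown $k$, and making sure the constants in the ``random $j$'' averaging argument are genuinely absolute; this is routine but needs the Boyer--Brassard--H\o yer--Tapp-style estimate $\frac1{j^\ast+1}\sum_{j=0}^{j^\ast}\sin^2((2j+1)\theta)\ge\frac14$ whenever $j^\ast\ge 1/\sin\theta$, together with the elementary bound $\sin^2((2j+1)\theta)\ge\frac{(2j+1)^2\theta^2}{4}$ for $(2j+1)\theta\le\pi/2$ to cover the short-run regime.
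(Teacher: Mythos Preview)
Your overall strategy---truncate Grover at $\Theta(1/p)$ iterations so that with constant probability no oracle error occurs, then verify the candidate with the checking subroutine---matches the paper's. Where the paper handles the unknown number $k$ of marked elements by running separate Grover instances for $1,2,4,\ldots,1/p$ iterations and checking all $\OO(\log(1/p))$ candidates, you instead take a single run with $j$ drawn uniformly from $\{0,\ldots,\lfloor c/p\rfloor\}$ and check one candidate. Both devices are standard; yours has the minor advantage of incurring the cost $C$ only once.

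There is, however, a genuine arithmetic gap in your large-$k$ regime. In the case $c/p>\Theta(\sqrt{n/k})$ you argue via a single ``good window'' and obtain success probability $\Omega(p\sqrt{n/k})$, which you then lower-bound by $\Omega(p/\sqrt n)$ and claim is at least $\Omega(1/(np^2))$ ``in the regime $p\ge 1/\sqrt n$.'' But $p/\sqrt n\ge 1/(np^2)$ is equivalent to $p^3\ge n^{-1/2}$, i.e.\ $p\ge n^{-1/6}$, so your chain of inequalities fails throughout the range $n^{-1/2}\le p<n^{-1/6}$. The culprit is the single-window heuristic: on $\{0,\ldots,\lfloor c/p\rfloor\}$ the map $j\mapsto\sin^2((2j+1)\theta)$ has many peaks, not one. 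The repair is exactly the Boyer--Brassard--H\o yer--Tapp estimate you quote at the end---once $j^\ast\ge 1/\sin(2\theta)$ the average of $\sin^2((2j+1)\theta)$ over $j\in\{0,\ldots,j^\ast\}$ is at least $1/4$---which gives noiseless success probability $\Omega(1)$ outright; combined with the $\Omega(1)$ chance that all $j\le c/p$ calls are faultless, this yields $\Omega(1)\ge\Omega(1/(np^2))$ trivially. Use that estimate in the body of the argument rather than only as an afterthought, and the proof goes through.
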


If we run Grover's algorithm for $t=1/p$ iterations, with at least a constant probability no oracular error occurs in any of the iterations. Assuming there were indeed no errors, if we then measure the resulting state in the computational basis, the probability that we get a marked element is 
\[
\Big(\sin\big((2t+1)\arcsin \sqrt{k/n}\big)\Big)^2,
\]
which is $\Omega(t^2/n)=\Omega(1/(np^2))$ if $k=\OO(n/t^2)=\OO(np^2)$ for a sufficiently small constant behind the big-$\OO$ notation.

To deal with the scenario when $k=\Omega(np^2)$, we simply run instances of Grover's algorithm for $1,2,4,8,\ldots, 1/p$ iterations. If $k=\Omega(np^2)$, then $1/p=\Omega(\sqrt{n/k})$, and with at least a constant probability at least one of these runs will return a marked element.

Finally, for all elements returned by the above executions of Grover's algorithm, we use the checking procedure from Section~\ref{sec:MajCheck}. If at least for one element $x$ the checking procedure reports $f(x)=1$, we return that $x$; otherwise we return $\mathsf{fail}$.

\subsection{Parallel search for a marked element}

To find a marked element with a probability close to $1$, we employ a parallel repetition of the subroutine of Claim~\ref{clm:TruncGrover}. In particular, for every constant $\epsilon>0$, we can choose a sufficiently large constant $c_\epsilon$ so that, by repeating the subroutine  $c_\epsilon np^2$ times, the probability that we return $\mathsf{fail}$ for all of them is less than $\epsilon/2$. The probability that we return $x$ having $f(x)=0$ is $o(1)$, and thus also less than $\epsilon/2$.

This results in the total number of oracle calls being $\OO(np^2(1/p+C))$, which is $\OO(np)$ for the dephasing noise and $\OO(np(1+p\log n))$ for the depolarizing noise.

Finally, to deal with the scenario when $p$ is not known in advance, first we simply run Grover's algorithm and check the result using the checking procedure from Section~\ref{sec:MajCheck}. If a marked element has not been found this way, we guess the value of $p$ by starting with $p=1/\sqrt{n}$ and then doubling the value of $p$ for every subsequent guess, until we succeed (or reach $p=1$). For a given guess of $p$, we run the algorithm above and, if that algorithm outputs $x$, we also output $x$ and terminate the computation.

\section{Discussion}
\label{sec:discussion}

\paragraph{Hybrid, partially-noisy model.}

In this work, we have considered a scenario where the algorithm has  access to only a single type of oracle, namely, a noisy quantum oracle. However, similarly to~\cite{Rosmanis:2022:hybrid,hamoudi:2022:tradeoffs}, we might also consider a hybrid scenario where the algorithm has access to both a noisy quantum oracle and a noiseless classical oracle. Because of the similarities between the proof techniques used in this paper and those used by Hamoudi, Liu, and Sinha in~\cite{hamoudi:2022:tradeoffs}, we suspect that the present techniques could be used to provide tight lower bounds in this hybrid, partially-noisy setting.

\paragraph{Proof without flag bits.}

The completely depolarizing channel can be written as a composition of two non-destructive projective measurements on two mutually unbiased basis. In particular, it can be written as a dephasing on the computational basis followed by a dephasing on the Fourier basis.

Using this observation, one can likely reprove the results in this paper without using the error-signaling flag bits. In particular, we could use the record register to purify both the dephasing on the computational basis and the dephasing on the Fourier basis, but only use the part of the record corresponding to the former when defining the progress measure.

\section*{Acknowledgements}

The author would like to thank Yassine Hamoudi, Atsuya Hasegawa, Fran\c{c}ois Le Gall, Han-Hsuan Lin, and Qisheng Wang for fruitful and insightful discussions.
The author was supported by JSPS KAKENHI Grant No.~JP20H05966 and MEXT Quantum Leap Flagship Program (MEXT Q-LEAP) Grant No.~JPMXS0120319794.
Part of the writing of this work was done while visiting Academia Sinica, and the author is grateful for their hospitality.

{
\small

}

\appendix

\section{Noisy Search for Random Functions}
\label{app:randFunc}

Here we consider the unstructured search problem in the random function setting. The problem is, given an oracle access to a uniformly random function $f\colon[n]\rightarrow[m]$, to find $x$ that evaluates to some given value $\marked\in[m]$. We can think of it as the problem of inverting a one-way function. 

As before, we consider the scenario where every oracle call independently suffers from the same noise. Similarly to Theorem~\ref{thm:main}, which concerns the worst case setting, we have the following similar result in the random function setting.

\begin{thm}
\label{thm:mainRand}
For both the depolarizing and the dephasing noise of rate $p>0$, every algorithm that solves the unstructured search problem in the random function setting with probability at least $1-\epsilon$ must make at least $mp(1-\epsilon)/32-1$ queries to the noisy oracle.
This lower bound holds even if for every oracle call the algorithm is provided with a bit indicating whether the oracle call was performed properly.
\end{thm}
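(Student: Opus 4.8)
The plan is to replay Sections~\ref{sec:envRegs}--\ref{sec:boundProg} verbatim in structure, with the family $\Func=\{f_1,\dots,f_n\}$ of unique-marked-element instances replaced by the family of \emph{all} functions $f\colon[n]\to[m]$ and the uniform choice among $n$ planted instances replaced by the uniform choice among all $m^n$ functions; as before, the error-signaling dephasing oracle is the only case one needs to handle directly. First I would let the truth register $\regFunc$ span all of $[n]\to[m]$, with initial state the uniform superposition $|\unif\>$ over all $m^n$ functions, which factorizes over input coordinates, each factor placing amplitude $m^{-1/2}$ on $|\marked\>$ and spreading the remaining amplitude uniformly over the other $m-1$ values. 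The record register $\regR$ is retained, but now each slot $\regR_t$ ranges over $\bigl([n]\times[m]\bigr)\cup\{\bot\}$: $\bot$ again marks an errorless call, while a pair $(x,w)$ records that the dephasing collapsed the query-input register to $|x\>$ and, since the extended oracle is controlled by $\regFunc$, the value $w=f(x)$ is copied into the record at the same moment. The extended oracle $O_p=\sqrt{1-p}\,O_Q+\sqrt{p}\,O_C$ is then defined just as in Section~\ref{sec:envRegs}, with $O_f$ the standard $m$-ary oracle and $O_C$ additionally writing $f(x)$ into $\regR_t$; call a record $R$ \emph{consistent} with $f$ if $f(x)=w$ for each recorded pair $(x,w)$, and \emph{clean} if no recorded pair has $w=\marked$.

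The real change is in the progress-defining subspaces. For a record $R$ write $|\psi_R\>$ for the uniform superposition over all $f$ consistent with $R$, and, for $x\notin\inrec R$, write $|\mu_{x,R}\>$ for the uniform superposition over all $f$ consistent with $R$ that also satisfy $f(x)=\marked$; here $|\mu_{x,R}\>$ plays the role that the basis vector $|f_x\>$ played in the main text, now the flat state of an entire $(1/m)$-fraction subspace. I would set $\Pi^\progC_t:=\sum_{R\text{ not clean}}\bigl(\sum_{f\text{ consistent with }R}|f\>\<f|\bigr)\otimes|R\>\<R|$, $\Pi^\progA_t:=\sum_{R\text{ clean}}|\psi_R\>\<\psi_R|\otimes|R\>\<R|$, and $\Pi^\progB_t:=I-\Pi^\progC_t-\Pi^\progA_t$ on the consistent support. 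Carrying $f(x)$ in the record is exactly what makes $\Pi^\progA_t$ capture "the oracle still looks freshly uniform given what has leaked out", despite the standard oracle correlating $\regFunc$ with $\regQo$. The approximations $|\tilde f_{x,R}\>$ of Section~\ref{sec:ActAndPas} become the analogous orthonormalized combinations of $|\mu_{x,R}\>$ and $|\psi_R\>$, yielding the split $\Pi^\progB_t=\Pi^{\progB,\actv}_t+\Pi^{\progB,\pasv}_t$, and the progress measure is again $\Psi_t:=\|\Pi^\progC_t|\phi_t\>\|^2+3\|\Pi^\progB_t|\phi_t\>\|^2$.

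With these definitions I would re-derive the analogs of Claims~\ref{clm:nonAlter}--\ref{clm:OConPiA}. Non-alterability (Claim~\ref{clm:nonAlter}) goes through verbatim, since $O_Q$ appends $\bot$, $O_C$ only enlarges $\inrec R$ and is $\regFunc$-diagonal, and distinct records stay orthogonal. Invariance of $\cH^{\progB,\pasv}$ (Claim~\ref{clm:OQonPas}) again uses that $\Pi^{\progB,\pasv}_t$ annihilates any vector whose query-input value is about to be appended to the record; only the $m$-ary form of $O_f$ needs rewriting, the argument being structurally identical. Claims~\ref{clm:OQonAct}, \ref{clm:OQonPiA} and \ref{clm:OConPiA} become the same computations with the scalar $\<f_x|\psi_{\inrec R}\>=1/\sqrt{n_R}$ replaced by overlaps such as $\<\mu_{x,R}|\psi_R\>$, which all equal $1/\sqrt m$; in particular the single-call state transfers between the progress subspaces now have norm of order $1/\sqrt m$, bounded independently of $t$. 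Feeding these into the proofs of Lemma~\ref{lem:PiBCOQC} and then Lemma~\ref{lem:progEvol} yields $\Psi_0=0$, $q_{succ}\le\Psi_\tau+\OO(1/m)$, and $\Psi_{t+1}-\Psi_t\le\OO(1/(pm))$; summing over $t$ and keeping track of the constants gives the stated bound $mp(1-\epsilon)/32-1$, exactly as Theorem~\ref{thm:main} followed from Lemma~\ref{lem:progEvol}.

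The main obstacle I expect is precisely this substitution of bookkeeping. In the worst case the planted structure makes "$x$ is marked" equivalent to the single basis vector $|f_x\>$ and makes all relevant overlaps rational with denominator $n_R$; in the random setting "$f(x)=\marked$" is a $(1/m)$-fraction subspace, the amplitudes are biased by the $(m-1)\!:\!1$ ratio between non-$\marked$ and $\marked$ values, and the standard oracle genuinely entangles $\regFunc$ with $\regQo$, so one must verify carefully that the record absorbs all the leaked information and that the active/passive split still isolates the noise-affected part inside $\Pi^{\progB,\actv}$. No genuinely new idea beyond those in Sections~\ref{sec:envRegs}--\ref{sec:boundProg} is needed. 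Finally, the hypothesis $m=\OO(n)$ is used nowhere in the lower bound; it only ensures that a constant fraction of random instances has a marked element, so that success probability $1-\epsilon$ is attainable at all.
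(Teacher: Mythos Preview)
Your plan has a genuine gap at the definition of $\Pi^\progA_t$. You take $\Pi^\progA_t=\sum_{R\text{ clean}}|\psi_R\>\<\psi_R|\otimes|R\>\<R|$, which is \emph{rank one} on $\regFunc$ for every fixed clean record $R$. This is exactly what works in the worst-case setting, where the truth register is $n$-dimensional and the controlled oracle $\sum_x|f_x\>\<f_x|\otimes O_{f_x}=I-2\sum_x|f_x,x,-\>\<f_x,x,-|$ perturbs $|\psi_{\inrec R}\>$ by only $O(1/\sqrt{n_R})$. In the random-function setting it fails: the standard $m$-ary oracle entangles $\regFunc$ with $\regQo$ by a large amount, and your record absorbs this entanglement only for $O_C$, not for $O_Q$ (which appends $\bot$). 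Concretely, start in $\cH^\progA$ with empty record and query $|0\>_\regQi|0\>_\regQo$; after $O_Q$ the joint state is $m^{-n/2}\sum_f|f\>|0,f(0)\>|\bot\>$, and a short calculation gives $\|\Pi^\progA_1|\phi_1\>\|^2=1/m$. Hence $\|\Pi^\progB_1|\phi_1\>\|^2=1-1/m$ and $\Psi_1-\Psi_0=\Theta(1)$, so the key inequality $\Psi_{t+1}-\Psi_t\le O(1/(pm))$ is violated already at the first step. Your assertion that ``the single-call state transfers between the progress subspaces now have norm of order $1/\sqrt m$'' is therefore false for $\Pi^\progB_{t+1}O_Q\Pi^\progA_t$: its norm is $\Theta(1)$.

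What the paper does instead is enlarge $\Pi^\progA_t$ drastically: on each coordinate $\regFunc_x$ with $x\notin\inrec R$ it projects onto the whole $(m-1)$-dimensional space orthogonal to a fixed unit vector $|\tilde\marked\>$ (chosen so that $\<\tilde\marked|\unifR\>=0$ and $|\<\tilde\marked|\marked\>|^2=(m-1)/m$), rather than onto the single state $|\unifR\>$. This is the idea imported from \cite{Rosmanis2021}. With that definition the entanglement created by $O_Q$ stays inside $\cH^\progA$ except for an $O(1/\sqrt m)$ leak onto $|\tilde\marked\>_{\regFunc_x}$, restoring $\|\Pi^\progB_{t+1}O_Q\Pi^\progA_t\|=O(1/\sqrt m)$. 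A secondary consequence of this larger $\cH^\progA$ is that $\cH^{\progB,\pasv}$ is no longer invariant under $O_C$ (so your claim that the analogue of Claim~\ref{clm:OQonPas} ``goes through'' is also incorrect), which forces the constant in front of $\|\Pi^\progB|\phi_t\>\|^2$ in the progress measure to be raised from $3$ to $4$; this is precisely why the final constant comes out as $32$ rather than $48$. Your extension of the record alphabet to $([n]\times[m])\cup\{\bot\}$ is unnecessary and does not address the issue, since the problematic leak happens on quantum calls, where you still record only $\bot$.
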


The proof of Theorem~\ref{thm:mainRand} follows closely to that of Theorem~\ref{thm:main}. Therefore, for sake of brevity and readability, we do not present here the proof of Theorem~\ref{thm:mainRand} in full detail, but instead we just sketch it, mostly focussing on differences between the two proofs.

\subsection{Truth and record registers}
\label{sec:TruthRecord}

As before, we extend the computation by introducing the truth and the record registers, which purify the random choice of the function and the non-unitary action of the noise, respectively. Their joint state is used to quantify the progress of the computation.

\paragraph{Truth register.}

The truth register $\regFunc$ now corresponds to the set of all functions $f\colon[n]\rightarrow[m]$, and thus its corresponding space is of dimension $m^n$. We decompose it as a concatenation of $n$ subregisters $\regFunc_x$, where $x\in[n]$ and $\regFunc_x$ holds the value of $f(x)$. That is, a function $f$ is stored in the truth register as $|f\>_\regFunc = \bigotimes_{x\in[n]} |f(x)\>_{\regFunc_x}$. 
This register controls the action of the faultless oracle, and it is initialized as the uniform superposition over all functions.

Let $|\unifR\>$ be the uniform superposition over all outputs, namely, $|\unifR\>:=\sum_{y\in[m]}|y\>/\sqrt{m}$. Hence, the initial state of the truth register can be written as $\bigotimes_{x\in[n]} |\unifR\>_{\regFunc_x}$.

As before, we assume that the query-input register is also used by an algorithm to return the answer. If the algorithm answers $x$, then we want that $f(x)=\marked$, while we do not care about the values of $f$ on other inputs. Therefore, the projector on the successful outcomes is
\[
\Pi_{succ} := \sum_{x\in[n]}
\Big(|\marked\>\<\marked|_{\regFunc_x}\otimes \bigotimes_{x'\in[n]\setminus\{x\}} I_{\regFunc_{x'}}\otimes|x\>\<x|_{\regQi}\Big).
\]

Because of the tensor product structure of the space corresponding to the register $\regFunc$, we can now write the \emph{extended faultless oracle} call as
\begin{align*}
O := & \sum_{f}|f\>\<f|_\regFunc\otimes\sum_{x\in[n]}\sum_{y\in[m]}|x,y\oplus f(x)\>\<x,y|_\regQ \\
= & \sum_{x\in[n]}\sum_{y,y'\in[m]}
\Big(|y'\>\<y'|_{\regFunc_x}\otimes \bigotimes_{x'\in[n]\setminus\{x\}} I_{\regFunc_{x'}}
\Big)
\otimes |x,y\oplus y'\>\<x,y|_\regQ.
\end{align*}

\paragraph{Record register.}

As before, the proof actually deals with the dephasing noise which with probability $p$ non-destructively measures $\regQi$ in the computational basis and provides the algorithm with a flag bit indicating whether this measurement has taken place. Therefore, the record register and its expansion by oracle calls is defined exactly as before. Namely, we define
\[
O_Q := \big(I_{\regQi}\otimes |\bot\>_{\regR_t}\otimes |\flagQ\>_{\regW^+}\big) O
\qAnd
O_C := \Big(\sum_{x\in[n]}|x\>\<x|_\regQi\otimes|x\>_{\regR_t}\otimes|\flagC\>_{\regW^+}\Big)O.
\]
However, now the way how we use the record register to define spaces $\cH^{\progA},\cH^{\progB},\cH^{\progC}$ will be different.

\subsection{Progress measure}

Again we quantify the progress of the computation by considering the state $|\phi_t\>$ of the overall systems just before $(t+1)$-th oracle call, and inspecting its overlap on certain subspaces $\cH^{\progB}$ and $\cH^{\progC}$.

\paragraph{Projectors $\Pi^{\progA}_{t},\Pi^{\progB,\actv}_{t},\Pi^{\progB,\pasv}_{t},\Pi^{\progC}_{t}$.}

The projector $\Pi^{\progC}_{t}$ is defined essentially the same way as in the worst case setting, addressed in the main text, while the definitions of  $\Pi^{\progA}_{t}$ and  $\Pi^{\progB}_{t}$ are inspired by \cite{Rosmanis2021}, where noiseless search was analyzed. All three of them act on registers $\regFunc\regR$ (and, implicitly, as the identity on other registers).

As before, let $\Pi^{\progC}_{t}$ project on all those function-record pairs $(f,R)$ such that there exists $x\in\inrec{R}$ such that $f(x)$ is the searched element $\marked$. Unlike before, such an $x$ might not be unique.
Towards defining $\Pi^{\progA}_{t}$ and  $\Pi^{\progB}_{t}$, first define an ``approximation'' of $|\marked\>$ as 
\[
|\tilde\marked\> := 
 \sqrt{\frac{m-1}{m}}|\marked\> - \frac1{\sqrt{m(m-1)}}\sum_{y\in[m]\setminus\{\marked\}}|y\>,
\]
which is orthogonal to the uniform superposition $|\unifR\>$. The state $|\tilde\marked\>$ serves a somewhat similar role to the approximation $|\tilde{f}_{z,R}\>$ of $|f_z\>$ defined in Section~\ref{sec:ActAndPas}, except that $|\tilde\marked\>$  is independent of the record $R$. 
 We define
\begin{align*}
\Pi^{\progC}_{t} := \, & \sum_{R\in\rec^t} \sum_{\substack{f\\\marked\in f(\inrec{R})}}|f\>\<f|_{\regFunc} \otimes |R\>\<R|
\\ = \, & \sum_{R\in\rec^t} \Big[ I_{\regFunc_{\inrec{R}}}-\bigotimes_{x\in\inrec{R}}(I_m-|\marked\>\<\marked|)_{\regFunc_x}\Big]
\otimes I_{\regFunc_{[n]\setminus\inrec{R}}} \otimes |R\>\<R|,
\\ \Pi^{\progB}_{t} := \, &
\sum_{R\in\rec^t} 
\bigotimes_{x\in\inrec{R}}(I_m-|\marked\>\<\marked|)_{\regFunc_x}
\otimes
\Big[ I_{\regFunc_{[n]\setminus\inrec{R}}}-\bigotimes_{x\in[m]\setminus\inrec{R}}(I_m-|\widetilde\marked\>\<\widetilde\marked|)_{\regFunc_x}\Big]
\otimes |R\>\<R|,
\\ \Pi^{\progA}_{t} := \, &
\sum_{R\in\rec^t} 
\bigotimes_{x\in\inrec{R}}(I_m-|\marked\>\<\marked|)_{\regFunc_x}
\otimes
\bigotimes_{x\in[m]\setminus\inrec{R}}(I_m-|\widetilde\marked\>\<\widetilde\marked|)_{\regFunc_x}
\otimes |R\>\<R|,
\end{align*}
where we assume that $I_{\regFunc_{\inrec{R}}}=\bigotimes_{x\in\inrec{R}}(I_m-|\marked\>\<\marked|)_{\regFunc_x}=1$ when $\inrec{R}=\emptyset$.
Here, for a subset $S\subseteq [n]$, $\regFunc_S$ is the concatenation of all registers $\regFunc_x$ with $x\in S$.

We further decompose $\Pi^{\progB}_{t} \otimes I_\regQi $ as the sum of two projectors $\Pi^{\progB,\actv}_{t}$ and $\Pi^{\progB,\pasv}_{t}$, where
\begin{multline*}
\Pi^{\progB,\actv}_{t} := 
\sum_{R\in\rec^t} 
\bigotimes_{x'\in\inrec{R}}(I_n-|\marked\>\<\marked|)_{\regFunc_{x'}}
\otimes\sum_{x\in[m]\setminus\inrec{R}} \bigg(|\widetilde\marked\>\<\widetilde\marked|_{\regFunc_{x}}
\\ 
\otimes \bigotimes_{x''\in[m]\setminus(\inrec{R}\cup\{x\})}(I_n-|\widetilde\marked\>\<\widetilde\marked|)_{\regFunc_{x''}}
\bigg)
\otimes |R\>\<R| \otimes |x\>\<x|.
\end{multline*}
The definition of $\Pi^{\progB,\actv}_{t}$ and its corresponding space $\cH^{\progB,\actv}_{t}$ are chosen to indicate the case when the quantum algorithm has found one and \emph{only one} marked element. This limitation to one marked element, first of all, is what we will care about when we are considering transitions from $\cH^\progA$ to $\cH^{\progB,\actv}$. And, second, this limitation will ensure that neither $O_Q$ nor $O_C$ can cause transitions from $\cH^{\progB,\actv}$ to $\cH^{\progB,\pasv}$.

\paragraph{Progress measure and the main lemma.}

Similarly as before, we define the progress measure as 
\[
\Psi_t:=\|\Pi^\progC_{t}|\phi_t\>\|^2 + 4 \|\Pi^\progB_{t}|\phi_t\>\|^2.
\]
Note that now the constant in front of $\|\Pi^\progB_{t}|\phi_t\>\|^2$ is $4$, not $3$. As per Remark~\ref{rmk:FourInProg}, we could reduce this constant, but we leave it to be $4$ for the sake of simplicity. The reason why it is larger now than in the worst case setting considered in the main text is that now we will not claim that the space $\cH^{\progB,\pasv}$ is invariant under oracle calls. Indeed, because $\cH^{\progB,\pasv}$ contains some states that essentially correspond to having found multiple marked elements, we have $\Pi^{\progC}O_C\Pi^{\progB,\pasv}\ne 0$.%
\footnote{Regarding the quantum oracle call, from Claims~\ref{clm:nonAlterRAN}, \ref{clm:OQonActRAN}, and~\ref{clm:OQonPiARAN} below, it can be seen that we still have $O_Q\Pi^{\progB,\pasv}_t= \Pi^{\progB,\pasv}_{t+1}O_Q$.}

Similarly to Lemma~\ref{lem:progEvol} for the worst case setting, we will now show the following.
\begin{lem}
\label{lem:progEvolRand}
We have
\begin{subequations}
\begin{align}
& \Psi_0 = 0, \\
& q_{succ} \le \Psi_\tau +  \frac2{m}, \\
& \Psi_{t+1}-\Psi_{t} \le \frac{32}{p(m-1)}.
\end{align}
\end{subequations}
\end{lem}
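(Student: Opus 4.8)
The plan is to prove Lemma~\ref{lem:progEvolRand} by adapting the argument for Lemma~\ref{lem:progEvol} to the random function setting, with the three claims established essentially in the same order. The first claim $\Psi_0 = 0$ is immediate: the record is empty, so $\Pi^\progC_0$ and $\Pi^\progB_0$ both contain only the factor $[I_{\regFunc_{[n]}} - \bigotimes_{x\in[n]}(I_m - |\widetilde\marked\>\<\widetilde\marked|)_{\regFunc_x}]$ (for $\Pi^\progB$) or vanish entirely (for $\Pi^\progC$), and the initial truth state $\bigotimes_x |\unifR\>_{\regFunc_x}$ is annihilated by each $|\widetilde\marked\>\<\widetilde\marked|_{\regFunc_x}$ since $\<\widetilde\marked|\unifR\> = 0$. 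Hence $\Pi^\progB_0 |\phi_0\> = 0$ and $\Pi^\progC_0|\phi_0\> = 0$. For the second claim, I would argue exactly as in the proof of the second claim of Lemma~\ref{lem:progEvol}: $\Pi_{succ}$ commutes with $\Pi^\progC_\tau$ and with $\Pi^\progB_\tau + \Pi^\progA_\tau$ (all diagonal in the appropriate product basis on $\regFunc\regR\regQi$), so $q_{succ} = \|\Pi_{succ}\Pi^\progC_\tau|\phi_\tau\>\|^2 + \|\Pi_{succ}(\Pi^\progB_\tau + \Pi^\progA_\tau)|\phi_\tau\>\|^2 \le \|\Pi^\progC_\tau|\phi_\tau\>\|^2 + (\|\Pi^\progB_\tau|\phi_\tau\>\| + \|\Pi_{succ}\Pi^\progA_\tau\|)^2 \le \Psi_\tau + 2\|\Pi_{succ}\Pi^\progA_\tau\|^2$, using $\||\phi_\tau\>\| = 1$ and the scalar $4 \ge 1 + 1/1$ from Remark~\ref{rmk:FourInProg} (with $\alpha = 3$, $1+1/\alpha = 4/3$, so in fact $2/m$ with room to spare; more carefully, $\|\Pi_{succ}\Pi^\progA_\tau\|^2 = |\<\marked|\widetilde\marked\>|^2 = (m-1)/m < 1$ on one coordinate but the crucial bound is that $|\<\marked,\text{(no }\marked\text{ elsewhere)}| \cdot |\widetilde\marked\>\otimes(\text{no }\widetilde\marked)\>|$ evaluates to $\sqrt{(m-1)/m}\cdot(\ldots)$; I would follow the worst-case computation to land on $\le 1/m$ after the appropriate normalization, giving the $2/m$ term).

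For the third claim — the core of the lemma — I would first port over the analogues of Claims~\ref{clm:nonAlter}--\ref{clm:OConPiA}, which the appendix refers to as Claims~\ref{clm:nonAlterRAN}, \ref{clm:OQonActRAN}, \ref{clm:OQonPiARAN}, etc. The key structural facts I would need: (i) $O_Q$ commutes with $\Pi^\progC$ and appends $\bot$; $O_C$ preserves membership in $\cH^\progC$ and the images of $\Pi^\progC_{t+1}O_C\Pi^\progC_t$ and $\Pi^\progC_{t+1}O_C(\Pi^\progB_t+\Pi^\progA_t)$ are orthogonal (time-stamp argument, identical to Claim~\ref{clm:nonAlter}); (ii) $O_Q$ leaves $\cH^{\progB,\pasv}$ invariant, i.e. $O_Q\Pi^{\progB,\pasv}_t = \Pi^{\progB,\pasv}_{t+1}O_Q$ (the footnote in the appendix asserts this), so that $\Pi^\progB_{t+1}O_Q = \Pi^{\progB,\actv}_{t+1}O_Q\Pi^{\progB,\actv}_t + \Pi^{\progB,\pasv}_{t+1}O_Q\Pi^{\progB,\pasv}_t + \Pi^{\progB,\actv}_{t+1}O_Q\Pi^\progA_t$ with orthogonal active/passive images; (iii) the $O_Q$ transfer norm from $\cH^\progA$ to $\cH^{\progB,\actv}$ is $\le 2/\sqrt{m-1}$ (analogue of Claim~\ref{clm:OQonPiA}, now driven by the approximation $|\widetilde\marked\>$ which makes $\<\widetilde\marked|\marked\>\<\marked|\unifR\> \approx \sqrt{(m-1)/m}\cdot(1/\sqrt m)$, so after the factor $-2$ from $O_{f}$ acting as $I - 2|\cdot\rangle\langle\cdot|$ the norm is $\le 2\sqrt{m-1}/m \le 2/\sqrt{m-1}$); (iv) $O_C$ maps $\cH^\progA$ only into $\cH^\progA_{t+1}\oplus\cH^\progC_{t+1}$ with the $\cH^\progC$ piece of norm$^2$ exactly $1/m$, and $\Pi^\progB_{t+1}O_C\Pi^\progA_t = 0$ (analogue of Claim~\ref{clm:OConPiA}); (v) $O_C$ removes $\cH^{\progB,\actv}$ (i.e. $\Pi^\progB_{t+1}O_C\Pi^{\progB,\actv}_t = 0$, since once the unique $\widetilde\marked$-slot gets queried classically it becomes a $\marked$-slot or falls back, landing in $\cH^\progC\oplus\cH^\progA$ — analogue of Claim~\ref{clm:OQonAct}), but note the new subtlety flagged in the appendix: $\Pi^\progC_{t+1}O_C\Pi^{\progB,\pasv}_t$ need \emph{not} vanish. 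This last point is exactly why the passive subspace is no longer invariant and why we carry the extra constant $4$; I would handle it by simply bounding $\|\Pi^\progC_{t+1}O_C\Pi^{\progB,\pasv}_t|\phi_t\>\| \le \|\Pi^{\progB,\pasv}_t|\phi_t\>\|$ (it's the restriction of an isometry) and feeding it into the $\cH^\progC$ accounting.

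Assembling these, I would mimic Lemma~\ref{lem:PiBCOQC}: $\|\Pi^\progB_{t+1}O_Q|\phi_t\>\|^2 \le (\|\Pi^{\progB,\actv}_t|\phi_t\>\| + 2/\sqrt{m-1})^2 + \|\Pi^{\progB,\pasv}_t|\phi_t\>\|^2$; $\|\Pi^\progB_{t+1}O_C|\phi_t\>\|^2 \le \|\Pi^{\progB,\pasv}_t|\phi_t\>\|^2$ (using (v) and that $O_C$ is an isometry — now an inequality rather than an equality because of the $\cH^\progC$-leakage from the passive part, but that only helps); $\|\Pi^\progC_{t+1}O_Q|\phi_t\>\|^2 = \|\Pi^\progC_t|\phi_t\>\|^2$; and $\|\Pi^\progC_{t+1}O_C|\phi_t\>\|^2 \le \|\Pi^\progC_t|\phi_t\>\|^2 + 2\|\Pi^{\progB,\actv}_t|\phi_t\>\|^2 + 2\|\Pi^{\progB,\pasv}_t|\phi_t\>\|^2 + 2/m$, where the time-stamp orthogonality lets the $\cH^\progC$ weights add and the triangle inequality plus $(a+b+c)^2 \le 3(a^2+b^2+c^2)$-type splitting produces the cross terms (one would tune constants here; an honest bookkeeping would split $\|\Pi^\progC_{t+1}O_C(\Pi^{\progB,\actv}_t+\Pi^{\progB,\pasv}_t+\Pi^\progA_t)|\phi_t\>\|^2$ by Cauchy–Schwarz). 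Then, with $O_p = \sqrt{1-p}O_Q + \sqrt p O_C$ and using that $\Pi^\progB_{t+1}O_Q, \Pi^\progB_{t+1}O_C, \Pi^\progC_{t+1}O_Q, \Pi^\progC_{t+1}O_C$ have pairwise orthogonal images and that $U_{t+1}$ commutes with $\Pi^\progB_{t+1}, \Pi^\progC_{t+1}$,
\begin{align*}
\Psi_{t+1} = \, & (1-p)\|\Pi^\progC_{t+1}O_Q|\phi_t\>\|^2 + p\|\Pi^\progC_{t+1}O_C|\phi_t\>\|^2 \\
& + 4(1-p)\|\Pi^\progB_{t+1}O_Q|\phi_t\>\|^2 + 4p\|\Pi^\progB_{t+1}O_C|\phi_t\>\|^2,
\end{align*}
and substituting, using $\|\Pi^\progC_t|\phi_t\>\|^2 + 4\|\Pi^{\progB,\pasv}_t|\phi_t\>\|^2 = \Psi_t - 4\|\Pi^{\progB,\actv}_t|\phi_t\>\|^2$ together with $-(4 - 2p) + 4(1-p) = -2p < 0$ (the strict-negativity condition from Remark~\ref{rmk:FourInProg}, here with scalars $4$ and $2$), the $\|\Pi^{\progB,\actv}_t|\phi_t\>\|$ terms collapse into a negative perfect square $-p(\|\Pi^{\progB,\actv}_t|\phi_t\>\| - c/\sqrt{m-1})^2$ plus a constant, leaving $\Psi_{t+1} - \Psi_t \le 32/(p(m-1))$ after bounding $1/m < 1/(m-1)$ everywhere. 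The main obstacle I anticipate is item (v)/the failure of passive-subspace invariance: one must check carefully that $\Pi^\progC_{t+1}O_C\Pi^{\progB,\pasv}_t$, though nonzero, contributes only a term absorbable by the extra unit in the scalar $4$ versus $2$, and that no transition $\cH^{\progB,\pasv}\to\cH^{\progB,\actv}$ occurs under either oracle (this is the payoff of defining $\cH^{\progB,\actv}$ via \emph{exactly one} $\widetilde\marked$-slot). Everything else is a constant-chasing replay of Sections~\ref{sec:transit}--\ref{sec:boundProg} with $n$ replaced by $m$ and the record-dependent approximation $|\tilde f_{x,R}\>$ replaced by the record-independent $|\widetilde\marked\>$.
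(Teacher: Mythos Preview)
Your overall architecture is right, but the handling of the passive-to-$\cH^\progC$ leakage has a genuine gap that would break the bound. You propose to (a) bound $\|\Pi^\progB_{t+1}O_C|\phi_t\>\|^2 \le \|\Pi^{\progB,\pasv}_t|\phi_t\>\|^2$ by simply dropping the leakage (``that only helps''), and (b) bound $\|\Pi^\progC_{t+1}O_C\Pi^{\progB,\pasv}_t|\phi_t\>\| \le \|\Pi^{\progB,\pasv}_t|\phi_t\>\|$ and feed the full passive weight into the $\cH^\progC$ inequality. Doing both at once is fatal: in your assembly the passive contributions become $p\cdot 2\|\Pi^{\progB,\pasv}_t|\phi_t\>\|^2 + 4(1-p)\|\Pi^{\progB,\pasv}_t|\phi_t\>\|^2 + 4p\|\Pi^{\progB,\pasv}_t|\phi_t\>\|^2 = (4+2p)\|\Pi^{\progB,\pasv}_t|\phi_t\>\|^2$, and after subtracting the $4\|\Pi^{\progB,\pasv}_t|\phi_t\>\|^2$ from $\Psi_t$ you are left with a residual $+2p\|\Pi^{\progB,\pasv}_t|\phi_t\>\|^2$. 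This can be $\Theta(p)$, not $O(1/(p(m-1)))$, so the third claim fails.

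The fix the paper uses is to \emph{not} crudely bound the leakage: keep $L:=\|\Pi^\progC_{t+1}O_C\Pi^{\progB,\pasv}_t|\phi_t\>\|^2$ as a named quantity in \emph{both} places. On the $\Pi^\progB O_C$ side one has the \emph{equality} $\|\Pi^\progB_{t+1}O_C|\phi_t\>\|^2 = \|\Pi^{\progB,\pasv}_t|\phi_t\>\|^2 - L$ (since $O_C$ is an isometry and $O_C\Pi^{\progB,\pasv}_t$ lands only in $\cH^{\progB,\pasv}_{t+1}\oplus\cH^\progC_{t+1}$), contributing $-4pL$; on the $\Pi^\progC O_C$ side, use $(a+b+c)^2\le 3a^2+3b^2+3c^2$ to get a $+3pL$ term. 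The net coefficient of $L$ is then $-p<0$ and the term can be dropped. \emph{That} is what the extra unit in the scalar $4$ (versus the $3$ obtained from the three-term Cauchy--Schwarz) is buying you---not absorption of the full passive weight.

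Two smaller points. In (iii), there is no ``factor $-2$'' here: the random-function oracle is the XOR oracle $|x,y\>\mapsto|x,y\oplus f(x)\>$, not $I-2|\cdot\>\<\cdot|$, and the correct transfer norm is $\|\Pi^{\progB,\actv}_{t+1}O_Q\Pi^\progA_t\|=\sqrt{2m-3}/(m-1)<\sqrt{2/(m-1)}$, not $2/\sqrt{m-1}$; the $\sqrt{2}$ (rather than $2$) is what makes the constant come out to $32$. For the second claim, the clean computation is $\|\Pi_{succ}\Pi^\progA_\tau\| = \||\marked\>\<\marked|(I_m-|\tilde\marked\>\<\tilde\marked|)\| = |\<\marked|\unifR\>| = 1/\sqrt m$, since $|\marked\>\in\mathrm{span}\{|\tilde\marked\>,|\unifR\>\}$.
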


Given Lemma~\ref{lem:progEvolRand}, we can see that the final success probability is upper bounded as
\[
q_{succ} \le \frac{2}{m-1} + \frac{32\tau}{p(m-1)}.
\]
So, if we want $q_{succ}$ to be at least $1-\epsilon$, we must have
\[
\tau \ge \frac{p(1-\epsilon)(m-1)-2p}{32}
\ge \frac{pm(1-\epsilon)}{32}-1,
\]
which proves Theorem~\ref{thm:mainRand}. It is left to prove Lemma~\ref{lem:progEvolRand} itself. As before, its first two claims are easier to show.

\paragraph{First two claims of Lemma~\ref{lem:progEvolRand}.}

At the beginning of the computation, the record is empty and the truth register is in the state $|\unifR\>^{\otimes n}$. The first claim of Lemma~\ref{lem:progEvolRand} then holds because $\<\widetilde\marked|\unifR\>=0$.

As for the second claim, recall the projector on successful outcomes $\Pi_{succ}$, given in Section~\ref{sec:TruthRecord}.
Note that both $\Pi_{succ}$ and $\Pi^\progC_\tau$ are orthogonal in the computational basis, so they commute, and thus $\Pi_{succ}$ also commutes with $\Pi^\progA_\tau+\Pi^\progB_\tau$.
So, using the same chain of inequalities as in the proof of the second claim of Lemma~\ref{lem:progEvol}, we have
$q_{succ} \le \Psi_\tau + 2\|\Pi_{succ}\Pi^\progA_\tau\|^2$.
Finally, by inspecting the expressions for $\Pi_{succ}$ and $\Pi^\progA_\tau$ above, we can see that the non-zero parts of $\Pi_{succ}\Pi^\progA_\tau$ correspond to $x\in[n]\setminus\inrec{R}$, and thus
\[
\big\|\Pi_{succ}\Pi^\progA_\tau\big\| = 
\big\||\marked\>\<\marked|(I_n-|\tilde\marked\>\<\tilde\marked|)_{\regFunc_x}\big\| 
= 1/\sqrt{m}.
\]

\subsection{Transitions among progress-defining subspaces}

To prove the third claim of Lemma~\ref{lem:progEvolRand}, here we establish equivalents of Claims~\ref{clm:nonAlter}, \ref{clm:OQonAct}, \ref{clm:OQonPiA}, \ref{clm:OConPiA}. Unlike before, now the space $\cH^{\progB,\pasv}$ is not invariant under oracle calls, so we do not have an equivalent of Claim~\ref{clm:OQonPas}. However, even without this claim, we can still establish $\Pi^\progB_{t+1} O_Q\Pi^\progA_t = \Pi^{\progB,\actv}_{t+1} O_Q\Pi^\progA_t $, which was the main purpose of Claim~\ref{clm:OQonPas}.

\bigskip

First we claim the following, which is word by word equivalent to Claim~\ref{clm:nonAlter}. We omit its proof, which is essentially the same as that of Claim~\ref{clm:nonAlter}.

\begin{clm}
\label{clm:nonAlterRAN}
We have $O_Q\Pi^{\progC}_t = \Pi^{\progC}_{t+1}O_Q$ and $O_C\Pi^{\progC}_t = \Pi^{\progC}_{t+1}O_C\Pi^{\progC}_t$, and the images of $\Pi^\progC_{t+1} O_C\Pi^\progC_t$ and $\Pi^\progC_{t+1} O_C(\Pi^\progB_t+\Pi^\progA_t)$ are orthogonal.
\end{clm}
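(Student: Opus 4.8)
The plan is to mirror, essentially verbatim, the proof of Claim~\ref{clm:nonAlter}, the only substantive change being that the predicate ``$z\in\inrec R$'' that defined membership in $\cH^\progC_t$ in the worst case is now replaced by the predicate ``$\marked\in f(\inrec R)$'', i.e.\ that some already-recorded input evaluates to the sought value $\marked$. Three elementary facts drive everything. First, the faultless extended oracle $O=\sum_f|f\>\<f|_\regFunc\otimes O_f$ is block-diagonal in the computational basis of $\regFunc$, hence commutes with every operator of the form $\sum_f c_f|f\>\<f|_\regFunc\otimes I$, in particular with the $\regFunc$-diagonal parts of $\Pi^\progC_t$ and $\Pi^\progC_{t+1}$. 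Second, $O_Q$ appends $\bot$ to the record, and $\inrec{(R\app\bot)}=\inrec R$, so the predicate ``$\marked\in f(\inrec R)$'' is untouched by $O_Q$. Third, $O_C$ copies the content $x$ of $\regQi$ into $\regR_t$, and $\inrec{(R\app x)}=\inrec R\cup\{x\}\supseteq\inrec R$, so the predicate can only become ``more true'' under $O_C$.

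For the first equality I would expand $O_Q\Pi^\progC_t=\sum_{R\in\rec^t}\sum_{f:\,\marked\in f(\inrec R)}O_Q\bigl(|f\>\<f|_\regFunc\otimes|R\>\<R|\bigr)$, push $O_Q$ through using the block-diagonality and the fact that it appends $\bot$, obtaining $\sum_{R}\sum_{f:\,\marked\in f(\inrec R)}\bigl(|f\>\<f|_\regFunc\otimes|R\app\bot\>\<R|\bigr)O_Q$; since $\<R'|O_Q=0$ unless the last symbol of $R'$ is $\bot$, and for $R'=R\app\bot$ one has $\inrec{R'}=\inrec R$, this equals $\Pi^\progC_{t+1}O_Q$. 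For the second equality the same expansion, now with $O_C$ appending some $x\in[n]$, shows that each term of $O_C\Pi^\progC_t$ already lies in a block with $\marked\in f(\inrec R)\subseteq f(\inrec{(R\app x)})$, so $O_C\Pi^\progC_t$ is left-invariant under $\Pi^\progC_{t+1}$; thus $O_C\Pi^\progC_t=\Pi^\progC_{t+1}O_C\Pi^\progC_t$, and (as in the worst case) we do \emph{not} claim $\Pi^\progC_{t+1}O_C=O_C\Pi^\progC_t$.

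For the orthogonality of the images, I would restrict attention to the registers $\regFunc\regR$, on which $O_C$ and all projectors act only through $\regFunc$-block-diagonal operations plus appending to the record. Restricted to $\regFunc\regR$, the range of $\Pi^\progC_{t+1}O_C\Pi^\progC_t$ lies in $\mathrm{span}\{|f,R\app x\>:R\in\rec^t,\ x\in[n],\ \marked\in f(\inrec R)\}$, while the range of $\Pi^\progC_{t+1}O_C(\Pi^\progB_t+\Pi^\progA_t)$ lies in $\mathrm{span}\{|f,R\app x\>:R\in\rec^t,\ x\in[n],\ \marked\notin f(\inrec R)\}$ --- indeed, for the output of $O_C$ to land in $\cH^\progC_{t+1}$ when $\marked\notin f(\inrec R)$ one needs $f(x)=\marked$, which is exactly why the classical oracle must have ``measured'' a marked input. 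Since the triple $(f,R,x)$ is recovered uniquely from $|f,R\app x\>$, and the two index sets are separated by the mutually exclusive conditions $\marked\in f(\inrec R)$ versus $\marked\notin f(\inrec R)$, the two spans are orthogonal, and hence so are the full ranges (these spans tensored with arbitrary states of $\regQ\regW\regW^+$). The whole argument is routine bookkeeping; the one place that calls for a moment's care --- and the step I expect to be the only genuine obstacle --- is this last one, where one must keep track of whether the sought value $\marked$ is already present among the recorded inputs \emph{before} the $(t+1)$-th call, as opposed to being produced by it.
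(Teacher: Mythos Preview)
Your proposal is correct and is exactly the approach the paper intends: the paper explicitly omits the proof of Claim~\ref{clm:nonAlterRAN}, saying it is ``essentially the same as that of Claim~\ref{clm:nonAlter}'', and your adaptation---replacing the worst-case predicate $z\in\inrec R$ by $\marked\in f(\inrec R)$ and otherwise reproducing the three steps verbatim---is precisely that. Your final orthogonality argument, keying on whether $\marked$ was already among the recorded values \emph{before} the call, is the correct translation of the paper's ``$z\in\inrec R$ versus $z\notin\inrec R$'' split.
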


As mentioned before, now the equivalent of Claim~\ref{clm:OQonPas} does not hold. The next claim is essentially a union of a weaker version of Claim~\ref{clm:OQonPas} and the equivalent of Claim~\ref{clm:OQonAct}.

\begin{clm}
\label{clm:OQonActRAN}
We have $\Pi^{\progB}_{t+1}O_Q\Pi^{\progB,\actv}_t=\Pi^{\progB,\actv}_{t+1}O_Q\Pi^{\progB,\actv}_t$, $\Pi^{\progB}_{t+1}O_Q\Pi^{\progB,\pasv}_t=\Pi^{\progB,\pasv}_{t+1}O_Q\Pi^{\progB,\pasv}_t$, and
$O_C\Pi^{\progB,\actv}_t=(\Pi^\progC_{t+1}+\Pi^\progA_{t+1})O_C\Pi^{\progB,\actv}_t$.
\end{clm}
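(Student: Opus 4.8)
The three identities are the random-function counterparts of Claims~\ref{clm:OQonAct} and~\ref{clm:OQonPas}, and the plan is to treat the $O_C$ identity exactly as in the proof of Claim~\ref{clm:OQonAct} and to derive the two $O_Q$ identities together from a single decomposition of the truth--record space. The structural fact that makes this work --- and that substitutes for the invariance statement of Claim~\ref{clm:OQonPas}, which no longer holds --- is that the faultless oracle $O$, and hence $O_Q$ (which merely appends $|\bot\>_{\regR_t}|\flagQ\>_{\regW^+}$), leaves every computational-basis label of $\regFunc$ untouched: on $|f,x_1,y\>_{\regFunc\regQi\regQo}$ it returns $|f,x_1,y\oplus f(x_1)\>$, so it acts as the identity on every subregister $\regFunc_x$ with $x\neq x_1$ and only correlates $\regFunc_{x_1}$ with $\regQo$ (a computational-basis dephasing of $\regFunc_{x_1}$).

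\emph{A decomposition.} For $R\in\rec^t$ and a subset $T\subseteq[n]\setminus\inrec R$, let $W_{R,T}$ be the image of the projector $\bigotimes_{x'\in\inrec R}(I-|\marked\>\<\marked|)_{\regFunc_{x'}}\otimes\bigotimes_{x\in T}|\widetilde\marked\>\<\widetilde\marked|_{\regFunc_x}\otimes\bigotimes_{x\in([n]\setminus\inrec R)\setminus T}(I-|\widetilde\marked\>\<\widetilde\marked|)_{\regFunc_x}\otimes|R\>\<R|_\regR$. Since $\<\widetilde\marked|\unifR\>=0$, these subspaces are mutually orthogonal; $W_{R,\emptyset}$ is $\cH^\progA_t$ restricted to record $R$, $\bigoplus_{T\neq\emptyset}W_{R,T}$ is the analogous restriction of $\cH^\progB_t$, $\cH^{\progB,\actv}_t=\bigoplus_R\bigoplus_{x\in[n]\setminus\inrec R}W_{R,\{x\}}\otimes\mathbb C|x\>_\regQi$, and $\cH^{\progB,\pasv}_t$ is spanned by the subspaces $W_{R,\{x\}}\otimes(\mathbb C^n\ominus\mathbb C|x\>)_\regQi$ together with $W_{R,T}\otimes\mathbb C^n_\regQi$ for $|T|\ge 2$. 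Because $\Pi^\progB_{t+1}=\Pi^{\progB,\actv}_{t+1}+\Pi^{\progB,\pasv}_{t+1}$ as operators on $\regFunc\regR\regQi$, the first two assertions of the claim are equivalent to $\Pi^{\progB,\pasv}_{t+1}O_Q\Pi^{\progB,\actv}_t=0$ and $\Pi^{\progB,\actv}_{t+1}O_Q\Pi^{\progB,\pasv}_t=0$, and it suffices to verify these on vectors $|g\>\otimes|x_1\>_\regQi\otimes(\text{basis vectors of }\regQo,\regW)$ with $|g\>$ ranging over a fixed $W_{R,T}$.

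\emph{The $O_Q$ identities.} I would apply $O_Q$ to such a vector and project onto $\cH^\progB_{t+1}$, using that $\inrec{(R\app\bot)}=\inrec R$ and that $|\widetilde\marked\>$ is record-independent, so $\cH^{\progB,\actv}_{t+1}$ and $\cH^{\progB,\pasv}_{t+1}$ restricted to records ending in $\bot$ are literally the ``$R\app\bot$'' copies of $\cH^{\progB,\actv}_t$ and $\cH^{\progB,\pasv}_t$. Three cases, by the query input $x_1$ and by $T$. \emph{(i)} If $|T|\ge 2$, choose $x\in T\setminus\{x_1\}$; then $x\neq x_1$ lies outside $\inrec R$ and $\regFunc_x=|\widetilde\marked\>$ is untouched by $O_Q$, so either $x_1\in\inrec R$ (and $\Pi^{\progB,\actv}_{t+1}$, whose query-input index runs over $[n]\setminus\inrec R$, kills the state at once) or $x_1\notin\inrec R$ (and the unique surviving summand of $\Pi^{\progB,\actv}_{t+1}$ carries the factor $(I-|\widetilde\marked\>\<\widetilde\marked|)_{\regFunc_x}$, which annihilates it); either way $\Pi^{\progB,\actv}_{t+1}O_Q$ vanishes on this vector. \emph{(iii)} If $T=\{x_0\}$ with $x_1\neq x_0$, the same argument applies with $x_0$ playing the role of $x$. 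Cases (i) and (iii) together are exactly $\Pi^{\progB,\actv}_{t+1}O_Q\Pi^{\progB,\pasv}_t=0$. \emph{(ii)} If $T=\{x_0\}$ and $x_1=x_0$ (so $|g\>\otimes|x_0\>\in\cH^{\progB,\actv}_t$): $O_Q$ fixes every $\regFunc_{x''}$ with $x''\neq x_0$, so every register outside $\inrec R$ other than $\regFunc_{x_0}$ is still $\perp|\widetilde\marked\>$; hence the projector $\Pi^\progB_{t+1}$, which must leave a $|\widetilde\marked\>$-component on some register outside $\inrec R$, re-collapses $\regFunc_{x_0}$ onto $|\widetilde\marked\>$, and the surviving vector --- $\regFunc_{x_0}=|\widetilde\marked\>$, query input $x_0$, the $\inrec R$-registers $\perp|\marked\>$, the other outside registers $\perp|\widetilde\marked\>$, record $R\app\bot$ --- lies in $\cH^{\progB,\actv}_{t+1}$. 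This gives $\Pi^{\progB,\pasv}_{t+1}O_Q\Pi^{\progB,\actv}_t=0$.

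\emph{The $O_C$ identity, and the main obstacle.} For the $O_C$ identity I would reproduce the proof of Claim~\ref{clm:OQonAct} almost verbatim: on a spanning vector $|\psi\>_{\regFunc_{\inrec R}}|\widetilde\marked\>_{\regFunc_{x_0}}|\chi\>_{\regFunc_{\mathrm{rest}}}|R\>_\regR|x_0\>_\regQi|y\>_\regQo$ of $\cH^{\progB,\actv}_t$ (with $\psi$ coordinatewise $\perp|\marked\>$ and $\chi$ coordinatewise $\perp|\widetilde\marked\>$) expand $|\widetilde\marked\>=\sqrt{(m-1)/m}\,|\marked\>-\frac1{\sqrt{m(m-1)}}\sum_{y'\neq\marked}|y'\>$ and apply $O_C$, which appends $|x_0\>_{\regR_t}|\flagC\>_{\regW^+}$ and writes $f(x_0)$ into $\regQo$; the $|\marked\>_{\regFunc_{x_0}}$ part has $x_0\in\inrec{(R\app x_0)}$ and so lies in $\cH^\progC_{t+1}$, while the remaining part has $\regFunc_{x_0}\perp|\marked\>$ and, together with $\psi$ and $\chi$, lies in $\cH^\progA_{t+1}$ for the record $R\app x_0$. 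The only genuinely new point compared with the main text is case (ii): unlike $\cH^{\progB,\pasv}$ in Claim~\ref{clm:OQonPas}, the subspace $\cH^{\progB,\actv}_t$ is \emph{not} invariant under $O_Q$ (reading $\regFunc_{x_0}$ into $\regQo$ dephases it), so one cannot argue by invariance; the resolution is precisely the observation that all the other registers outside $\inrec R$ being $\perp|\widetilde\marked\>$ forces the $\cH^\progB_{t+1}$-projection to re-collapse $\regFunc_{x_0}$ onto $|\widetilde\marked\>$. I expect the bookkeeping of which tensor factor each projector acts on (and the fact that $O_Q$ touches only the queried coordinate's truth register, and even that only by reading it into $\regQo$) to be the part most prone to slips.
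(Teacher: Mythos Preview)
Your argument is correct. The treatment of the first $O_Q$ identity (your case~(ii)) and of the $O_C$ identity coincides almost exactly with the paper's: both apply the oracle to a spanning vector of $\cH^{\progB,\actv}_t$, observe that only the queried truth subregister $\regFunc_{x_0}$ is affected, and read off that the result splits between $\cH^{\progB,\actv}_{t+1}\oplus\cH^{\progA}_{t+1}$ (for $O_Q$) and $\cH^{\progC}_{t+1}\oplus\cH^{\progA}_{t+1}$ (for $O_C$).

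Where you differ is the second $O_Q$ identity. You introduce the explicit $W_{R,T}$ decomposition and verify $\Pi^{\progB,\actv}_{t+1}O_Q\Pi^{\progB,\pasv}_t=0$ directly via your cases~(i) and~(iii), locating an untouched $|\widetilde\marked\>$ on some $\regFunc_x$ with $x\ne x_1$ that the active projector kills. The paper instead dispatches this identity in one line: since, modulo the appended $|\bot\>_{\regR_t}|\flagQ\>_{\regW^+}$, $O_Q$ acts as the Hermitian unitary $O$ on $\regFunc\regQ$, and since $\inrec{(R\app\bot)}=\inrec R$ makes $\Pi^{\progB,\actv}_{t+1}$ and $\Pi^{\progB,\pasv}_{t+1}$ (restricted to records ending in $\bot$) the same projectors as $\Pi^{\progB,\actv}_t$ and $\Pi^{\progB,\pasv}_t$, the vanishing of $\Pi^{\progB,\pasv}_{t+1}O_Q\Pi^{\progB,\actv}_t$ (which you already have from case~(ii)) gives $\Pi^{\progB,\actv}_{t+1}O_Q\Pi^{\progB,\pasv}_t=0$ by taking adjoints. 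Your route is more self-contained and makes the subset structure $W_{R,T}$ explicit, which is pleasant; the paper's self-inverse observation is shorter and avoids the case split over $|T|$.
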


\begin{proof}
In this proof, we ignore the workspace, which is irrelevant here. In addition, we will ignore the flag bits $|\flagQ\>_{\regW_+}$ and $|\flagC\>_{\regW_+}$ introduced by $O_Q$ and $O_C$, respectively.

The space $\cH^{\progB,\actv}_t$ is spanned by vectors in form
\begin{equation}
\label{eq:HBactVec}
|\tilde\marked\>_{\regFunc_{x}}
\otimes |rest_{R,x}\>_{\regFunc_{[n]\setminus\{x\}}}
\otimes |R\>_\regR|x\>_\regQi |y\>_\regQo
\end{equation}
where $R\in\rec^t$, $x\in[n]\setminus\inrec{R}$, $y\in[m]$, and 
\[
|rest_{R,x}\>_{\regFunc_{[n]\setminus\{x\}}} := \bigotimes_{x'\in\inrec{R}}|z_{x'}\>_{\regFunc_{x'}} \otimes \bigotimes_{x''\in[n]\setminus(\inrec{R}\cup\{x\})} |\zeta_{x''}\>_{\regFunc_{x''}}
\]
satisfies $z_{x'}\ne\marked$ and $\<\tilde\marked|\zeta_{x''}\>=0$ for all $x'\in\inrec{R}$ and $x''\in[n]\setminus(\inrec{R}\cup\{x\})$.

\paragraph{Claims concerning $O_Q$.}
If we apply $O_Q$ to the vector (\ref{eq:HBactVec}), it will entangle registers $\regFunc_x$ and $\regQo$, append $\bot$ to $R$ (note: $\inrec{(R\app\bot)}=\inrec{R}$), and keep the other registers unchanged.
The part of the resulting state which still has $|\tilde\marked\>$ in the register $\regFunc_x$ belongs to $\cH^{\progB,\actv}_t$, while the part of the resulting state with the register $\regFunc_x$ being orthogonal to $|\tilde\marked\>$ belongs to $\cH^{\progA}_t$. We thus have shown 
$O_Q\Pi^{\progB,\actv}_t=(\Pi^{\progA}_{t+1}+\Pi^{\progB,\actv}_{t+1})O_Q\Pi^{\progB,\actv}_t$. 

Then, the part of the claim concerning $\Pi^{\progB,\pasv}$ holds because, aside from introducing states $|\bot\>$  and $|\flagQ\>$, the quantum oracle call $O_Q$ is its own inverse.

\paragraph{Claim concerning $O_C$.}
If we apply $O_C$ to the vector (\ref{eq:HBactVec}), we get
\begin{align*}
&
\sqrt{\frac{m-1}{m}}
|\marked\>_{\regFunc_{x}}
\otimes |rest_{R,x}\>_{\regFunc_{[n]\setminus\{x\}}}
\otimes |R\app x\>_\regR|x\>_\regQi |y\oplus\marked\>_\regQo
\\
&
+
\frac{1}{\sqrt{m(m-1)}}
\sum_{y'\in[m]\setminus\{\marked\}}
|y'\>_{\regFunc_{x}}
\otimes |rest_{R,x}\>_{\regFunc_{[n]\setminus\{x\}}}
\otimes |R\app x\>_\regR|x\>_\regQi |y\oplus y'\>_\regQo.
\end{align*}
Now that $x$ is in the record, the former state is in $\cH^{\progC}_t$, while the latter is in $\cH^{\progA}_t$.
\end{proof}

Similarly to Claim~\ref{clm:OQonPiA}, we now have the following.

\begin{clm}
\label{clm:OQonPiARAN}
We have $\Pi^\progB_{t+1} O_Q\Pi^\progA_t = \Pi^{\progB,\actv}_{t+1} O_Q\Pi^\progA_t $, and its norm is 
$\frac{\sqrt{2m-3}}{m-1}$.
\end{clm}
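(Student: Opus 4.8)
My plan is to mirror the structure of the proof of Claim~\ref{clm:OQonPiA} almost verbatim, adapting it to the tensor-product truth register $\regFunc=\bigotimes_x\regFunc_x$ of the random-function setting. First I would establish the operator equality $\Pi^\progB_{t+1}O_Q\Pi^\progA_t=\Pi^{\progB,\actv}_{t+1}O_Q\Pi^\progA_t$. The cleanest way is to note that $\cH^\progB_{t+1}=\cH^{\progB,\actv}_{t+1}\oplus\cH^{\progB,\pasv}_{t+1}$, so it suffices to show $\Pi^{\progB,\pasv}_{t+1}O_Q\Pi^\progA_t=0$. For this I would use the analogue of Claim~\ref{clm:OQonPas} for $\progB,\pasv$ that is asserted (in the footnote after Lemma~\ref{lem:progEvolRand}) to still hold, namely $O_Q\Pi^{\progB,\pasv}_t=\Pi^{\progB,\pasv}_{t+1}O_Q$; combined with $\Pi^{\progB,\pasv}_t\Pi^\progA_t=0$ this forces $\Pi^{\progB,\pasv}_{t+1}O_Q\Pi^\progA_t=\Pi^{\progB,\pasv}_{t+1}O_Q\Pi^\progA_t\Pi^\progA_t$ — wait, more directly, since $\Pi^{\progB,\pasv}_{t+1}O_Q = O_Q\Pi^{\progB,\pasv}_t$ and $\Pi^{\progB,\pasv}_t\Pi^\progA_t = 0$, we get $\Pi^{\progB,\pasv}_{t+1}O_Q\Pi^\progA_t = O_Q\Pi^{\progB,\pasv}_t\Pi^\progA_t = 0$. (Alternatively I can reprove this directly: $O_Q$ acts as $I-2\sum_{f,x}|f,x,-\>\<f,x,-|$ tensored with $|\bot,\flagQ\>$, and any vector in $\cH^\progA_t$ — whose $\regFunc$-part is orthogonal to $|\marked\>$ on recorded coordinates and to $|\tilde\marked\>$ on unrecorded ones — after application of $O_Q$ still has $\regFunc_{x}$-components that are superpositions of $|\marked\>$ and the $|\tilde\marked\>$-orthogonal complement only, which lands in $\cH^\progA_{t+1}\oplus\cH^{\progB,\actv}_{t+1}$.)

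Next I would compute the norm. Writing $|\phi\>\in\cH^\progA_t$ restricted to $\regFunc\regR$ as a combination of vectors $|\psi^{(R)}\>_\regFunc|R\>_\regR$ where $|\psi^{(R)}\> = \bigotimes_{x'\in\inrec R}|\xi_{x'}\>\otimes\bigotimes_{x''\notin\inrec R}|\eta_{x''}\>$ with $\xi_{x'}\perp|\marked\>$ and $\eta_{x''}\perp|\tilde\marked\>$, I would apply $O_Q$ and then project onto $\cH^{\progB,\actv}_{t+1}$. As in Claim~\ref{clm:OQonPiA}, the $I$ part of $O_Q$ contributes nothing to $\cH^{\progB,\actv}_{t+1}$ (the state stays in $\cH^\progA_{t+1}$), so only the $-2\sum_{x}|f,x,-\>\<f,x,-|$ part matters, and for fixed unrecorded coordinate $x$ the relevant overlap is $\langle\tilde\marked|_{\regFunc_x}\big(\text{the }\regFunc_x\text{-marginal after the }|-\>_{\regQo}\text{ kick-back}\big)$ — concretely, the coefficient is governed by how much weight $|\eta_x\>$ places along the direction that $O_Q$'s "marked" term acts on, i.e. along $|\marked\>_{\regFunc_x}$, producing a sign flip that, passed through $\langle\tilde\marked|$, yields the overlap $\langle\tilde\marked|\marked\rangle\langle\marked|\eta_x\rangle$ plus a term from the part of $|\tilde\marked\>$ lying in $\mathrm{span}\{|\marked\>\}^\perp$. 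Since the terms for distinct $(R,x)$ are pairwise orthogonal, the operator norm is $\max$ over $(R,x)$ of the relevant single-coordinate operator norm, and I expect this to evaluate to $\sqrt{2m-3}/(m-1)$ after using $|\tilde\marked\> = \sqrt{(m-1)/m}\,|\marked\> - \frac1{\sqrt{m(m-1)}}\sum_{y\ne\marked}|y\>$; the factor differs from the $2\sqrt{n-t-1}/(n-t)$ of Claim~\ref{clm:OQonPiA} because there the kick-back direction $|x,-\>$ was matched perfectly against $|f_x\>$, whereas here $|-\>_{\regQo}$ interacts with the two-dimensional span generated within $\regFunc_x$, and the relevant norm becomes $\sqrt{1-\langle\tilde\marked|\unifR\rangle^2+\cdots}$-type expression that simplifies to $\sqrt{2m-3}/(m-1)$.

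The main obstacle I anticipate is getting the constant $\sqrt{2m-3}/(m-1)$ exactly right: one must carefully track how $O_Q$ — which flips the phase on $|x,-\>_{\regQi\regQo}$ for \emph{every} $f$ with $f(x)=1$, i.e. essentially implements a reflection coupled across all of $\regFunc_x$ — acts on the state $|\eta_x\>\otimes|-\>_{\regQo}$, and then how much of the result lies along $|\tilde\marked\>_{\regFunc_x}$ while the rest of the $\regFunc$-register and the record are spectators. Unlike the worst-case proof, here $|\tilde\marked\>$ is record-independent, which actually simplifies bookkeeping, but the two-level structure within each $\regFunc_x$ (the distinguished states $|\marked\>$ and $|\tilde\marked\>$ are not orthogonal) is where the arithmetic lives; I would reduce everything to a single $2\times2$ or $3\times3$ Gram-matrix computation in $\mathrm{span}\{|\marked\>,|\tilde\marked\>,|\unifR\>\}$, take the operator norm there, and verify it equals $\sqrt{2m-3}/(m-1)$, being careful that the maximizing $|\eta_x\>$ is the unit vector in $\mathrm{span}\{|\marked\>\}^\perp$ closest in the appropriate sense — all other steps (orthogonality of summands, the operator equality, reduction to a single coordinate) are routine transcriptions of Claim~\ref{clm:OQonPiA}.
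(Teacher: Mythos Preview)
Your plan has the right shape but contains a genuine conceptual error that would derail the computation. You repeatedly invoke the formula $O_Q=(I-2\sum_{f,x}|f,x,-\>\<f,x,-|)\otimes|\bot,\flagQ\>$ and speak of the ``$|-\>_{\regQo}$ kick-back''; that is the $m=2$ (worst-case) identity for $O_{f_x}$. In the random-function appendix $m$ is general, $\regQo$ is $[m]$-valued, and the extended oracle acts as the controlled XOR $\sum_{y,y'}|y'\>\<y'|_{\regFunc_x}\otimes|y\oplus y'\>\<y|_{\regQo}$, which is a permutation on $\regFunc_x\otimes\regQo$, not a reflection about a single $|-\>$ direction. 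Both your alternative derivation of the operator equality and your norm sketch rely on that reflection picture, so neither goes through as written.

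Separately, your primary route to the operator equality---using $O_Q\Pi^{\progB,\pasv}_t=\Pi^{\progB,\pasv}_{t+1}O_Q$ from the footnote---is circular: that footnote explicitly says the passive-subspace invariance is \emph{deduced from} Claims~\ref{clm:nonAlterRAN}, \ref{clm:OQonActRAN}, and~\ref{clm:OQonPiARAN}, the last being the claim you are proving.

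The paper sidesteps both issues by computing $\Pi^\progB_{t+1}O_Q\Pi^\progA_t$ directly. Querying $x\in\inrec{R}$ leaves $\cH^\progA$ invariant, so only $x\notin\inrec{R}$ contributes; writing out the product, the $\regFunc_x$-factor is $|\tilde\marked\>\<\tilde\marked|y'\>\<y'|(I_m-|\tilde\marked\>\<\tilde\marked|)$ while all other $\regFunc_{x''}$ keep their $(I_m-|\tilde\marked\>\<\tilde\marked|)$, which visibly lands in $\cH^{\progB,\actv}_{t+1}$. The norm then reduces to
\[
\Big\|\sum_{y,y'\in[m]}\<\tilde\marked|y'\>\<y'|(I_m-|\tilde\marked\>\<\tilde\marked|)\otimes|y\oplus y'\>\<y|\Big\|,
\]
a single-register computation on $\regFunc_x\otimes\regQo$ that the paper outsources to \cite[Proof of Claim~11]{Rosmanis2021}. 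Your Gram-matrix idea can be made to work here, but it must be set up on the joint space $\regFunc_x\otimes\regQo$ with the full XOR permutation, not on $\regFunc_x$ alone with a phase flip.
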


\begin{proof}
Informally, if we query $x$ that is already in the record, nothing changes. Hence, the only way to transit from $\cH^{\progA}$ to $\cH^{\progB}$ is by querying $x$ that is not in the record. More formally, we have
\begin{align*}
 \Pi^{\progB}_{t+1}O_Q\Pi^{\progA}_{t} = \, &
\sum_{R\in\rec^t} 
\bigotimes_{x'\in\inrec{R}}(I_m-|\marked\>\<\marked|)_{\regFunc_{x'}}
\\ & \qquad
\otimes
\sum_{x\in[n]\setminus\inrec{R}}\sum_{y,y'\in[m]}
\bigg(
\bigotimes_{x''\in[n]\setminus(\inrec{R}\cup\{x\})}(I_m-|\widetilde\marked\>\<\widetilde\marked|)_{\regFunc_{x''}}
\\ & \qquad\qquad\qquad\qquad\qquad\quad
\otimes \big(|\tilde\marked\>\<\tilde\marked|y'\>\<y'|(I_m-|\tilde\marked\>\<\tilde\marked|)\big)_{\regFunc_x} \bigg)
\\ & \qquad
\otimes |R\app\bot\>\<R|
\otimes |\flagQ\>
\otimes |x\>\<x|
\otimes |y\oplus y'\>\<y|.
\end{align*}
It is easy to see that the image of the above operator is in $\cH^{\progB,\actv}_{t+1}$, and its norm is
\begin{align*}
 \|\Pi^{\progB}_{t+1}O_Q\Pi^{\progA}_{t}\| = \, &\Big\|
\sum_{y,y'\in[m]} \<\tilde\marked|y'\>\<y'|(I_m-|\tilde\marked\>\<\tilde\marked|)
\otimes |y\oplus y'\>\<y|\Big\|
\end{align*}
which is computed exactly in \cite[Proof of Claim~11]{Rosmanis2021} as being $\frac{\sqrt{2m-3}}{m-1}$.
\end{proof}

Similarly to Claim~\ref{clm:OConPiA}, we now have the following.

\begin{clm}
\label{clm:OConPiARAN}
We have $\Pi^\progB_{t+1} O_C\Pi^\progA_t=0$ and $\|\Pi^\progC_{t+1} O_C\Pi^\progA_t\|^2=1/m$.
\end{clm}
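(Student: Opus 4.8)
The plan is to mimic the proof of Claim~\ref{clm:OConPiA} almost verbatim, the only differences stemming from the tensor-product structure of the truth register and the fact that $|\tilde\marked\>$ does not depend on $R$. First I would recall that $\cH^\progA_t$ is spanned by vectors $|rest_R\>_{\regFunc_{[n]}}\otimes|R\>_\regR|x\>_\regQi|y\>_\regQo$ (ignoring the workspace), where $R\in\rec^t$, the registers $\regFunc_{x'}$ for $x'\in\inrec R$ are orthogonal to $|\marked\>$, and the registers $\regFunc_{x''}$ for $x''\in[n]\setminus\inrec R$ are orthogonal to $|\tilde\marked\>$. Applying $O_C$ to such a basis vector leaves all registers $\regFunc_{x'}$ with $x'\ne x$ untouched, appends $x$ to the record (so $\inrec{(R\app x)}=\inrec R\cup\{x\}$), and acts on $\regFunc_x\regQo$. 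The key observation, just as in Claim~\ref{clm:OConPiA}, is that the content of $\regQi$ is $|x\>$, so the newly added record symbol is exactly $x$, which is now in $\inrec{(R\app x)}$; hence whatever the resulting state of $\regFunc_x$ is, the function-record pair lands either in $\cH^\progC_{t+1}$ (when $\regFunc_x$ has a nonzero $|\marked\>$ component) or in $\cH^\progA_{t+1}$ (when $\regFunc_x\perp|\tilde\marked\>$), and never in $\cH^\progB_{t+1}$.

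For the first equality $\Pi^\progB_{t+1}O_C\Pi^\progA_t=0$, I would decompose the state of $\regFunc_x$ coming in, which is orthogonal to $|\tilde\marked\>$, as $\alpha|\unifR\>+|\xi\>$ with $|\xi\>\perp\{|\marked\>,|\tilde\marked\>,|\unifR\>\}$ inside the span of $\{|\marked\>,|\tilde\marked\>\}^\perp$... actually it is cleaner to note that the span of $\{|\marked\>,|\tilde\marked\>\}$ equals the span of $\{|\marked\>,|\unifR\>\}$, so being $\perp|\tilde\marked\>$ inside this two-dimensional space means being a multiple of $|\unifR\>$, while the orthogonal complement stays orthogonal to both $|\marked\>$ and $|\tilde\marked\>$ under $O_C$ (which only permutes basis states of $\regFunc_x$ among themselves). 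Writing the incoming $\regFunc_x$ state as $c\,|y_0\>$-type components and tracking where $O_C$ sends them (it maps $|y'\>_{\regFunc_x}|y\>_\regQo\mapsto|y'\>_{\regFunc_x}|y\oplus y'\>_\regQo$, i.e. it does not change $\regFunc_x$ at all—wait, in this convention $O$ writes $f(x)$ into $\regQo$, so $\regFunc_x$ is unchanged and only $\regQo$ changes), one sees that the $\regFunc_x$ register stays orthogonal to $|\tilde\marked\>$, hence the $\regFunc_x$-register is in the span of $\{|\unifR\>\}\oplus\{|\marked\>,|\tilde\marked\>\}^\perp$, which is exactly the condition putting the pair $(f,R\app x)$ into $\cH^\progC_{t+1}\oplus\cH^\progA_{t+1}$, i.e. the $\cH^\progB_{t+1}$-component vanishes.

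For the norm computation I would write $O_C\Pi^\progA_t = M'+M''$ exactly as in Claim~\ref{clm:OConPiA}, where $M'$ collects the terms whose $\regFunc_x$-state is $\perp|\marked\>$ (image in $\cH^\progA_{t+1}$) and $M''$ collects the $|\marked\>$-component of $\regFunc_x$ (image in $\cH^\progC_{t+1}$). Since $O_C$ does not alter $\regFunc_x$ at all, the $|\marked\>$-component of the incoming $\regFunc_x$-state is just $\<\marked|(\text{state }\perp|\tilde\marked\>)$; for $\regFunc_x$ a multiple of $|\unifR\>$ this overlap is $\<\marked|\unifR\>=1/\sqrt m$, and more generally the largest such overlap over unit vectors orthogonal to $|\tilde\marked\>$ equals $\|\,|\marked\>\<\marked|(I_m-|\tilde\marked\>\<\tilde\marked|)\,\| = |\<\marked|\unifR\>| = 1/\sqrt m$, using that $\{|\marked\>,|\tilde\marked\>\}$ spans the same plane as $\{|\marked\>,|\unifR\>\}$. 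The terms of $M''$ for distinct $x$ and $R$ have orthogonal images (different record, timestamp $t+1$ on the appearance of a marked element in the record), so $\|\Pi^\progC_{t+1}O_C\Pi^\progA_t\| = \|M''\| = 1/\sqrt m$, giving $\|\Pi^\progC_{t+1}O_C\Pi^\progA_t\|^2 = 1/m$. I expect the only mild subtlety, hence the ``main obstacle'', is bookkeeping the tensor factors $\regFunc_{[n]\setminus\{x\}}$ correctly and confirming that the passive/no-progress distinction on those untouched registers is preserved, which it is because $O_C$ acts as the identity on them; everything else is a direct transcription of Claim~\ref{clm:OConPiA} with $n_R$ replaced by $m$.
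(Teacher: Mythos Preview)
Your overall plan---split $O_C\Pi^\progA_t$ into the $|\marked\>$-component of $\regFunc_x$ (landing in $\cH^\progC_{t+1}$) and its orthogonal complement (landing in $\cH^\progA_{t+1}$), then compute $\|M''\|=\big\||\marked\>\<\marked|(I_m-|\tilde\marked\>\<\tilde\marked|)\big\|=1/\sqrt m$ using that $\mathrm{span}\{|\marked\>,|\tilde\marked\>\}=\mathrm{span}\{|\unifR\>,|\tilde\marked\>\}$---is exactly the paper's approach, and your norm computation is correct.

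There is, however, a real slip in your justification of $\Pi^\progB_{t+1}O_C\Pi^\progA_t=0$. You assert that because $O_C$ acts as $|y'\>_{\regFunc_x}|y\>_\regQo\mapsto|y'\>_{\regFunc_x}|y\oplus y'\>_\regQo$, the register $\regFunc_x$ ``stays orthogonal to $|\tilde\marked\>$''. This is false: $O_C$ commutes with operators that are \emph{diagonal} in the computational basis of $\regFunc_x$ (in particular with $|\marked\>\<\marked|$), but not with $|\tilde\marked\>\<\tilde\marked|$. Concretely, start with $|\unifR\>_{\regFunc_x}|y\>_\regQo$ (orthogonal to $|\tilde\marked\>$); after $O_C$ the state is $\tfrac1{\sqrt m}\sum_{y'}|y'\>|y\oplus y'\>$, and projecting $\regFunc_x$ onto $|\tilde\marked\>$ yields $\tfrac1{\sqrt m}\sum_{y'}\<\tilde\marked|y'\>\,|y\oplus y'\>$, which is nonzero since every $\<\tilde\marked|y'\>\ne 0$ and the $|y\oplus y'\>$ are distinct. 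More importantly, ``$\regFunc_x\perp|\tilde\marked\>$'' is not the relevant condition anyway: once $x$ is appended to the record, $x\in\inrec{(R\app x)}$, so the $\Pi^\progA_{t+1}$-condition on $\regFunc_x$ is $\perp|\marked\>$, not $\perp|\tilde\marked\>$.

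The actual reason the $\cH^\progB_{t+1}$ component vanishes is the one you relegate to a ``mild subtlety'' at the end: the truth subregisters $\regFunc_{x''}$ with $x''\in[n]\setminus\inrec{(R\app x)}$ are untouched by $O_C$ and therefore remain orthogonal to $|\tilde\marked\>$; this forces the factor $\big[I-\bigotimes_{x''}(I-|\tilde\marked\>\<\tilde\marked|)\big]$ in $\Pi^\progB_{t+1}$ to annihilate the post-$O_C$ state, irrespective of what happens on $\regFunc_x$. The paper's proof makes exactly this point when arguing that the image of $M'$ lies in $\cH^\progA_{t+1}$. So your sketch contains the correct ingredient, but misattributes the mechanism; swap the roles of $\regFunc_x$ and the untouched $\regFunc_{x''}$ in your argument and it goes through.
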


\begin{proof}
Using the definitions of $O_C$ and $\Pi^\progA_t$, we write
 \begin{align*}
O_C\Pi^\progA_t = \, & 
\bigg(
\sum_{x\in[n]}\sum_{y,y'\in[m]}
|y\>\<y|_{\regFunc_x}\otimes  I_{\regFunc_{[n]\setminus\{x\}}}
\otimes |x\>\<x|_\regQi
\otimes |y'\oplus y\>\<y'|_\regQo
\otimes|x\>_{\regR_{t+1}}
\bigg)
\\ & \cdot\bigg(
\sum_{R\in\rec^t} 
\bigotimes_{x'\in\inrec{R}}(I_n-|\marked\>\<\marked|)_{\regFunc_{x'}}
\otimes
\bigotimes_{x''\in[n]\setminus\inrec{R}}(I_n-|\widetilde\marked\>\<\widetilde\marked|)_{\regFunc_{x''}}
\otimes |R\>\<R|
\bigg).
\end{align*}
where we have ignored the flag qubit $|\flagC\>_{\regW^+}$ for the sake of brevity.
Let us decompose the sum above into two parts, namely, $O_C\Pi^\progA_t=M'+M''$, where 
$M'$ corresponds to all the terms with $y\ne\marked$ and 
$M''$ corresponds to all the terms with $y=\marked$ and $x\notin\inrec{R}$.
  It is easy to see that all the terms with $y=\marked$ and $x\in\inrec{R}$ are $0$.

One can see that, for states in the image of $M'$, it both holds that the registers $T_{x'}$ with $x'$ in the record are orthogonal to $|\marked\>$ and that the registers $T_{x''}$ with $x''$ outside the record are orthogonal to $|\widetilde\marked\>$. Hence, the image of $M'$ is contained in $\cH^{\progA}_{t+1}$.

Now let us consider $M''$, for which we consider only $y=\marked$. For brevity, let $U^{\oplus\marked}:=\sum_{y'\in[m]}|y'\oplus \marked\>\<y'|$. We have
 \begin{multline*}
M'' =  
\sum_{R\in\rec^t} 
\sum_{x\in [n]\setminus\inrec{R}}
\bigg(
\big(|\marked\>\<\marked | (I_n-|\widetilde\marked\>\<\widetilde\marked|) \big)_{\regFunc_x}
 \otimes \hspace{-10pt}
\bigotimes_{x''\in[n]\setminus(\inrec{R}\cup\{x\})}
\hspace{-10pt}
(I_n-|\widetilde\marked\>\<\widetilde\marked|)_{\regFunc_{x''}} 
\\ 
\otimes
\bigotimes_{x'\in\inrec{R}}(I_n-|\marked\>\<\marked|)_{\regFunc_{x'}} \bigg)
\otimes |R\app x\>\<R|
\otimes |x\>\<x|_\regQi \otimes U^{\oplus\marked}_\regQo.
\end{multline*}
Fix $R\in\rec^t$ and $x\in [n]\setminus\inrec{R}$, and consider the image of the corresponding term in the sum above. The restriction of that image to registers $\regFunc_x\regR$ is in the state $|\marked,R\app x\>$, therefore that image is contained in $\cH^{\progC}_{t+1}$. As the result, the image of $M''$ is also contained in $\cH^{\progC}_{t+1}$.

We have shown that
\[
\Pi^\progA_{t+1} O_C\Pi^\progA_t = M', \qquad
\Pi^\progB_{t+1} O_C\Pi^\progA_t = 0, \qquad
\Pi^\progC_{t+1} O_C\Pi^\progA_t = M'',
\]
and it remains to bound the norm of $M''$.
In the expression for $M''$, the terms corresponding to distinct $x$ and distinct $R$ are orthogonal, therefore one can see that
 $
\| M'' \|
= 
\big\|\<\marked|(I_m-|\tilde\marked\>\<\tilde\marked|)\big\|.
$
Recall the uniform superposition $|\unifR\>=\sum_{y\in[m]}|y\>/\sqrt{m}$. Because $|\marked\>$ is a linear combination of $|\tilde\marked\>$ and $|\unifR\>$, we have
$\<\marked|(I_m-|\tilde\marked\>\<\tilde\marked|) = \<\marked|\unifR\>\<\unifR|=\<\unifR|/\sqrt{m}$.
\end{proof}

\subsection{Conclusion of the proof}

Similarly how Claims~\ref{clm:nonAlter}--\ref{clm:OConPiA} imply Lemma~\ref{lem:PiBCOQC}, now, from Claims~\ref{clm:nonAlterRAN}--\ref{clm:OConPiARAN} we get 
\begin{alignat*}{2}
&
\|\Pi^\progB_{t+1} O_Q |\phi_t\>\|^2
& \le& \bigg(\| \Pi^{\progB,\actv}_t|\phi_t\> \|
+ \frac{\sqrt 2}{\sqrt{m-1}}
\bigg)^2
  + \|\Pi^{\progB,\pasv}_t|\phi_t\>\|^2,
\\ 
&
\|\Pi^\progB_{t+1} O_C|\phi_t\>\|^2
 &=& \|\Pi^{\progB,\pasv}_{t+1} O_C \Pi^{\progB,\pasv}_t|\phi_t\>\|^2
 \\
 &&=& \|\Pi^{\progB,\pasv}_t|\phi_t\>\|^2 - \|\Pi^\progC_{t+1} O_C \Pi^{\progB,\pasv}_t|\phi_t\>\|^2,
\\ 
&
\|\Pi^\progC_{t+1} O_Q |\phi_t\>\|^2  & = & \|\Pi^\progC_t |\phi_t\>\|^2,
\\  &
\|\Pi^\progC_{t+1} O_C|\phi_t\>\|^2  
 &\le & \|\Pi^\progC_t |\phi_t\>\|^2 
+ 3\|\Pi^{\progB,\actv}_t|\phi_t\>\|^2
+ 3\|\Pi^\progC_{t+1} O_C\Pi^{\progB,\pasv}_t|\phi_t\>\|^2
+ \frac3{m-1},
\end{alignat*}
where we have used $\sqrt{2m-3}/(m-1)< \sqrt{2/(m-1)}$ for the first inequality and $3/m<3/(m-1)$
 and $(a+b+c)^2\le 3a^2+3b^2+3c^2$ for the last.
Note that, unlike when establishing Lemma~\ref{lem:PiBCOQC}, here we do not have $\Pi^\progC_{t+1} O_C\Pi^{\progB,\pasv}_t=0$.

Following along the same lines as the proof of the final claim of Lemma~\ref{lem:progEvol}, we get 
\begin{align*}
\Psi_{t+1}   \le \, &
 \|\Pi^\progC_t |\phi_t\>\|^2 
+ 3p\|\Pi^{\progB,\actv}_t|\phi_t\>\|^2
+ 3p\|\Pi^\progC_{t+1} O_C\Pi^{\progB,\pasv}_t|\phi_t\>\|^2
+ \frac{3p}{m-1}
\\ & +
4(1-p)\Big(\| \Pi^{\progB,\actv}_t|\phi_t\> \| + \frac{\sqrt{2}}{\sqrt{m-1}}\Big)^2
  + 4\|\Pi^{\progB,\pasv}_t|\phi_t\>\|^2
  \\ &
  - 4p\|\Pi^\progC_{t+1} O_C \Pi^{\progB,\pasv}_t|\phi_t\>\|^2.
  \end{align*}
 The overall scalar of the term $\|\Pi^\progC_{t+1} O_C \Pi^{\progB,\pasv}_t|\phi_t\>\|^2$ is negative, and thus this term can be dropped.
Hence, we have
  \begin{align*}
\Psi_{t-1} - \Psi_t
\le\, &
\frac{3p}{m-1} -(4- 3p)\|\Pi^{\progB,\actv}_t|\phi_t\>\|^2
+ 4(1-p)\Big(\| \Pi^{\progB,\actv}_t|\phi_t\> \| + \frac{\sqrt{2}}{\sqrt{m-1}}\Big)^2
  \\ = \, &
\frac{32-56p+27p^2}{p(m-1)}
- p\bigg(\|\Pi_t^{\progB,\actv}|\phi_t\>\| 
- \frac{4\sqrt{2}(1-p)}{p\sqrt{m-1}}\bigg)^2
\\ \le \, &
\frac{32}{p(m-1)}.
\end{align*}

\end{document}